\newcommand{\varcomment}{0} 
\newcommand{\version}{0}
\newcommand{\VarArX}{1}
\newcommand{\condinc}[2]{\ifthenelse{\equal{\varcomment}{0}}{#1}{\green {#2}}}
\newcommand{\SLV}[2]{\ifthenelse{\equal{\version}{0}}{#1}{\RED{#2}}}
\newcommand{\ArX}[2]{\ifthenelse{\equal{\VarArX}{0}}{#1}{#2}}
\newcommand{\opair}[2]{\langle #1 : #2 \rangle}
\newcommand{\pwred}{{\xredx{\pw}{}\ }}
\newcommand{\alphabet}{\mathbb{A}}
\newcommand{\OutC}[1]{\mathtt{Out}_{#1}}
\newcommand{\liberal}{unbiased\xspace}
\newcommand{\bm}[1]{\boldsymbol{#1}}
\newcommand{\lam}{\lambda}
\newcommand{\CbN}{CbN\xspace}
\newcommand{\CbV}{CbV\xspace}
\newcommand{\subs}[2]{ \{#2 /#1\} }
\newcommand{\timplies}{\mbox{ implies }}
\newcommand{\tand}{\mbox{ and }}
\newcommand{\op}{\mathbf{o}}
\newcommand{\obs}{\mathtt {obs}}
\newcommand{\print}[1]{\mathtt{out}_{#1}}
\newcommand{\tick}{\mathtt{tick}}
\newcommand{\ticks}{\mathsmaller{\surd}}
\newcommand{\dtick}{\Delta_{\ticks}}
\newcommand{\dplus}{\Delta_\oplus}
\newcommand{\dout}[1]{\Delta_{#1}}
\newcommand{\dz}{\Delta_z}
\newcommand{\pnf}[1]{\omega(#1)}
\newcommand{\PNF}{\Nnf_{\omega}}
\newcommand{\sem}[1]{\den{#1}}
\newcommand{\cpo}[1]{\bm #1}
\newcommand{\seqr}[1]{\langle#1_n\rangle_{n\in\Nat}}
\newcommand{\w}[1]{\mathtt \obs(#1)}
\newcommand{\BT}[1]{\mathtt{BT(#1)}}
\newcommand{\nnf}{\emph{nf}\xspace}
\newcommand{\hnf} {\emph{hnf}\xspace}
\renewcommand{\AA}{A}
\newcommand{\mdist}[1]{[#1]}
\newcommand{\ik}{{i\in \bm k}}
\newcommand{\ww}{\bm {\mathsf{W}}}   
\newcommand{\cc}{\bm {\mathsf{C}}}  
\newcommand{\hh}{\bm {\mathsf{H}}}
\newcommand{\leftc}{\bm {\mathsf{L}}}  
\newcommand{\rightc}{\bm {\mathsf{R}}} 
\newcommand{\fseq}[1]{\langle #1 \rangle}
\newcommand{\seq}[1]{\langle#1_n\rangle_{n}}
\newcommand{\cbv}{{\mathtt{cbv}}}
\newcommand{\cbn}{{\mathtt{cbn}}}
\newcommand{\n}{\mathtt n}
\newcommand{\Nnfv}{\Nnf_v}
\newcommand{\Real}{\mathbb{R}}
\newcommand{\Nat}{\mathbb{N}}
\newcommand{\pr}{\pn}
\newcommand{\pn}{{pn}}
\newcommand{\obsN}{\obs_{\mathsmaller \Nnf}}
\newcommand{\hole}[1]{\llparenthesis #1\rrparenthesis}
\newcommand{\tm}{t}
\newcommand{\tmt}{t}
\newcommand{\tms}{s}
\newcommand{\tmu}{u}
\newcommand{\tmr}{r}
\newcommand{\tmp}{p}
\newcommand{\tmq}{q}
\renewcommand{\to}{\xrightarrow{}}
\newcommand{\conv}[1]{\mkern-2mu{}^\infty_{#1}~}
\newcommand{\llred}{\uset{\llsmall~}{\red}}
\newcommand{\xfull}[1]{\usetfull{#1~}{\full}}
\newcommand{\llfull}{\xfull{\llsmall~}}
\newcommand{\xRedx}[2] {\mathrel{{\uset{#1}{\Red}}{}_{\mkern-5mu#2}}}
\newcommand{\llRed}{\uset{\llsmall~}{\Red}}
\newcommand{\iRed}{\uset{\neg\llsmall~}{\Red}}
\newcommand{\llRedx}[1]  {\mathrel{\llRed{}_{\mkern-6mu{#1}}}}
\newcommand{\iRedx}[1]  {\mathrel{\iRed{}_{\mkern-6mu#1}}}
\newcommand{\xbackredx}[2] {\mathrel{{\uset{#1}{\leftarrow}}{}_{\mkern-6mu#2}}~}
\newcommand{\nllred}{\uset{\neg \llsmall~~}{\red}}
\newcommand{\llredx}[1]  {\mathrel{\llred{}_{\mkern-6mu#1}}}
\newcommand{\parto}{\parRew{}}
\newcommand{\iparto}{\nmakepar{\ired~}}
\newcommand{\stred}{{\ered}}
\newcommand{\parmark}{\circ\mkern -0mu}
\newcommand{\makepar}[1]{~\parmark \mkern-15mu #1}
\newcommand{\nmakepar}[1]{~\parmark \mkern-13mu #1}
\newcommand{\parred}{{\makepar {\red}}}
\newcommand{\nleparred}{{\nmakepar  {\nlered} }}
\newcommand{\MPLambda}{\MDST{\PLambda}}
\newcommand{\PLambda}{\Lambda_\oplus}
\newcommand{\true}{\mathtt{T}}
\newcommand{\false}{\mathtt{F}}
\newcommand{\Val}{\mathcal V}
\newcommand{\Snf}{\mathcal S}
\newcommand{\Nnf}{\mathcal N}
\newcommand{\Hnf}{\mathcal H}
\newcommand{\two}{\frac{1}{2}}
\newcommand{\four}{\frac{1}{4}}
\newcommand{\Set}{\mathbb{S}}
\newcommand{\tr}{{}^{\mkern-6mu*}}
\newcommand{\ie}{\emph{i.e.}\xspace}
\newcommand{\eg}{\emph{e.g.}\xspace}
\newcommand{\ih}{\emph{i.h.}\xspace}
\renewcommand{\iff}{\emph{iff}\xspace}
\newcommand{\RED}[1]{{\color{red}{#1}}}
\newcommand{\green}[1]{{\color{green}{#1}}}
\newcommand{\multiset}[1]{[ #1 ]}
\newcommand{\mset}{\multiset}
\newcommand{\iI}{{i \in I}}
\newcommand{\usetfull}[3][0ex]{%
	\mathrel{\mathop{#3}\limits_{
			\vbox to#1{\kern-6\ex@
				\hbox{$\scriptstyle#2$}\vss}}}}
\newcommand{\defeq}{\coloneqq}
\newcommand{\Lim}{\mathtt{Lim}}
\mathchardef\mhyphen="2D
\newcommand{\xLim}[1]{\Lim_{#1}}
\newcommand{\wLim}{\xLim{\obs}}
\newcommand{\tolim}[1]{#1 \conv{\obs}}
\newcommand{\DST}[1]{\mathcal{D}(#1)}
\newcommand{\MDST}[1]{\mathcal{M}(#1)}
\newcommand{\norm}[1]{\|#1\|}
\newcommand{\m}{\mathtt m}   
\newcommand{\ud}{\mathtt u}   
\newcommand{\td}{\mathtt t}   
\renewcommand{\r}{\mathtt r}     
\newcommand{\s}{\mathtt s}
\newcommand{\Red}{\Rightarrow} 
\newcommand{\Redbv}{\Red_{\betav} }  
\newcommand{\Redo}{\Red_{\oplus}}
\newcommand{\sRed}{\uset{\surf~}{\Red}}
\newcommand{\iRedbv}{\iRedx{\betav}}
\newcommand{\nsRed}{\uset{\neg\surf~}{\Red}}
\newcommand{\nsRedx}[1]  {\mathrel{\nsRed{}_{\mkern-6mu#1}}}
\newcommand{\nsRedbv}{\nsRedx{\betav}}
\newcommand{\full}{\rightrightarrows}
\newcommand{\den}[1]{\llbracket {#1} \rrbracket}
\newcommand{\betav}{\beta_v}
\newcommand{\betab}{\mathtt{b}}
\newcommand{\bbeta}{\betab}
\newcommand{\ex}{\mathsf {e}}
\renewcommand{\int}{\mathsf{i}}
\newcommand{\nex}{{\neg\ex}}
\newcommand{\head}{\mathsf{h}}
\newcommand{\weak}{\mathsf{w}}
\newcommand{\surf}{\mathsf{s}}
\newcommand{\lsym}{\mathsf{l}}
\newcommand{\rsym}{\mathsf{r}}
\newcommand{\lo}{\mathsmaller{\textsc{lo}}}
\newcommand{\lex}{\mathsf{le}}
\newcommand{\los}{\textsc{u}}
\newcommand{\out}{\textsc{u}}
\newcommand{\pd}{\textsc{//u}~}
\newcommand{\pw}{{\textsc{//}}\weak}
\newcommand{\llsmall}{\textsc{e}}
\newcommand{\EX}{\textsc{e}}
\newcommand{\xrevredx}[2] {\mathrel{{\uset{#1}{\leftarrow}}{}_{\mkern-3mu#2}}}
\newcommand{\revred}{\leftarrow}
\newcommand{\xredx}[2]{{\mathrel{\,\uset{#1\,}{\red}{}_{\mkern-6mu{#2} }}} }
\newcommand{\redx}[1]{\mathrel{\red{}_{\mkern-6mu#1}}}
\newcommand{\red}{\rightarrow}
\newcommand{\ered}{\uset{\ex}{\red}}
\newcommand{\nered}{\uset{\neg \ex~}{\red}}
\newcommand{\hred}{\uset{\head}{\red}}
\newcommand{\nhred}{\uset{\neg \head~}{\red}}
\newcommand{\hredx}[1]  {\mathrel{\hred{}_{\mkern-6mu#1}}}
\newcommand{\nhredx}[1]{\mathrel{\nhred{}_{\mkern-6mu#1}}}
\newcommand{\hredb}  {\mathrel{\hredx{\beta}}}
\newcommand{\nhredb}{\mathrel{\nhredx{\beta}}}
\newcommand{\wred}{{\xredx{\weak\,}{}}}
\newcommand{\nwred}{\xredx{\neg \weak\,}{}}
\newcommand{\wredx}[1]  {\xredx{\weak}{#1}}
\newcommand{\nwredx}[1]  {\xredx{{\neg\weak~}}{#1}}
\newcommand{\wredbv}{\wredx{\betav}}
\newcommand{\nwredbv}{\nwredx{\betav}}
\newcommand{\lred}{\uset{\lsym}{\red}}
\newcommand{\lredbv}{\mathrel{\xredx{\lsym}{\betav}}}
\newcommand{\sred}{\xredx{\surf}{}}
\newcommand{\nsred}{\xredx{\neg \surf}{}}
\newcommand{\sredx}[1]{\xredx{\surf}{#1}}
\newcommand{\nsredx}[1]{\xredx{\neg\surf}{#1}}
\newcommand{\sredbb}  {\mathrel{\sredx{\bbeta}}}
\newcommand{\nsredbb}{\mathrel{\nsredx{\bbeta}}}
\newcommand{\sredo}  {\mathrel{\sredx{\oplus}}}
\newcommand{\sredbv}  {\mathrel{\sredx{\betav}}}
\newcommand{\nsredbv}{\mathrel{\nsredx{\betav}}}
\newcommand{\redbv}{\mathrel{\red_{\beta_v}}}
\newcommand{\redb}  {\mathrel{\redx{\beta}}}
\newcommand{\redbb}  {\mathrel{\redx{\betab}}}
\newcommand{\loredb}  {\xredx{\lo}{\beta}}
\newcommand{\redo}{\rightarrow_{\oplus}}
\newcommand{\nlered}{\xredx{\neg\lex}{}}
\newcommand{\lsred}{\mathrel{\xredx{\los }{}}}
\newcommand{\nlsred}{\mathrel{\xredx{\neg\los~}{}}}
\newcommand{\lsredx}[1]{\xredx{\los }{#1}}
\newcommand{\lsredbb}{\lsredx{\betab}}
\newcommand{\nlsredbb}{\xredx{\neg\los~}{\betab}}
\newcommand{\lsredb}{\xredx{\los}{\beta}}
\newcommand{\nlsredb}{\xredx{\neg\los~}{\beta}}
\newcommand{\lsredbv}{\xredx{\los}{\betav}}
\newcommand{\nlsredbv}{\xredx{\neg\los~}{\betav}}
\newcommand{\uset}[3][0ex]{%
	\mathrel{\mathop{#3}\limits_{
			\vbox to#1{\kern-6\ex@
				\hbox{$\scriptstyle#2$}\vss}}}}
\theoremstyle{plain}
\newtheorem{thm}{Theorem}
\newtheorem{Def}[theorem]{Definition}
\newtheorem{prop}[theorem]{Proposition}
\newtheorem{cor}[theorem]{Corollary}
\newtheorem{fact}[theorem]{Fact}
\newtheorem*{theorem*}{Theorem}
\newtheorem*{proposition*}{Prop}
\newtheorem*{lemma*}{Lemma}
\newtheorem*{ex*}{Example}
\newtheorem*{cor*}{Cor.}
\newtheorem*{prop*}{Prop}
\newtheorem*{Def*}{Def}
\newcommand{\Ie}{\textit{i.e.}\xspace}
\renewcommand{\paragraph}{\subparagraph}
\Crefname{section}{Sect.}{Sections}
\Crefname{theorem}{Thm.}{Thm.}
\Crefname{thm}{Thm.}{Thm.}
\Crefname{proposition}{Prop.}{Prop.}
\Crefname{prop}{Prop.}{Prop.}
\Crefname{Def}{Def.}{Def.}
\Crefname{figure}{Fig.}{Figs.}
\title{Strategies for Asymptotic Normalization}	
\author{Claudia Faggian}{ IRIF, CNRS---Universit\'e de Paris Cit\'e, F-75013 Paris, France}{}{}{Partly funded by
	the ANR project PPS, ANR-19-CE48-0014.}
\author{Giulio Guerrieri}{Huawei Research, Edinburgh Research Centre,
	Edinburgh, United Kingdom}{giulio.guerrieri@huawei.com}{https://orcid.org/0000-0002-0469-4279}{}
\authorrunning{C. Faggian and G. Guerrieri} 
\keywords{rewriting, 
	strategies, 
	normalization, 
	lambda calculus, probabilistic rewriting
	} 
\begin{document}

\maketitle

\begin{abstract}
	We present an abstract technique to study normalizing strategies when termination is asymptotic, that is, it appears as a limit. Asymptotic termination occurs in several  settings, such as effectful, and in particular probabilistic computation---where the limits are distributions over the possible outputs---or  infinitary lambda-calculi---where the limits are infinitary terms such as B\"ohm trees.

	As a concrete application, we obtain a result which is of independent interest: a normalization theorem for Call-by-Value (and---in a uniform way---for Call-by-Name) probabilistic lambda-calculus.
\end{abstract}


\renewcommand{\a}{\alpha}
\renewcommand{\c}{\gamma}
\newcommand{\ka}{|\alpha|}
\newcommand{\kc}{|\gamma|}

\renewcommand{\iparto}{\nleparred}
\renewcommand{\parto}{\parred}

\newcommand{\eq}{\, = \,}
\newcommand{\eqdef}{:=}
\newcommand{\deff}{\eqdef}

\newcommand{\Root}[1]{\mapsto_{#1}}
\newcommand{\opp}[1]{\op(#1)}

\newcommand{\lv}{\lambda_v}
\newcommand{\pair}[2]{\langle #1,#2 \rangle}

\newcommand{\id}{I}
\newcommand{\snf}{\textit{snf}\xspace}
\newcommand{\brang}[1]{\langle  #1\rangle}
\renewcommand{\brack}[1]{\big(  #1\big)}

\newcommand{\LambdaOp}{\Lambda_{\OpSet}}
\newcommand{\OpSet}{\mathcal {O}}
\newcommand{\Rules}{\mathtt{Rules}}
\newcommand{\Rule}{\rho}

\newcommand{\NDS}{\textsf{NDE}\xspace}
\newcommand{\NDE}{\textsf{NDE}\xspace}

\newcommand{\AC}[3]{{#1}\mhyphen\mathtt{C}(#2,#3)}

\newcommand{\bop}{\betab\op}
\newcommand{\bOp}{\betab\OpSet}
\newcommand{\bvo}{\betav\oplus}

\newcommand{\ddelta}[1]{\Delta_{#1}}
\newcommand{\LambdaOut}{\Lambda_{\texttt{out}}}
\newcommand{\zero}{\mathbf{0}}
\renewcommand{\AA}{\mathcal A}
\newcommand{\BB}{\mathcal B}

\section{Introduction}

Probabilistic computation is an example of computational  paradigm where  the notion of termination is \emph{asymptotic}, that is, it appears as a \emph{limit}, 
as opposed to reaching a normal form  in a \emph{finite} number of steps.  Streams,  infinitary $ \lam$-calculus,  algebraic rewriting systems, effectful computation,  are other examples: the notion of asymptotic computation is pervasive.
 %
 Here, we  investigate asymptotic normalization, and propose  
a  technique  to prove  that
 a strategy  
 is guaranteed to produce   a maximal or---ideally---the best  possible result.  Our  technique is abstract (in the sense of Abstract Rewriting Systems)
   and so of general application.

Rewriting is a foundation for the operational theory of formal calculi and programming languages---$\lam$-calculus being  the paradigmatic example where rewriting is an abstract form of program execution. 
 Even if a  programming language is usually defined by a specific 
\emph{evaluation strategy},  to have a  general  rewriting theory
  allows  for \emph{program transformations, optimizations, parallel/distributed implementations}, and provides  a base on which to reason about program equivalence.
The $\lam$-calculus has a rich    theory that studies the properties of  reductions. 
Asymptotic computation   is much less understood from a  rewriting point of view, with the notable exception of  infinitary $\lam$-calculus, whose rewriting theory, pioneered in  {\cite{KennawayKSV95,BerarducciIntrigila96,KennawayKloopSleepDeVries97}},  
has  been extensively studied.

The process of rewriting 
describes the  \emph{computation of a result}.
Normal forms, head normal forms, values, may or must termination, are all possible notions of result. 
For concreteness, let us  focus on normal forms.
 Operationally, key questions about a system  are the existence and uniqueness of normal forms, but also \emph{how} the result is computed. 
In a \emph{finitary setting} we would ask:
may a computation 
	produce a result (\emph{Existence} of normal forms)?
	 If so, is the result unique? Do different computations on the same input lead to the same result (\emph{Uniqueness} of normal forms)?
	 How to compute a result? Is there a reduction  strategy that is guaranteed to output  a result, if any exists (\emph{Normalizing strategy})?
In the\emph{ asymptotic case}, 
such questions are still relevant, but  need to be opportunely formulated. To answer, we then need suitable tools and techniques, because those for finitary computation do not necessarily transfer (the key game-changer being that  asymptotic termination does not provide a well-founded order, 
see \cite{pars} for examples in a probabilistic setting). 



\paragraph{ Abstract Asymptotic Rewriting.}

%

Our approach is to study asymptotic reduction strategies and properties of limits in an \emph{abstract} way (independent of the specific syntax of a calculus) as the theory of Abstract Rewrite Systems (ARS) does for finitary computation, so to isolate proof-techniques which are of general application.
For example, in infinitary lambda calculus, the limit is usually a (possibly infinite) limit term, while in probabilistic lambda calculus, the limit is a distribution over (finite) terms. The former is concerned with the depth of the redexes, the latter with the probability of reaching a result. The  abstract  notions of limit and normalization subsumes both, and so abstract results apply to either setting. 
A further,  conceptual advantage of an abstract approach, is to  display  the essence of the arguments, an to neatly discriminate between 
 those properties which  rely on  specific structure of a  concrete setting, and  
those which belong to any asymptotic notion of computation.

Specifically, we work in the setting of  \emph{Quantitative Abstract Rewrite System} (QARS) \cite{parsLMCS}, a framework  to study asymptotic rewriting \emph{abstractly}  which  refines Ariola and Blom's ARSI \cite{AriolaBlom02}.

\paragraph{From normal forms to limits.}

Intuitively, a possibly infinite reduction  sequence $\seq \tmt$  from $\tm =\tm_0$ expresses a computation whose 
\emph{result}  is   the maximal amount of information produced by that  sequence. This is formalized as a limit.
When the reduction is deterministic, it is standard to interpret such a limit as the meaning $\sem \tmt$ of $\tm$.
If however  $\tmt$ has \emph{several possible reduction  sequences}, each  can produce a different outcome (a different limit). It is then  natural  to define  the meaning $\sem t$ of a term  $t$ as the  \emph{greatest} element in the set of limits, if any.\footnotemark
%
\footnotetext{One could also  define $\sem{t}$ as the lub of the set of limits, but  this    opens the question if   there is a 
		strategy that asymptotically computes  $\sem{t}$, \emph{internally} to the calculus. Since our focus is developing an \emph{operational} theory,  we require that $\sem{t}$ is itself a limit -- it is a result that can be (asymptotically) computed.}
%
Intuitively, this means that the notion of ``greatest amount of information produced by any  reduction sequence'' is well defined.
To adopt such a 
	 notion demands care---for example, in the case of probabilistic and effectful computation, non-deterministic evaluation brings out issues  which do not appear in  pure lambda-calculus, not even when   infinitary.

 Given a term $t$  and a general reduction,  the notion of result  $\sem t$ is not necessarily  defined: the set of  limits for $\tm$ may  contain  different maximal elements, or it may  not even have any   maximal element
 {(think of $\Nat$ or $[0,1)$, which have no maximum)}.
Maximal limits play a role similar to normal forms, and the  following  questions 
 are then  natural. 
\begin{enumerate}
	\item Is there a strategy that  produces a maximal amount of information (a maximal limit)? 
	\item Given a term $t$, is $\den t$---the result of computing $\tm$---well defined? 
	
\end{enumerate}
In  \Cref{sec:strategies}  we provide tools to answer these questions, 
in this order, as we discuss next.

\paragraph{On the workflow (and the limits of confluence).}

	The $\lambda$-calculus has two fundamental syntactical results: \emph{confluence}, which  implies \textbf{uniqueness of normal forms}, 
	and the \emph{standardization theorem,} which implies \textbf{normalization}, namely that a normal form can be reached by a computable strategy, which is  a standard reduction   (typically,  left-to-right). Uniqueness guarantees that the notion of  result is well defined, normalization provides a method
	to actually compute it.

A  common workflow when studying  $\lam$-calculi is to first prove uniqueness of normal forms (via confluence), then normalization (via standardization). 
However,  
in an  asymptotic setting 
\emph{confluence does not directly  imply} that the set of limits has a greatest element, but only that it has a least upper bound.
 So, even if  confluence is established, 
 one still needs to prove that the lub is itself a limit, 
  which may be   a non-trivial task. For example, in the probabilistic $\lam$-calculus \cite{pars, FaggianRonchi, parsLMCS}, such a proof  relies on  (technical)   properties of probability distributions.

In this paper, we \emph{reverse the workflow}, and   focus on  normalization. 
In the finitary setting, if a rewriting relation $\red$ has a  strategy ${\ered} \subset {\red}$ which satisfies a suitable completeness hypothesis and    uniqueness of normal forms,  
 so does $\red$  (see \cite{Vrijer}). 
With opportune definitions, this   lifts well to the asymptotic setting. 
Forgoing confluence  and  focusing on normalization yields  an efficient and uniform  method which is easy to apply and which  provides  simultaneously 
(1.) existence and uniqueness of maximal limits, and (2.) a strategy   to compute it.

\paragraph{Content and contributions.}
We start by illustrating  asymptotic computation with examples (\Cref{sec:examples}).  
Instances of asymptotic computation are quite diverse, and the   syntax of each system  may be rather complex.
To study  rewriting \emph{abstractly},    in the spirit of of Abstract Rewriting Systems (ARS), makes possible to analyze   asymptotic  properties in a way  
independent of specific syntax, and to develop \emph{general} proof techniques.
In \Cref{sec:QARS}  we 
present  the setting of Quantitative Abstract Rewriting Systems (QARS) \cite{parsLMCS}, which are ARS enriched  with a notion of observation. QARS are  a natural refinement of ARSI \cite{AriolaBlom02}.

\ArX{}{\footnotetext{The difference between ARSI to QARS is moving from partial orders to $\omega$-cpos as codomain of the observation function.
QARS allows us to deal uniformly with both finitary and asymptotic rewriting.}
}

Our  first  original  contribution, 
and the heart   of this paper,  is \Cref{sec:strategies}, which proposes  a   proof technique  to  study asymptotic  reduction strategies, and properties of   the limits.   
We first introduce \emph{asymptotic normalization}, which gives at the same time a tool to establish  the  existence of maximal limits---or of a greatest one---and a way to compute it. 
{It formalizes the intuition that  a  normalizing strategy  gradually computes (in a finite or infinite number of steps)  the/a maximal amount of information that an element $\tm$ can produce.} 
We then show  (\Cref{sec:tools}) that asymptotic normalization can be  established by proving 
 that a strategy is asymptotically complete and has a unique limit. 
Remarkably, such \emph{infinitary properties} reduce to  a  finitary one, \emph{factorization} (a simple form of \emph{standardization}) and to  some \emph{local, elementary  tests}, yielding a practical and versatile  proof-technique. 

We then  apply  our method to some representative case studies based on $\lam$-calculus. In order to do so, we first  
 revisit normalization for $\lam$-calculus---uniformly for  CbV and CbN---so as to have  \emph{a (novel) normalizing strategy} which is well-suited to  asymptotic normalization, and to deal with    probabilistic $\lam$-calculi (\Cref{sec:normalization}).
The application of  our method  to  \emph{probabilistic $\lam$-calculus} yields a result of independent interest,
 which was left as open question in \cite{FaggianRonchi} (Remark 27  there),  namely a theorem of asymptotic normalization for Call-by-Value probabilistic $\lambda$-calculus. We develop the CbV case explicitly  in \Cref{sec:PCbV}---the same results hold \emph{in a uniform way} for Call-by-Name. 
The same  technique applies to other monadic calculi such as calculi with output (as  we sketch in \Cref{sec:output}), but also  to the asymptotic computation of B{{\"{o}}hm Trees, which can be obtained as the limit of a normalizing strategy (we leave  this case  to Appendix \ref{sec:BT}). 


\subsection{Three  examples of Asymptotic Computation}
\label{sec:examples}
We  illustrate  three diverse examples of asymptotic computation, where 
  the result of the computation is the limit of an \emph{infinitary} process.  All three examples are built on $\lam$-calculus.

\paragraph{Probabilistic computation.}
A probabilistic  program $P$ is a stochastic model generating a distribution  over all possible outputs of $P$. 
%
Even  if  the   termination probability is $1$ (\emph{almost sure termination}), that degree of certitude 
is typically  not reached in a finite number of steps, but  \emph{as a limit}.  A standard example is a term $M$ that reduces to either a  normal form or $M$ itself, with equal probability $1/2$. After $n$ steps, $M$ is in   normal form with probability 
$\two + \frac{1}{2^2} + \dots + \frac{1}{2^n}$. 
\emph{Only at the limit} this computation terminates with probability $1$.
%
A direct way to model higher-order probabilistic computation is to endow   the  untyped $\lambda$-calculus 
 with a binary operator $\oplus$ which models fair, binary probabilistic choice: 
$M_1\oplus M_2$  reduces to either $M_1$ or $M_2$ with equal probability $1/2$; we write this as $M_1\oplus M_2 \red \mset{\two M_1, \two M_2}$. 
Intuitively,  \emph{the result} of evaluating a probabilistic term is a \emph{distribution} on its possible outputs.

\begin{example}\label{ex:proba}
	Let $\dplus=\lam x.\id \oplus (xx) $, where $\id=\lam x.x$. The term  $M:=\dplus\dplus$ has  the behavior we have described above, and evaluates to $\id$ with probability $1$ only at the limit.
\end{example}

\newcommand{\str}{\texttt s}
\paragraph{Computations with output.}
\label{sec:ex_output}
Consider a program that can print an output. 
Following \cite{GavazzoPhD}, we can represent this with a pair ${\str:M}$, where $\str$ is a string over an alphabet $\alphabet$, and $M$ is a term of  the $\lambda$-calculus extended with a set of  operators  $\print{}=\{\print c \mid c\in \alphabet\}$.
The term   $\print c (P)$ outputs $c$ , adding
it to the string, and continues as $P$.
That is,
	$\opair \str {\print c (P)} \red  \opair {c.\str} P$.
%
\begin{example}\label{ex:output} Let $\alphabet =\{0,1\}$, and  
  $\dout 0:= \lam x.\print 0(xx)$. The computation from $\opair \epsilon {\dout 0\dout 0}$ (with $\epsilon$ the empty string) produces a stream: a  string of $0$'s whose length tends to infinity. 
\end{example}
\paragraph{Infinite Normal Forms.}\label{sec:ex_infinitary}
Infinitary $\lam$-calculi 
\cite{KennawayKSV95,BerarducciIntrigila96,KennawayKloopSleepDeVries97} model infinite structures in $\lambda$-calculi. 
Terms and reduction sequences need not be finite. An infinite reduction sequence is 
 \emph{strongly convergent} if 
 the depth of the contracted redex tends to infinity.
Based on different depth measures, in \cite{KennawayKloopSleepDeVries97}  eight
different infinitary $\lambda$-calculi are developed.
If the calculus is \emph{confluent}, the 
 \emph{infinite normal form of a term $N$ is unique}, and it is  the  \emph{meaning of $N$}.
Infinite normal forms are well-known   in $\lam$-calculus in the form of B{{\"{o}}hm tres \cite{Barendregt} or  L{\'{e}}vy-Longo  trees \cite{Levy78}.


\begin{example}\label{ex:infinitary}
	Let $\dz:=\lam x. z(xx)$. In the  (infinite)  reduction sequence $\dz\dz\redb  z(\dz\dz) \redb z(z(\dz\dz)) \allowbreak\redb z(z(z(\dz\dz))) \dots$, the depth of the redex $\dz\dz$ tends to infinity.	
	It is  intuitively clear that $\dz\dz$ has an  \emph{infinite normal form} $z(z(z\dots))$.
\end{example}

\paragraph{Notation.}
From now on,  we use  
the following standard notations:  $\id=\lam x.x,  \quad \Delta= \lam x.xx$,
together with: $\dplus= \lam x. I \oplus (xx), \quad   \dout c:= \lam x.\print c(xx),  \quad \dz:=\lam x. z(xx) $.
%

 \subsection{Motivations, and necessity, for non-deterministic evaluation.}
\label{sec:NDE}
  
In this paper we are concerned with evaluation towards a limit. 
We allow the evaluation $\ered$ (the normalizing strategy)
to be \emph{non-deterministic}. 
Let us discuss the motivations.

A programming language which   is built on a $\lam$-calculus implements a specific evaluation strategy $\ered$ of the general reduction $ \red$. The evaluation strategy $\ered$ may or may not be deterministic, as long as \emph{all choices eventually yield the same result}. 
%
Non-deterministic evaluation (written \NDS) is a useful feature, which for example allows for parallel implementations, 
 but in some cases  is also a \emph{necessity}
 and a key  \emph{reasoning tool}, as we discuss.

\textbf{1.} \emph{\NDS subsumes different   evaluation policies}. A good  illustration   of this is in 
 Plotkin's Call-by-Value $\lam$-calculus, whose general reduction is $\redbv$.   \emph{Weak} evaluation   
(which does not reduce in the body of a function)  evaluates closed terms to  values.
There are  three main weak schemes (see \Cref{sec:weak}): reducing left-to-right, as defined by Plotkin  
\cite{PlotkinCbV}, right-to-left, as in  Leroy's ZINC abstract machine \cite{Leroy-ZINC}, 
or in an arbitrary order. 
While left and right reduction are  deterministic, 
weak reduction  in arbitrary order is non-deterministic and \emph{subsumes} both. 
%

\textbf{2.}  \NDS \emph{supports  parallel/distributed implementation}.
 {Non-deterministic evaluation} does not define an 
 abstract machine,
 but is 
 includes \emph{all possible parallel implementations}. 

\textbf{3.} \emph{\NDE allows for  breadth-first scheduling.}  Left-to-right evaluation is inherently depth-first. \NDS allows for  breadth-first evaluation (favoring redexes  at minimal  depth),
which is a necessity when the reduction graph is infinitary.
An example comes from  CbN $\lam$-calculus. Thinking of \Cref{ex:infinitary}, the terms
$  z(\Delta\Delta) (Iz) \tand z(\Delta\Delta) (\ddelta{z}\ddelta{z}) $ do deliver more information than $z(\Delta\Delta)(\Delta\Delta)$. 
Their respective infinite normal forms are   very different ($z \Omega z    \tand  z \Omega (zz\dots)$, respectively).
Still, for both left evaluation gets stuck at the leftmost  redex, $\Delta\Delta$.
Similar  phenomena appear with effectful computation: in  CbV $\lambda$-calculus with output of \Cref{ex:output}, the terms $ (\Delta\Delta)(\print 0 (I)) \tand (\Delta\Delta)(\ddelta{0}\ddelta{0}) $ 
{behave similarly} to the previous terms.
A breadth-first  approach  allows one to compute the ``best'' (in some sense) possible result  across all settings, uniformly.

\textbf{4. } \emph{\NDS facilitates reasoning and  proofs.}
 This point  
 is highly relevant when dealing with complex calculi, such as a  probabilistic $\lam$-calculus.
 Two examples from the literature are \cite{FaggianRonchi} and \cite{CurziP20}---in both cases moving from the usual deterministic 
  head reduction to its non-deterministic variant (given in \Cref{sec:head}) is crucial to the results. \SLV{}{We recall some of the issues in \Cref{app:proba}.}

 \SLV{}{
 \paragraph{Confluence is  not enough.}  
 Key to non-deterministic evaluation  is that despite the fact that there are many ways of evaluating
 a term,  \emph{all choices eventually yield the same result}. 
To this aim, confluence is not enough. 
 The reduction of a term that has a normal form may still produce  diverging computations, which yield  \emph{no result } (think of $\beta$-reduction on the term $(\lam x.z)(\Delta\Delta)$).
 What we really want for a non-deterministic evaluation 
 is that all its reduction sequences from the same   $\tm$ have the same behavior: if  $\tm$ has a  normal form, then 
 \emph{all} reduction sequences from $\tm$ eventually reach it (uniform normalization); 
 ideally, 
 in the same number of steps. This latter  property  is known as 
  Random Descent \cite{Newman, Oostrom07,OostromT16}, and it is often  guaranteed in the literature of $\lambda$-calculus via a diamond-like property (see \Cref{sec:rewriting}).  These notions lift to the asymptotic setting.
 
}


\section{Quantitative Abstract Rewriting Systems}
\label{sect:qars}\label{sec:QARS}
In this section we 
present   Quantitative Abstract Rewriting Systems (QARS) \cite{parsLMCS, AriolaBlom02}.
QARS are Abstract Rewriting Systems (ARS) enriched  with a notion of observation, where we can  formalize   both finitary and asymptotic rewriting.    
We  first recall  some standard notions of rewriting (see  \cite{Terese03} or \cite{Book_AndAllThat}), in particular that of ARS and of \emph{normalizing strategy}.

\subsection{Basics in (Finitary) Rewriting}\label{sec:rewriting}

An \emph{abstract rewriting system (ARS)} is a pair $(\AA, \red)$ 
consisting of a set $\AA$ and a binary relation $\red$ on $\AA$ whose pairs are written  $t \to s$ and 
called \emph{steps}. 
We denote   $\red^*$ (resp. $ \red^= $, {$\red^+$}) the  transitive-reflexive  (resp. reflexive, {transitive}) closure of $\red$.
{We write $t \revred u$ if $u \red t$.}
If $\red_1,\red_2$ are binary relations on $\AA$ then 
	$\red_1\cdot\red_2$ denotes their composition (\ie  $\tm \red_1\cdot\red_2 \tms$ if there exists  $\tmu\in \AA$ such that $\tm \red_1  \tmu \red_2 \tms$).
The relation $\red$  is \emph{confluent} if  
${\revred}^{*}\cdot {\red}^{*}
{~\subseteq~}{\red} ^{*}\cdot\, {\revred}^{*}$.
An element $u \in \AA$ is $\red$-\textbf{normal}, or  a $\red$-\emph{normal form} (\nnf) if there is no $t$ such that 
$u\red t$ (we also write $u\not\red$).

A  \textbf{$\red$-sequence} (or \textbf{reduction sequence}) from $\tm$ is a possibly infinite sequence $\tm = \tm_0, \ \tm_1, \ \tm_2, \dots$ such that $\tm_i\red \tm_{i+1}$. Notice that $\tm\red^*\tms$ holds exactly when there  is a \emph{finite} sequence from $\tm$ to $\tms$---we often write $\tm\red^*\tms$ to indicate a finite $\red$-sequence.
A  $\red$-sequence from $\tm$ is \emph{maximal} if it is either infinite or ends in a $\red$-\nnf.
We write $\seq \tm$ to  indicate a maximal $\red$-sequence from $\tm_0$; by convention, if $\tm_i  =\tmu\not\red$ then $\tm_k=\tmu$ for all $k\geq i$.

\paragraph{Normalization.}\label{sec:RD}
In general,   $\tm\in \AA$ may or may not reduce to a normal form. 
And if it does, not all reduction sequences necessarily lead to normal form. 
$(\AA,\red)$ is strongly (weakly, uniformly) normalizing if every $\tm\in \AA$ is, where the    normalization notions are as follows.
\begin{itemize}
	\item $\tm$ is  \emph {strongly $\red$-normalizing}: every maximal $\red$-sequence from $\tm$ ends in a normal form; 
	\item 	 $\tm$ is   \emph{weakly $\red$-normalizing}: there is a $\red$-sequence from $\tm$ which ends in a normal form;
		\item $\tm$ is  \emph{uniformly $\red$-normalizing}:  $\tm$  weakly $\red$-normalizing implies $\tm$   strongly $\red$-normalizing.
\end{itemize}
Untyped $\lambda$-calculus is not strongly normalizing. 
How do we compute a normal form, or  \emph{test} if any exists? This problem is tackled by \emph{normalizing strategies}.
By repeatedly performing \emph{only specific  steps $\ered$}, we are guaranteed that  a normal form, if any, will eventually be computed.

A reduction $\ered$ is a  \emph{one-step} (resp.~\emph{multi-step}) \emph{strategy for $\red$}  if   ${\ered}\subseteq {\red}$  (resp. ${\ered}\subseteq {\red^+}$), and  it  has the same normal forms as $\red$. It is a \textbf{normalizing strategy} for $\red$ if, moreover,  whenever  $t$ has a  $\red$-normal form, then
\emph{every} maximal $\ered$-sequence from $t$ ends in a $\red$-normal form.
%
%
%
Note that 
$\red$ may not have the property of unique normal forms.
\begin{remark}\label{rem:weakCBN}
	A familiar  example of  calculus where terms may not have a unique normal form is  Call-by-Name Weak $\lam$-calculus (weak means no reduction under $\lam$), studied by Abramsky and Ong \cite{AbramskyOng93}. The term $M=(\lam xy.x)(II)$ has two distinct normal forms, $N_1=\lam y.II$ and $N_2=\lam y.I$. Weak head reduction is a normalizing strategy for it. 
	However the strategy is not complete, in the sense that  it produces the normal form $N_1$, but it cannot reach $N_2$.
\end{remark}

A normalizing strategy $\ered$   need not   be deterministic (a reduction $\red$ is deterministic if for all $t\in \AA$ there is at most one $s\in \AA$ such that $t\red s$). 
However, 
$\ered$ is required to be \emph{uniformly normalizing} on all elements $\tm$, \ie, all reduction sequences have the same behavior.

A  property of  $\ered$  which guarantees uniform normalization is Newman's \emph{Random Descent (RD)} \cite{Newman}:  for each $\tm\in \AA$, all maximal sequences from $\tm$ have the same length and---if it is finite---they all end in the same element.
The following  property suffices to establish it.
\begin{fact}[Newman]\label{fact:diamond} 	
	If reduction $\ered$  
	is RD-diamond, then it  has Random Descent, where
	
	\emph{RD-diamond}:\quad	$ 		( \tm_1 \ \xbackredx{\ex}{} \ \tm \ \ered  \  \tm_2) \mbox{ implies } (\tm_1=\tm_2 \mbox{ or }  \exists \tmu. ~\tm_1 \ \ered  \   \tmu  \  \xbackredx{\ex}{}  \ \tm_2) $.
	
	
\end{fact}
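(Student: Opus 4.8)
The plan is to show that the RD-diamond property propagates from single steps to finite reduction sequences, and then to use this to conclude that any two maximal $\ered$-sequences from the same element $\tm$ have the same length, and end in the same element whenever that length is finite. First I would set up the key lemma: if $\tm_1 \ \xbackredx{\ex}{}\ \tm \ \ered^* \ \tms$ (one step back, then a finite sequence forward of length $n$), then either $\tms = \tm_1$ and $n = 1$ — the sequence passed through $\tm_1$ — or there exists $\tmu$ with $\tm_1 \ \ered^* \ \tmu$ in exactly $n-1$ steps and $\tms \ \xbackredx{\ex}{}\ \tmu$. This is proved by induction on $n$: the base case $n=0$ is immediate (and $n=1$ uses the RD-diamond hypothesis directly), and the inductive step applies RD-diamond to the first step $\tm \ \ered \ \tm'$ of the forward sequence, then the induction hypothesis to the remaining $n-1$ steps starting from $\tm'$, carefully tracking the two cases (the "collapse" case where the diamond degenerates to equality, and the genuine case where a new peak is created one step shorter).

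Next I would upgrade this to the statement about two finite sequences: if $\tms_1 \ \xbackredx{\ex}{}^* \ \tm \ \ered^* \ \tms_2$ with the left sequence of length $m$ and the right of length $n$, both ending in $\ered$-normal forms, then $m = n$ and $\tms_1 = \tms_2$. This follows by induction on $m$ using the one-step lemma to push the leftmost backward step through the forward sequence; each application either identifies the two endpoints directly (when some intermediate element of the forward sequence coincides with an element of the backward sequence, and since endpoints are normal this forces equality) or reduces to a strictly smaller instance. The case analysis here is the crux: one has to argue that the "collapse" case of the diamond cannot make the lengths differ, because when $\tm_1 = \tm_2$ in the diamond both sides still account for exactly the same number of steps consumed.

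Finally I would handle maximal (possibly infinite) sequences. For two maximal $\ered$-sequences $\langle \tm_n\rangle_n$ and $\langle \tms_n\rangle_n$ from $\tm_0 = \tms_0 = \tm$: if one is finite, say of length $m$ ending in a normal form, then by the previous paragraph applied to every finite prefix of the other, the other cannot exceed length $m$ (it would have to pass a normal form and stop) and cannot be shorter, so it also has length $m$ and the same endpoint; if both are infinite there is nothing more to prove for Random Descent as stated (same length $= \infty$). The main obstacle I expect is the bookkeeping in the one-step propagation lemma — specifically, being precise about the degenerate case of the diamond where $\tm_1 = \tm_2$, and ensuring the induction is genuinely decreasing (the forward sequence shrinks by one step at each level) so that the argument terminates and the length counts match exactly rather than merely up to the reflexive closure. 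Once that lemma is stated with the right "exactly $n-1$ steps" quantitative bound, the rest is a routine diagram chase.
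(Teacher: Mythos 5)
The paper states this Fact without proof---it is quoted from Newman and van Oostrom \cite{Newman,Oostrom07,OostromT16}---so there is no in-paper argument to compare against; I am assessing your plan on its own terms. Your architecture is the standard and correct one: a strip lemma pushing one backward step across a finite forward sequence, then equality of lengths and endpoints for two co-initial sequences ending in normal forms, then the lift to maximal, possibly infinite, sequences. The problem is that the strip lemma as you state it is false, and the induction you describe cannot establish it. Take the deterministic system $a \ered b \ered c$ (RD-diamond holds vacuously), with $\tm = a$, $\tm_1 = b$, and forward sequence $a \ered b \ered c$ of length $n=2$, so $\tms = c$. Your first disjunct fails ($\tms \neq \tm_1$ and $n\neq 1$), and so does your second: $\tm_1 \ered^{n-1}\tmu$ forces $\tmu = c$, and $c$ is $\ered$-normal, so no step relates $\tms$ and $\tmu$ in either direction you might intend for $\tms \ \xbackredx{\ex}{}\ \tmu$. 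What actually happens is that the backward step is \emph{absorbed} and $\tm_1$ reaches $\tms$ in $n-1$ steps. The correct dichotomy is: \emph{either} $\tm_1 \ered^{n-1} \tms$ (absorption can occur at any tile of the strip, not only when $n=1$, and it does not force $\tms=\tm_1$), \emph{or} there exists $\tmu$ with $\tm_1 \ered^{n} \tmu$ (note: $n$ steps, not $n-1$) and $\tms \ered \tmu$. With that statement your induction closes: the degenerate case of RD-diamond, and the first disjunct returned by the induction hypothesis, both land in the first disjunct; the genuine diamond case stacks one more square onto the second.

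Once the lemma is repaired, the rest of your plan is sound, with two small caveats. In your second stage you only ever need the lemma when $\tms$ is $\ered$-normal, where the second disjunct is vacuous and the conclusion is exactly $\tm_1 \ered^{n-1}\tms$; that is what makes the two lengths agree and the endpoints coincide. In your third stage, a finite prefix of the second maximal sequence need not end in a normal form, so you cannot literally apply the two-normal-forms statement to every prefix; instead iterate the repaired strip lemma along that sequence to show that its $k$-th element still reaches the normal form $\tms$ in $m-k$ steps, whence the sequence can neither stop before step $m$ nor continue past it.
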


\subsection{QARS} \label{sec:QARSdefs}
	Ariola and Blom \cite{AriolaBlom02} have introduced the notion of Abstract Rewrite Systems
	with Information content (ARSI); a rewrite system is
	associated with a partial order that expresses the ``information content'' of the elements.
	ARSI however are tailored   to  infinite normal forms in the sense of   B{\"o}hm and Levy-Longo trees: limits are there given by the ideal completion \cite[Prop. 1.1.21]{AmadioCurien} of the partial order.
	QARS \cite{parsLMCS}  move from partial orders to \emph{$\omega$-complete partial orders} ($ \omega $-cpos)---this is enough  to capture also effectful computation, such as the probabilistic one. We illustrate the key notions with several examples, including the calculi from \Cref{sec:examples}.

Computation is  a process that produces a result by gradually increasing the amount of  available information---the standard structure to express a result in terms of partial information is that of an $\omega$-cpo. 
Recall that a partially ordered set $\Set=(\Set,\leq)$  is
an \textbf{$\omega$-complete partial order}  (\textbf{$\omega$-cpo}) if  every $\omega$-chain $\cpo s_0\leq \cpo s_1\leq \dots$ has a supremum. We assume that $\leq$ has  a \emph{least element} $\bot$. 
The elements of $\Set$ are denoted by bold letters $\cpo s,\cpo p,\cpo q$.

Let $(\AA,\red)$ be an ARS.
With each $\tmt\in \AA$ is associated  a notion of  (partial) information, called \emph{observation},  by means of a function  from $\AA$  to an $\omega$-cpo.   \Cref{def:qars} formalizes this idea. 
\begin{Def}[QARS]\label{def:qars}
	A \emph{quantitative  ARS}  (\emph{QARS})  is an ARS $(\AA,\red)$ with a function  $\obs \colon \AA \red \Set$ (where  $\Set$ is  an $\omega$-cpo) such that for all $\tm,\tms\in \AA$, if $\tmt \red \tms$ then  $\obs (\tmt)\leq \obs (\tms)$.
\end{Def}
Intuitively, the   function $\obs$ observes a specific property of interest about $\tm\in \AA$, and 
    indicates how much stable information $t$ delivers: the information content is monotonically increasing during computation.
%
Notice that $\obs$ may take numerical values, but needs not.
\newcommand{\obsx}[2]{\obs_{#1}(#2)}

\begin{example}\label{ex:nfs}\label{ex:cpo_num} 
	\begin{enumerate}
		\item\label{p:ARSnf} $\lam$-calculus: let 
		$\Set = \{0<1\}$ and $ \obs_{n}(\tm) =1$ if $\tm$ is  normal, 
	$0$ otherwise.
	
		\item  Probabilistic $\lam$-calculus:  take  $\Set=([0,1], \leq_{\Real})$, and for $\obs$ the  probability to be in normal form
		(we will formalize this in \Cref{sec:PLambda} , see  $\obs_{\pn} (\m)$ in  \Cref{fig:obs_proba}.)

			\item Infinitary $\lam$-calculus: take  $\Set = \Nat^\infty = \Nat \cup \{\infty\}$ with the usual order, and for $\obs$ the   function 
			which associates with any term $\tm$ the minimal depth $k$ of any redex in $\tm$.
	\end{enumerate}
\end{example}

\begin{example}[Non-numerical $\obs$]\label{ex:cpo_nf}
\begin{enumerate}
	\item $\lam$-calculus: take for $\Set$ the flat order on normal forms, and define  $\obsN  (\tmu) =  \tmu$ if $\tmu$ is normal, 
	$\obsN( \tmu )=  \bot $ otherwise.
	
		\item Probabilistic $\lam$-calculus:  take  for $\Set$ the $\omega$-cpo of the subdistributions on normal forms $\DST{\Nnf}$
		(we will  formalize this 
		in  \Cref{sec:CbVproba}, see \Cref{fig:obs_proba}).

	\item Infinitary $\lam$-calculus: take the  $\omega$-cpo of the partial normal forms that are associated with $\lam$-terms 
(see \cite{AmadioCurien} page 52, and \Cref{sec:BT}).
\end{enumerate}
\end{example}

\newcommand{\QQ}{\mathcal Q}

\paragraph{Limits as   Results.}\label{sec:limits} From now on,  let $\QQ=( (\AA,\red),\obs )$ be an arbitrary but fixed QARS.
%
%
By definition, given a $\red$-sequence $\seq \tmt$, its \emph{limit} $ \sup_{n} \{\w{\tmt_n} \} $ with respect to $\obs$
always exists,  because    $\Set$ is an $\omega$-cpo.
If $\red$ is deterministic---hence any $\tm$   has a unique maximal $\red$-sequence---it   is standard to interpret  the   limit as the \emph{meaning} of   $\tmt$.
In a QARS,  $\tmt$ has \emph{several possible reduction  sequences}, and so can produce several outcomes (limits). Following \cite{pars}:
\begin{Def}[$\obs$-limits] Let $\tmt \in \AA$. We write
\begin{itemize}
	\item 		$ 	t\red\conv{\obs} \bm{p} $, if  there exists   a $\red$-sequence $\seq \tm$ from  $\tmt$   whose limit $\sup_{n} \{\w{\tmt_n} \} = \cpo p$;

\item  $\wLim (t, \red) $ is the set $ \{\bm p\mid t\red\conv{\obs} \bm{p}\}$ of limits from $t$; 
 
 \item ${\den t}$ denotes the greatest element of  $\wLim (t, \red)$, if it  exists.
\end{itemize}

The notations  omit the subscript $\obs$ when the function $\obs $ is clear from the context.
\end{Def}
  Intuitively, $\den \tmt$ is well defined if different reduction sequences from $\tmt$ do not produce \emph{essentially different} results: if  $\cpo q \not =  \cpo p$ then they  both approximate  a  same result $\cpo r$
(\ie, $\cpo q,  \ \cpo p  \  \leq \ \cpo r$).

Thinking of usual rewriting, consider  $\obsN$ as  in  \Cref{ex:cpo_nf}, point 1: here   to have a greatest limit  exactly corresponds to uniqueness of normal forms.
\begin{example}\label{ex:convergence}Let us revisit \Cref{ex:cpo_num} pointwise, using the same notations.
\begin{enumerate}
	\item $\lam$-calculus: 	consider $\tm = (\lam x. z)(\Delta\Delta)$. This term has  infinite  possible    $\redb$-sequences.
	The set of  limits w.r.t. $\obs_{n}$ contains two elements:  $\xLim{\obs_{n}}(\tm, \redb)=\{0, 1\}$
	
\item Probabilistic  $\lam$-calculus: consider   the term $I\oplus \Delta\Delta$.  It  has only one reduction
sequence $\m = \mset{I\oplus \Delta\Delta}  \Red 	\mdist{
	\two I, \two \Delta\Delta}\Red	\mdist{
	\two I, \two \Delta\Delta} \Red \dots$. Here 
 $\xLim{\obs_{\pn}}(\m, \Red)=\{\two\}$.

\item Infinitary $\lam$-calculus: consider the reduction sequence in \Cref{ex:infinitary}. The depth of the redex $(\dz\dz)$ tends to $\infty$, which is the limit. 
\end{enumerate}
\end{example}

Note that maximal elements of $\wLim(t, \red)$   need not  be maximal elements of $\Set$.   For instance, in \Cref{ex:convergence}.2, the term $I\oplus (\Delta\Delta)$ converges with probability $\two$ (rather than $1$). 
As a consequence,  \emph{the  set of limits may or may not 
	have maximal elements}.
 The fact that $\wLim(t,\red)$ may have a lub but not a maximum---similarly to $\Nat$ in $\Nat^\infty$ or the real interval  $[0,1)$--- is also easy to realize. \ArX{}{We give an example in \Cref{app:QARS}, see  \Cref{ex:nomax}.}

 Even if   $\wLim(\tm, \red)$  has maximal  elements,  a greatest limit  does not necessarily exist:  different reduction   sequences may lead to different limits. The probabilistic $\lam$-calculus and the $\lam$-calculus with output provide several  natural examples. 
  Point 2 in \Cref{ex:strings} below   shows moreover that the set of limits is---in general---\emph{uncountable}.
\begin{example}[Output $\lam$-calculus]\label{ex:strings} Consider  the   calculus  sketched in \Cref{ex:output}. 
Let	 $\OutC \alphabet=(\alphabet^*\times \Lambda_{\print{}},~\wred)$, where reduction  is  \emph{CbV  and weak}, with the obvious definitions.
	Let  $\Set$ be the $\omega$-cpo of strings, and   let $\obs (\opair \str M)=\str$. 
	Clearly,  $(\OutC \alphabet, \obs)$ is  a QARS.
%
  \begin{enumerate}
	\item  Let $\m=\opair \epsilon {\print 0 (\id) \print 1 (\id)}$.  $\xLim{\obs}(\m,\wred)$ contains    two limits, $10$ and $01$, both  maximal, because  $\m \wred \opair 0 {\id \print 1 (\id)} \wred  \opair {10} {\id\id} $, but also 
	 $\m \wred \opair 1  {\print 0 (\id)  \id} \wred  \opair {01}  {\id\id} $. 
\item  Let $\m'= \opair \epsilon {M'}$  for ${M'= (\dout 0\dout 0)(\dout 1\dout 1)}$. This  produces all possible sequences on the alphabet $\{0,1\}$. So $\xLim{\obs}(\m',\wred)$ has uncountable many elements, all maximal.
\end{enumerate}
\end{example}
We are interested  in the case when a greatest limit exists. {The reason  is that if $\wLim(t,\red)$  has a sup $\bm s \in \Set$ which does not belong to $\wLim(t,\red)$, no reduction sequence converges to $\bm s$; that is, we cannot compute $\bm s$ {internally to the calculus}}.

\section{Strategies and Asymptotic Normalization}\label{sec:strategies}\label{sec:Acompleteness}
The question  of whether   the result $\sem \tmt$ of computing an element $\tmt$ is well defined is   natural. Equally natural is to wonder if there is a strategy that is guaranteed to compute  $\sem \tmt$. These two questions are at the core of this section.
%
The existence of unique normal forms is  independent of that of a normalizing strategy (see \Cref{rem:weakCBN}).
However, the computationally interesting  case is  (often) when both hold, so we will  focus on this case.

\newcommand{\bp}{\cpo \tmp}
\newcommand{\bq}{\cpo \tmq}
\newcommand{\br}{\cpo \tmr}
%

 We say that a	 reduction ${\stred}\subseteq \red$ is (asymptotically) normalizing  if 
\emph{each}  ${\stred}$-sequence  from a given $\tmt$ converges \emph{maximally}. 
We decompose this property in two properties: completeness and uniformity, which we discuss after the formal definition.
\begin{Def}[Asymptotic properties] Given a QARS $ \brack{(\AA,\red),\obs}$, a subreduction ${\stred} \subseteq {\red}$ 
	 is  \textbf{asymptotically normalizing} for $\red$ (or $\obs$-normalizing)  if it is both asymptotically complete and uniform, where 
	\begin{enumerate}
		\item $\stred$ is  \textbf{asymptotically complete} (or $\obs$-complete)  if 
		\begin{center}
		$(\forall t\in \AA)$ : 	$\tmt \tolim{\red} \bq$ implies  $\tmt \tolim{\stred} \bp$	for some $\bp$ such that $\bq \leq \bp$;
		\end{center}
	
	\item $\ered$ is   \textbf{asymptotically uniform} (or $\obs$-uniform) if 
\begin{center}
	$(\forall t\in \AA)$ :	    all elements in $\wLim(t, \stred)$ are maximal in $\wLim(t, \stred)$.
\end{center}

	\end{enumerate}
All definitions adapt to $\ered$ multistep subreduction of $\red$.
\end{Def}
 Let us discuss all components, comparing with their  ARS  analog.
\begin{itemize}
\item \emph{Completeness} guarantees that the strategy $\ered$ is as good as $\red$ in the amount of information it produces.  
\item \emph{Completeness  is not enough:} an  asymptotically complete strategy is not guaranteed to find a/the ``best'' result: in \Cref{sec:PCbV} we will study a  reduction $\llRed$ which  is  complete, but need not converge to the greatest limit (\Cref{rem:complete}). Let us first see a classical example.
\end{itemize}

\begin{example} In  the usual $\lam$-calculus (as in \Cref{ex:convergence}.1), the  term  $M=(\lam x.I)(\Delta\Delta)$  has a $\redb$-sequence which reaches $I$, and a diverging one.   The leftmost-outermost strategy always produces $I$ (it is complete \emph{and} normalizing). Notice that  $\redb$ is \emph{trivially a complete strategy}  for $ \redb$,  but it is not  normalizing, because  $M$  has a diverging $ \redb$-sequence. Indeed, $\redb$ is \emph{complete, but  not uniform}.
\end{example}

\begin{itemize}
	\item \emph{Asymptotic uniformity} expresses that   all  $\stred$-sequences from a term behave the same way. 
This corresponds to the ARS notion of \emph{uniform normalization}: the reduction sequences from a term  either all diverge, or  all terminate (not necessarily in the same normal form).

\item  \emph{Normalizing strategies.} If we consider usual ARS, and assume   $\obs$ as in   \Cref{ex:cpo_num}.\ref{p:ARSnf}, expressing whether $\tm$ is or is not normal, then 
	  a strategy for $\red$ that is $\obs$-normalizing   is exactly a normalizing strategy for $\red$ in the  usual sense.
	
%

\end{itemize}

If  $\stred \subseteq \red$ is $\obs$-\emph{complete},  then  $\wLim(t, \red)$  has  maximal elements (resp. a greatest element) if and only if 
$\wLim(t, \stred)$ does. So we can reduce 
testing  such  properties for  $\red$, to  testing  the same properties for $\ered$, which is  often  simpler to study. 
In particular, if we are able to find  a reduction   $\stred\subseteq \red$ which is complete and moreover  has a unique limit, then necessarily  $\red$  has a greatest limit. That is, we can  simultaneously answer both of our questions: whether 	$\den t$  is well defined, and if some  strategy is guaranteed to compute it.

\begin{prop}[Main, abstractly]	\label{thm:main}	
	If the following hold
	\begin{enumerate}
		\item[i.] $\stred$ is asymptotically complete for $\red$;
		\item[ii.] $\wLim(\tmt,\stred)$ contains a unique element (\ie $\wLim(\tmt,\stred)=\{\cpo p\}$, for some $\cpo p$).
	\end{enumerate}
Then: (1.)  $\sem \tmt$ is  defined, and (2.) $\tmt \tolim \stred  \sem \tm $, for each $\ered$-sequence.
\end{prop}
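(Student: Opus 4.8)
The plan is to derive both conclusions directly from the two hypotheses, essentially as easy consequences of the definitions. First I would observe that hypothesis (ii) already gives us, for free, that $\stred$ is asymptotically uniform: if $\wLim(\tmt,\stred)=\{\cpo p\}$ is a singleton, then trivially its unique element $\cpo p$ is maximal in $\wLim(\tmt,\stred)$, so $\stred$ is $\obs$-uniform at $\tmt$. Combined with hypothesis (i), this means $\stred$ is asymptotically normalizing at $\tmt$; but I will not even need the full strength of that packaging — I will argue directly with $\cpo p$.

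Next I would prove conclusion (1), that $\sem\tmt$ is defined, i.e.\ that $\wLim(\tmt,\red)$ has a greatest element. Take any $\cpo q \in \wLim(\tmt,\red)$, witnessed by some $\red$-sequence $\seq\tmt$ from $\tmt$ with limit $\cpo q$; so $\tmt\tolim{\red}\cpo q$. By asymptotic completeness (hypothesis (i)), there is $\cpo p'$ with $\tmt\tolim{\stred}\cpo p'$ and $\cpo q\leq\cpo p'$. But every $\stred$-sequence from $\tmt$ has limit in $\wLim(\tmt,\stred)=\{\cpo p\}$, so $\cpo p'=\cpo p$, whence $\cpo q\leq\cpo p$. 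Since $\cpo q$ was an arbitrary element of $\wLim(\tmt,\red)$, $\cpo p$ is an upper bound of $\wLim(\tmt,\red)$. Moreover $\cpo p$ itself belongs to $\wLim(\tmt,\red)$: indeed $\stred\subseteq\red$ (or $\stred\subseteq\red^+$ in the multistep case, which still yields a $\red$-sequence with the same limit since $\obs$ is monotone along $\red$ and the $\stred$-sequence is cofinal in it), so any $\stred$-sequence from $\tmt$ is (or induces) a $\red$-sequence from $\tmt$ with limit $\cpo p$, giving $\tmt\tolim{\red}\cpo p$. Hence $\cpo p$ is the greatest element of $\wLim(\tmt,\red)$, i.e.\ $\sem\tmt = \cpo p$ is defined.

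Finally, conclusion (2), $\tmt\tolim{\stred}\sem\tmt$ for each $\stred$-sequence, is immediate: by definition of $\wLim$, every $\stred$-sequence from $\tmt$ has its limit in $\wLim(\tmt,\stred)=\{\cpo p\}$, and we have just identified $\cpo p=\sem\tmt$; so every $\stred$-sequence from $\tmt$ converges to $\sem\tmt$. The only point requiring a little care — and the mild obstacle worth spelling out — is the multistep case: one must check that a $\stred$-sequence, when $\stred\subseteq\red^+$ rather than $\stred\subseteq\red$, still gives rise to a genuine $\red$-sequence with the same $\obs$-limit. This follows because each $\stred$-step $\tmt_i\stred\tmt_{i+1}$ unfolds to a finite $\red$-path, these paths concatenate into a (possibly infinite) $\red$-sequence passing through all the $\tmt_i$, and by monotonicity of $\obs$ along $\red$ the supremum over the full sequence equals the supremum over the subsequence $\seq\tmt$, namely $\cpo p$. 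Everything else is a direct unwinding of the definitions of $\obs$-completeness and of $\wLim$.
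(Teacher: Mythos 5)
Your proof is correct and takes essentially the same route as the paper, which treats this proposition as a direct consequence of the definitions (the argument is the one sketched informally in the paragraph preceding the statement: completeness pushes every $\red$-limit below the unique $\stred$-limit $\cpo p$, and $\stred\subseteq\red$ puts $\cpo p$ itself into $\wLim(\tmt,\red)$, making it the greatest element). Your extra care about the multistep case and about unfolding $\stred$-sequences into $\red$-sequences with the same $\obs$-limit is consistent with the paper's conventions and adds nothing controversial.
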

 Notice that condition (ii.)  means that   \emph{all} $\stred$-sequences from the term  $\tmt$ have the \emph{same} limit.
 
\begin{remark}[Asymptotically normalizing strategies]\label{rem:Anormalizing}
	If   a QARS is such that 
	$\sem \tm$ is defined for each $\tm$, then the two notions---to be an $\obs$-normalizing strategy and to satisfy the conditions in \Cref{thm:main}---coincide. Indeed, any $\obs$-normalizing strategy for $\red$, if it exists,  is forced to have a \emph{unique} limit,  {that is, $\wLim(\tmt, \stred) = \{\sem \tmt\}$. }
\end{remark}

\subsection{A proof technique for Asymptotic Normalization}\label{sec:tools}
The two conditions in \Cref{thm:main} give  a method to prove normalization. The crucial step is to prove  asymptotic completeness. Remarkably, as we show in this section, 
this  can be  reduced to prove a \emph{finitary} property (\emph{factorization}) and an elementary  one-step test (\emph{neutrality}). 

The other condition in \Cref{thm:main}, namely 
 uniqueness of limits, is trivial if the strategy is deterministic. Otherwise,  random descent (opportunely formulated \cite{parsLMCS}) is a property that guarantees it, and that can also be established via a local test, as  we recall below. While  it is only a sufficient  criterion, it often suffices to deal with non-deterministic evaluation strategies in $\lam$-calculus, and in particular it suffices to deal with strategies in probabilistic $\lam$-calculus.

%
%
%
%

\paragraph{Asymptotic Completeness via Factorization.}\label{sec:ACompl}



The following theorem assumes a partition of the $\red$-steps into two classes: essential steps $\ered$ and internal steps $\nered$.
Point (i) states that every sequence $\red^*$ factorizes into a $\ered$-sequence followed by a $\nered$-sequence.
Point (ii) states that the internal steps $\nered$ do not increase the information content.

\begin{theorem}[Asymptotic completeness criterion]
	\label{thm:w-essential}\label{thm:ACompl}Given   $((A,\red), \obs )$ a  QARS, and a subrelation  $ \ered \subseteq \red$, assume :
\begin{enumerate}
	\item[i.]  \emph{$\ex$-factorization}:  if $\tmt \to^* \tmu$
	then $ \tmt \ered^*\cdot \nered^*\tmu$;
	\item[ii.] \emph{$\nex$-neutrality}:    $\tmt \nered \tms$ implies  $\w{\tmt}=\w{\tms}$.
\end{enumerate}	
Then: 
\quad \quad
  $\tm \red\conv{\obs}  \cpo p  \timplies  \tm \ered\conv{\obs} \cpo p.$

\end{theorem}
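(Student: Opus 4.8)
The plan is to lift the finitary \emph{$\ex$-factorization} hypothesis to an infinitary ``tracking'' construction. First, unfold the hypothesis: $\tm \red\conv{\obs}\cpo p$ provides a maximal $\red$-sequence $\tm = \tm_0 \red \tm_1 \red \cdots$ with $\sup_n \w{\tm_n} = \cpo p$. The goal is to build an $\ered$-sequence from $\tm$ whose $\obs$-values are cofinal with the $\w{\tm_n}$ (so its supremum is $\ge\cpo p$) while never exceeding $\cpo p$ (so its supremum is $\le\cpo p$); then its limit is exactly $\cpo p$, that is, $\tm\ered\conv{\obs}\cpo p$. Note this yields (a strong form of) asymptotic completeness of $\ered$ for $\red$.

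The heart is an iterated-factorization construction producing an $\ered$-sequence $\tm = s_0 \ered s_1 \ered \cdots$ together with ``checkpoints'': indices $k_0 \le k_1 \le \cdots$ and $n_0 < n_1 < \cdots$ with $s_{k_j}\nered^* \tm_{n_j}$ for every $j$. Take $n_0 = 0$, $s_0 = \tm$. Given a checkpoint $s_{k_j}\nered^*\tm_{n_j}$, compose with the fragment $\tm_{n_j}\red^*\tm_{n_j+1}$ of the original sequence (using $\nered\subseteq\red$) to get $s_{k_j}\red^*\tm_{n_j+1}$; apply $\ex$-factorization to obtain $s_{k_j}\ered^*\cdot\nered^*\tm_{n_j+1}$, and let the $\ered^*$ part $s_{k_j}\ered^* s_{k_{j+1}}$ be the next block, with $n_{j+1} \defeq n_j+1$. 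What makes this coherent is that factorization is always applied to a reduction \emph{starting at the already-built element $s_{k_j}$}, so the blocks concatenate into one genuine $\ered$-sequence from $\tm$. If it is infinite it is maximal; if the $\ered^*$ blocks are eventually empty, the construction stabilizes at some $s$ with $s\nered^*\tm_n$ for cofinally many $n$, and one takes a maximal $\ered$-sequence extending $\tm\ered^* s$.

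Next, control the $\obs$-values. By \emph{$\nex$-neutrality} each internal fragment is $\obs$-invariant, so $\w{s_{k_j}} = \w{\tm_{n_j}}$ for all $j$; by the defining monotonicity of a QARS (along $\red$, hence along $\ered$), every $s_i$ with $k_j\le i\le k_{j+1}$ satisfies $\w{\tm_{n_j}}=\w{s_{k_j}}\le \w{s_i}\le \w{s_{k_{j+1}}}=\w{\tm_{n_{j+1}}}$. Since $\langle n_j\rangle_j$ is cofinal in $\Nat$ and $\langle\w{\tm_n}\rangle_n$ is monotone, $\sup_j\w{\tm_{n_j}}=\sup_n\w{\tm_n}=\cpo p$; sandwiching then gives $\sup_i\w{s_i}=\cpo p$, so the constructed maximal $\ered$-sequence witnesses $\tm\ered\conv{\obs}\cpo p$.

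The main obstacle is the degenerate ``stalling'' branch: when the essential fragments become eventually empty, the construction halts at some $s$ with $\w s=\cpo p$ (by neutrality, since $s\nered^*\tm_n$ for cofinally many $n$), and one must still exhibit a \emph{maximal} $\ered$-sequence with limit $\cpo p$ — if $s$ is already $\ered$-normal this is immediate, otherwise one must check that every $\ered$-continuation of $s$ keeps $\obs$ pinned at the level $\cpo p$, which is where the interplay of neutrality, factorization, and the precise reading of $\conv{\obs}$ (whether non-maximal witnessing sequences count, or a further standing property of $\nered$ is available) must be used carefully. All the remaining work — coherence of the stitching and the cofinality bookkeeping — is routine.
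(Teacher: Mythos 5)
Your construction is the paper's own, step for step: iterate $\ex$-factorization along the given $\red$-sequence to build checkpoints $t \ered^* s_{j(k)} \nered^* t_k$, use $\nex$-neutrality to get $\w{s_{j(k)}} = \w{t_k}$, and use the QARS monotonicity to sandwich the intermediate $\obs$-values so that the two suprema coincide. The ``stalling'' subtlety you flag at the end (when the essential blocks become empty and the constructed sequence must still be extended to a maximal one without overshooting $\cpo p$) is a genuine delicate point, but the paper's proof elides it entirely with ``the claim easily follows'', so you are if anything more careful than the source; note that for the way the theorem is actually used (asymptotic completeness, which only requires reaching some limit $\geq \cpo p$), the extension-by-monotonicity you sketch already suffices.
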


\begin{proof}

Let $\seq \tmt$ be a $\red$-sequence such that $t=t_0$ and  $\sup_{n} \{\w{\tmt_n} \} = \cpo p$. From $t$, we  inductively build a $\ered$-sequence $\seq s$ with $s_0=t$ and such that, for every $k \in \Nat$, there is an index $j(k)$ such that $t\ered^* s_{j(k)}$ and $s_{j(k)}\nered^* t_{k}$.
Case $k=0$ is trivial (set $s_{j(0)} := t$). 

Assume  the claim holds for $k\geq 0$, so  $t\ered^* s_{j(k)}$. Observe that we have a sequence $s_{j(k)} \nered^*t_k\red t_{k+1}$. By applying assumption (i.) to it, we have 
$s_{j(k)} \ered^*   u \nered^* t_{k+1}$. We concatenate  $t\ered^* s_{j(k)}$ and $s_{j(k)} \ered^*   u $ to obtain
 $t\ered^* s_{j(k)} \ered^*  s_{j(k+1)}:= u $, as desired.
By assumption (ii.),  $s_{j(k)}\nered^* t_{k}$ implies    $\w {t_k }=\w {s_{j(k)}}$. 
The claim easily  follows.
\end{proof}

\paragraph{Uniqueness of the  limit  via  Random Descent.}\label{sec:wRD}

To establish that a strategy has a unique limit, 
  Random Descent \cite{Newman,Oostrom07,OostromT16}   has already been shown to adapt well and naturally in a probabilistic and asymptotic  setting \cite{pars, parsLMCS}. 
 
 The property  $\obs$-RD  below states that if  $\tm$ has  different reduction sequences, they are all \emph{indistinguishable} if regarded through the lenses of $\obs$.  Namely,  all reduction sequences $\seq \tmt$ starting from $\tmt$ induce the same  
 $\omega$-chain  $\langle \obs(t_n)\rangle_n$. Thus, they all have the same $\obs$-limit.

\begin{Def}[Weighted Random Descent]\label{def:RD} \label{def:WRD}
	Let $((\AA,\red),\obs)$ be a QARS. The relation $\ered \subseteq \red$   satisfies the following properties if they hold for each $\tm \in \AA$. 
	\begin{enumerate}		
		\item {$\obs$-RD}:  for each pair of $\ered$-sequences   $\seq r$, $\seq s$ from  $t$,  $\w {r_n} = \w {s_n} $ for all  $n$.
		%
		
		\item $\obs$-diamond: $\ered$ satisfies RD-diamond, and if  
			$\tm  \xbackredx{\ex}{} m \ered s$ then $ \w {s}= \w{t}$.	
	\end{enumerate}
\end{Def}
\begin{prop}[\cite{parsLMCS}]\label{thm:diamond}With the same notation as in \Cref{def:WRD}:\\
	($\obs$-diamond) $ \Rightarrow $ ($\obs$-RD)  $ \Rightarrow $  $\wLim(\tmt,\ered)$ contains a unique element.
\end{prop}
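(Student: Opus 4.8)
The statement is a chain of two implications, so I would prove each in turn. For the first implication, $(\obs\text{-diamond}) \Rightarrow (\obs\text{-RD})$, the plan is to fix $\tm \in \AA$ and two $\ered$-sequences $\seq r$ and $\seq s$ from $\tm$, and prove by induction on $n$ that $\w{r_n} = \w{s_n}$. The base case $n=0$ is trivial since $r_0 = s_0 = \tm$. For the inductive step, the natural approach is the classical ``tiling'' argument used for Random Descent via the diamond property: given that the two prefixes of length $n$ agree under $\obs$, one peels off one step from each sequence and closes the resulting local fork using the RD-diamond property, while the extra clause of $\obs$-diamond (that $\tm \xbackredx{\ex}{} m \ered s$ implies $\w{s} = \w{\tm}$) controls how $\obs$ behaves along the closing steps. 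Concretely, I expect the cleanest route is to first establish, as a lemma, that $\obs$-diamond implies the single-step statement ``if $\tm \ered \tmu_1$ and $\tm \ered \tmu_2$ then either $\tmu_1 = \tmu_2$, or there is $\tmu$ with $\tmu_1 \ered \tmu \xbackredx{\ex}{} \tmu_2$, and moreover $\w{\tmu_1} = \w{\tmu_2}$'' — the last conjunct following because both $\tmu_1$ and $\tmu_2$ sit in a fork with $\tmu$, so the diamond clause forces $\w{\tmu} = \w{\tmu_1}$ and $\w{\tmu} = \w{\tmu_2}$. Since $\obs$ is monotone along $\red$ (QARS axiom) and the closing steps are $\ered \subseteq \red$, all intermediate elements in the tiling diagram carry the same $\obs$-value along each ``column'', which propagates the equality $\w{r_n} = \w{s_n}$ to $\w{r_{n+1}} = \w{s_{n+1}}$.

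For the second implication, $(\obs\text{-RD}) \Rightarrow \wLim(\tmt,\ered)$ is a singleton, the argument is short. First, $\wLim(\tmt, \ered)$ is nonempty: $\ered$ is a relation on $\AA$, so from $\tmt$ there is at least one maximal $\ered$-sequence $\seq \tmt$ (extend greedily; if it gets stuck at an $\ered$-normal form, pad by the stationary convention), and its $\obs$-limit exists because $\Set$ is an $\omega$-cpo. Now take any $\bm p, \bm q \in \wLim(\tmt, \ered)$, witnessed by $\ered$-sequences $\seq r$ and $\seq s$ from $\tmt$ with $\sup_n \w{r_n} = \bm p$ and $\sup_n \w{s_n} = \bm q$. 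By $\obs$-RD, $\w{r_n} = \w{s_n}$ for every $n$, so the two $\omega$-chains $\langle \w{r_n}\rangle_n$ and $\langle \w{s_n}\rangle_n$ are literally the same chain, hence have the same supremum, i.e. $\bm p = \bm q$. Therefore $\wLim(\tmt, \ered)$ has exactly one element.

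The main obstacle I anticipate is purely in the first implication, and specifically in the tiling step: one must be careful that $\obs$-diamond as stated is a genuinely local (one-step fork) hypothesis, whereas RD-diamond from \Cref{fact:diamond} already packages the ``$\tm_1 = \tm_2$ or $\exists \tmu.\ \tm_1 \ered \tmu \xbackredx{\ex}{} \tm_2$'' disjunction; so the real content to check is that the $\obs$-equality clause survives the induction when sequences have different lengths or when one side terminates early (here the stationary-padding convention $t_k = \tmu$ for $k \ge i$ once $\tmu \not\ered$ must be invoked so that ``same length'' is not actually needed — only pointwise $\obs$-agreement is). Since this implication is attributed to \cite{parsLMCS}, I would present the diamond-to-RD tiling in the standard strip form and keep the bookkeeping of $\obs$-values explicit, as that is where a careless argument could silently assume termination or finiteness that is not available in the asymptotic setting.
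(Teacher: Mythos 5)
The paper itself gives no proof of this proposition (it is imported wholesale from \cite{parsLMCS}), so your plan can only be judged against the standard argument --- which is indeed the one you outline. Your second implication is complete and correct as stated: by $\obs$-RD the two chains $\langle\w{r_n}\rangle_n$ and $\langle\w{s_n}\rangle_n$ coincide pointwise, hence have the same supremum in the $\omega$-cpo, and non-emptiness of $\wLim(\tmt,\ered)$ follows by extending any sequence maximally under the stationary-padding convention.

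The first implication needs one repair: the induction invariant. As written, you propose to derive $\w{r_{n+1}}=\w{s_{n+1}}$ from $\w{r_n}=\w{s_n}$, but equality of observations at index $n$ carries no structural information about $r_n$ and $s_n$ --- they may be unrelated elements with accidentally equal observations --- so there is no local fork to close. What goes through by induction on $n$ is the stronger statement quantified over all starting points: for every $\tm$ and every pair of maximal $\ered$-sequences from $\tm$, the observations agree up to index $n$. In the step, either $\tm$ is $\ered$-normal (both sequences are constant), or $\tm\ered r_1$ and $\tm\ered s_1$; the $\obs$-diamond clause applied to this fork gives $\w{r_1}=\w{s_1}$ directly (note that it applies to the fork $r_1\xbackredx{\ex}{}\tm\ered s_1$, not, as you suggest, to the join $r_1\ered \tmu\xbackredx{\ex}{} s_1$ --- the clause as defined only speaks about one source with two targets, so $\w{\tmu}=\w{r_1}$ is neither derivable from it nor needed). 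If $r_1=s_1$, apply the induction hypothesis to the two tails; otherwise reroute both tails through one fixed maximal sequence out of the join $\tmu$ supplied by RD-diamond, applying the induction hypothesis once at $r_1$ and once at $s_1$ and chaining the resulting pointwise equalities. This is exactly the tiling you gesture at, but the universal quantification over starting points is what makes the columns of the grid exist; with that strengthening your plan is the standard proof.
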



\begin{example}[CbV Weak reduction] Let us consider  Call-by-Value $\lam$-calculus with   weak  reduction $\wred$, where weak means no reduction in the scope of $\lam$-abstractions.
	The following are two different  $\wred$-sequences from the term $(II)(Ix)$:\\
	$(II)(Ix) \ \wred \ I(Ix) \ \wred \ Ix  \ \wred \ x ~~~\tand ~~~
	(II)(Ix) \ \wred \ (II)x \ \wred \ Ix \ \wred \ x.$
	
	The observations of interest are values. Let $\obs_v: \Lambda \to \{0,1\}$	be $1$ if the term is a \emph{value} (i.e. a variable or an abstraction), $0$ otherwise. Through the lenses of $\obs_v$, both sequences appear as
	$ \langle	0,~ 0 ,~ 0,~  1 \rangle $.
	
\end{example}

\section{Normalization in CbV and CbN \texorpdfstring{$\lambda$}{lambda}-calculi}
In the rest of the paper, we study asymptotic normalization in the setting of $\lam$-calculi---in particular we are interested in probabilistic  $\lam$-calculus (\Cref{sec:PLambda}). 

In this section, after recalling the general syntax of $\lam$-calculus, 
 we  define  a novel, flexible  normalizing  strategy, which is uniformly defined  for \emph{Call-by-name} (\CbN) and \emph{Call-by-Value} (\CbV)  $\lam$-calculi.
Its features---in particular the fact that it support breadth-first reduction---make it suitable to then be extended to asymptotic normalization, in different settings.

\subsection{Call-by-Name and Call-by-Value (applied) $\lambda$-calculus }
\label{sec:lambda}

We recall the basics of $\lam$-calculus. Our syntax admits    
\emph{operator symbols} \cite{HindleySeldin86, PlotkinCbV}, \ie~\emph{constants} with a fixed arity for their arguments.
\emph{Terms} and \emph{values} are defined by  the grammars below.
			\[	\begin{array}{rcllr}
			M & \Coloneqq & x   \mid  \lambda x.M \mid MM  \mid  \op(M,\dots,M) & ( \text{\emph{Terms,} } \LambdaOp  )\\		
			V & \Coloneqq & x \mid \lambda x. M & (\text{\emph{Values,} } \Val)\\
				\end{array}\]
where $x$ ranges over a countable set of \emph{variables},  and 
$\op$  over a disjoint (possibly empty)
set $\OpSet$ of operator symbols. \SLV{If $\OpSet$ is empty, the calculus is \emph{pure} and we set $\Lambda \defeq \Lambda_{\OpSet}$.}{
	If the set of constants is non empty, the calculus is called \emph{applied}, and  the set of terms is often  indicated as $\Lambda_{\OpSet}$.
 Otherwise, the  
calculus is  \emph{pure}, and the sets of terms is   $\Lambda$. }
Terms are identified up to renaming of bound variables, where $\lambda x$ is the only binder constructor. 
 $ P \subs x Q  $ is the capture-avoiding substitution of $Q$ for the free occurrences of $x$ in $P$.

\textbf{Contexts} (with  an hole $\hole{\ }$) are defined by the grammar below. $\cc\hole{N}$ 
stands for the term obtained from $\cc$ by replacing the  hole with $N$ (possibly capturing the free variables~of~$N$).
\[\cc ::= \hole{~}   \mid M\cc\mid \cc M \mid \lambda x.\cc \mid \op(M, \dots, \cc,\dots, M) \qquad (\textit{Contexts}) \]
%

\paragraph{Rules and Reductions.}

A \textbf{rule}  $ \Rule$ is  a binary relation on $\LambdaOp$, which we also denote   $\Root{\Rule}$, writing  $R \Root{\Rule} R'$. $R$ is called a $\Rule$-\emph{redex}.
The best known rule is $\beta$:  $(\lambda x.M)N ~\mapsto_{\beta}~ M\subs{x}{N} $.

A \textbf{reduction step}  $\red_{\Rule}$ is 
the  closure under context $\cc$ of $\Rule$.
\SLV{}{Explicitly, $T \red T'$
holds if  $T = \cc\hole{R}$, $T' = \cc\hole{R'}$,  and  $ R \Root{\Rule} R'$.
}



\newcommand{\Lcbn}{\Lambda^\cbn}
\newcommand{\Lcbv}{\Lambda^\cbv}

\SLV{\paragraph{CbN and CbV Calculi.}}{\paragraph{CbN and CbV Calculi.}}
The  (pure) \textbf{Call-by-Name} calculus  $\Lcbn=(\Lambda,\redb)$ is the set of terms equipped with the contextual closure of the $\beta$-rule, as described \eg  in \cite{Barendregt}.
 %
The (pure) \textbf{Call-by-Value} calculus $\Lcbv=(\Lambda,\redbv)$ is the same set   equipped with the contextual   closure  of the $\betav$-rule: 
$(\lambda x.M)V ~\mapsto_{\beta_v}~ M\subs{x}{V} ~ \mbox{ where } V \in \Val$, as introduced by Plotkin \cite{PlotkinCbV}.


\textbf{\CbN and \CbV applied} calculi are obtained  by associating 
to operators  (the contextual closure of) a family of rules of the form 	 
\SLV{$\op (M_1, \dots, M_k)\mapsto_{\op} N$.}{\begin{center}
	 $\op (M_1, \dots, M_k)\mapsto_{\op} N$.
\end{center}}
This is a standard way to enrich $\lam$-calculus with new computational features, such as probabilistic choice or output.

%

\SLV{}{
\paragraph{Restricted Reductions.}
The contextual closure of rules may be restricted. 
For example, in CbN,  \emph{head reduction}  is the closure of the $\beta$ rule under head context, \emph{leftmost-outermost} reduction
only reduces the leftmost-outermost redex. We recall  the non-deterministic variants of weak and head reduction.
}


\SLV{\paragraph{Weak reductions in CbV.}\label{sec:weak}}{\subsubsection{Weak reductions in CbV}\label{sec:weak}}
\SLV{In CbV $\lam$-calculus,  various restrictions of $\redbv$   are studied.
	If the result of interest are values, the  
	reduction is \emph{weak}, that is, it does not reduce in the body of a function. There are  three main weak schemes: left, right and in arbitrary order. 
}{
In the literature on the CbV $\lam$-calculus,  various restrictions of $\redbv$   are studied.
If the result of interest are values, the  
reduction is \emph{weak}, that is, it does not reduce in the body of a function. 
As we mentioned, there are  three main weak schemes: reducing from left to right, as originally done by Plotkin 
\cite{PlotkinCbV}, from right to left, as in  Leroy's ZINC abstract machine \cite{Leroy-ZINC}, or in an 
arbitrary order, used for example in  \cite{LagoM08}.
Left and right reduction  are  \emph{deterministic}.  Weak reduction  in arbitrary order   \emph{subsumes} both. It is non-deterministic in the choice of the redex, but such a choice is irrelevant w.r.t. reaching a value and the number of steps to do so,   because $\wredbv$ satisfies the diamond property  of  \Cref{fact:diamond}.
{This  insures the best of both world: one can reasons using a (any) sequential model, but implement in a parallel way.}
}
\emph{Left} contexts  $\leftc$, \emph{right} contexts $\rightc$, and   (arbitrary order) \emph{weak} contexts $\ww$ 
are   defined by
\[	\leftc ::= \hole{~}  \mid  \leftc M \mid  V \leftc   \quad\quad
	\rightc ::= \hole{~}  \mid   M \rightc \mid   \rightc V  \quad\quad
	  \ww ::= \hole{~}  \mid  \ww M \mid   M \ww \]
 Given a rule  $\Root{}$  on $\Lambda$,
\emph{weak  reduction} 
$\xredx{\weak}{}$  is the closure of  $\Root{}$ under context $\ww$.
A step $T\redx{} S$ is  \emph{non-weak}, noted 
$T\nwredx{} S$  if it is not weak.
Similarly for left  ($\xredx{ \lsym}{}$  and $\xredx{\neg \lsym}{}$), and right  ($\xredx{ \rsym}{}$  and~$\xredx{\neg \rsym}{}$).
\SLV{}{\begin{remark}$\lred \subsetneq \wred$ and $\xredx{\rsym}{} \subsetneq \wred$. For example $(xx)(II)\wred (xx)(I) $ but $(xx)(II)\not \lred$. 
\end{remark}}
\SLV{Left and right reduction  are  \emph{deterministic}.  
	Reduction $\wredbv$   \emph{subsumes} both. The choice of a redex is 
	 non-deterministic, but  irrelevant w.r.t. reaching a value and the number of steps to do so,   because $\wredbv$ is RD-diamond  (\Cref{fact:diamond}). 
	We can fire any arbitrary  redex in weak position---or  \emph{all of them} {in parallel}.  A \emph{parallel variant} can easily be defined. 
}

Weak factorization  holds for the three reductions: 
$\redbv^*   \ \subseteq   \  \sredbv^*  \cdot \nsredbv^*$, for   $\surf\in \{\weak,\lsym,{\rsym}\}$. 

\SLV{
}{
\paragraph{Weak Factorization.} 
 Let $\surf\in \{\weak,\lsym,{\rsym}\}$
	\begin{itemize}
		\item \emph{weak factorization of $\redbv$:}  ~~	 $\redbv^*   \ \subseteq   \  \sredbv^*  \cdot \nsredbv^*$.
		\item \emph{$\Val$-completeness:}~~   $T \redbv W (W\in \Val)$ 
		if and only if $T\sredbv^*V$ ($V\in \Val$)
	\end{itemize}
}

\SLV{}
{\paragraph{Parallel Weak Reduction.} With $\wred$, any  redex in weak position  can be fired---or we can fire \emph{all of them} in parallel. A parallel variant can easily be defined as in \Cref{fig:weak_parallel} and shown to be sound and complete.
For simplicity of definitions, we only consider pure CbV and closed terms, so that the normal forms of $\wredbv$ are exactly the closed values.
Rule $val$ makes the relation reflexive on values and \emph{only} on values---this is just   an harmless shortcut in order to give a compact and neat formulation.

	\begin{figure}
\small{
	\[\infer[rdx]{M \xredx {\pw}{} M'}{M\Root{\betav} {M'}} \qquad
		\infer[val]{ V \xredx{\pw}{} V}{} \qquad
  \infer[app]{ P_1P_2 \xredx{\pw}{}   P_1'P_2'}{ P_1P_2\not\Root{\betav}  & P_1\xredx{\pw}{}   P_1' & P_2\xredx{\pw}{}   P_2'} \]
}

	%
\caption{$(\Lambda^\bullet,\redbv)$: Parallel weak reduction on closed terms.}\label{fig:weak_parallel}
	\end{figure}

	\begin{prop}$M \wredbv\tr V$  $\Leftrightarrow$ $M\xredx{\pw}{\betav}\tr V$.

	\end{prop}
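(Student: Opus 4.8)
The plan is to prove the two implications separately, using throughout that—as recalled via \Cref{fact:diamond}—weak reduction $\wredbv$ on closed terms is RD-diamond, hence has Random Descent: if a closed term $M$ reduces to a value then \emph{every} maximal $\wredbv$-sequence from $M$ is finite, has the same length $d(M)$, and ends in the same value; moreover (the remark around \Cref{fig:weak_parallel}) the $\wredbv$-normal forms among closed terms are exactly the closed values.

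\emph{Soundness} ($M\xredx{\pw}{\betav}\tr V\Rightarrow M\wredbv\tr V$). I would first show that a single parallel step is a finite weak reduction, i.e.\ $M\xredx{\pw}{\betav}N$ implies $M\wredbv\tr N$, by induction on the derivation: rule \emph{rdx} yields one $\wredbv$-step, rule \emph{val} yields the empty reduction, and for rule \emph{app}, from $P_1\wredbv\tr P_1'$ and $P_2\wredbv\tr P_2'$ given by the induction hypothesis one gets $P_1P_2\wredbv\tr P_1'P_2\wredbv\tr P_1'P_2'$, since both the left and the right argument position of an application are weak contexts. Concatenating, $(\xredx{\pw}{\betav})\tr\ \subseteq\ \wredbv\tr$, which gives this implication.

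\emph{Completeness} ($M\wredbv\tr V\Rightarrow M\xredx{\pw}{\betav}\tr V$). The point is that parallel reduction never gets stuck before a value. I would prove a \emph{progress} lemma by structural induction: every closed term $M$ admits some step $M\xredx{\pw}{\betav}N$ (if $M$ is a value use \emph{val} with $N=M$; if $M=P_1P_2$ is a $\betav$-redex use \emph{rdx}; otherwise $P_1,P_2$ are closed and smaller, so the induction hypothesis together with \emph{app} applies). Next I observe that, for closed $M$, a step $M\xredx{\pw}{\betav}N$ uses the rule \emph{val}—and so $M=N$ with $M$ a value—whenever its simulating weak reduction $M\wredbv\tr N$ is empty: indeed \emph{rdx} contributes one step, and in an \emph{app}-derivation an empty simulation would force both immediate subterms to be closed values, making $P_1P_2$ itself a $\betav$-redex and contradicting the side condition of \emph{app}. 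Now assume $M\wredbv\tr V$; then $M$ is $\wredbv$-strongly normalizing and $d(M)$ is defined. Building a $\xredx{\pw}{\betav}$-sequence from $M$ by applying progress repeatedly, each time the current term is not a value it is not $\wredbv$-normal, hence the next parallel step is non-empty, i.e.\ a $\wredbv^+$-step; strong normalization of $M$ then forces the sequence to reach, within at most $d(M)$ parallel steps, a value $W$ with $M\wredbv\tr W$, and since $W$ is $\wredbv$-normal, Random Descent gives $W=V$. Hence $M\xredx{\pw}{\betav}\tr V$.

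\emph{Main obstacle.} The subtle point is that a parallel step is \emph{not} the simulation of a single weak step—it may contract several redexes at once, and in fact $\wredbv\not\subseteq(\xredx{\pw}{\betav})^+$—so no step-by-step simulation of $\wredbv$ by $\xredx{\pw}{\betav}$ is available; the completeness argument must be "big-step", leaning on Random Descent to convert "every parallel step from a non-value is a genuine weak reduction toward the \emph{unique} value $V$" into termination of the parallel reduction exactly at $V$. A minor trap is that a $\betav$-redex may coincide with its own reduct (e.g.\ $\Omega$), so "empty parallel step" must be identified with a use of the rule \emph{val}, not with syntactic equality of source and target.
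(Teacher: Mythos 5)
Your proof is correct, and the easy direction (a parallel step unfolds to a finite weak reduction, by induction on the derivation, using that both argument positions of an application are weak contexts) is exactly the paper's ``$\Leftarrow$: immediate''. For the converse, however, you take a genuinely different route. The paper argues by induction on the number of $\wredbv$-steps, with the key structural observation that if $P_1P_2 \wredbv\tr V$ then necessarily $P_1P_2 \wredbv\tr V_1V_2$ for values $V_1,V_2$: this decomposes the given weak reduction into evaluations of the two subterms followed by a root step, to which the induction hypothesis and the rules \emph{app}/\emph{rdx} apply directly. You instead never decompose the given reduction and argue ``forward'' from the parallel side: a progress lemma (every closed term makes a $\xredx{\pw}{\betav}$-step), the observation that a parallel step from a non-value simulates a non-empty weak reduction (an empty simulation forces the \emph{val} rule, since an \emph{app}-derivation with two empty premises would make both subterms closed values and hence the whole term a $\rootbv$-redex, against the side condition), and then strong normalization plus uniqueness of the reached normal form---both supplied by Random Descent---to conclude that the parallel sequence must land on $V$. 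Your route leans more heavily on RD (you need strong, not merely weak, normalization of $M$, and uniqueness of the reached value), whereas the paper's induction needs only the shape of weak reductions ending in a value; in exchange you dispense with the decomposition lemma and with the bookkeeping of combining the two subterm reductions inside an application. Your closing remark that $\wredbv\not\subseteq(\xredx{\pw}{\betav})^+$, so that no step-by-step simulation of $\wredbv$ exists, correctly identifies why some global argument---either your termination argument or the paper's induction on the length of the whole reduction---is unavoidable; your caveat about redexes equal to their contracta (so ``empty step'' must mean ``derived by \emph{val}'') is also well taken.
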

\begin{proof}$ \Rightarrow $: by induction on the number of   $\wredbv$ steps, observing that if $P_1P_2\wredbv^* V$, then necessarily 
	$P_1P_2\wredbv^* V_1V_2$.	
	
		  $ \Leftarrow $:  immediate.
	
\end{proof}
}

\newcommand{\hBar}{\mathfrak{h}}
\SLV{\subparagraph{Head reduction in CbN.}\label{sec:head}}{\subsubsection{Head reduction in CbN}\label{sec:head}}
Head reduction \cite{Barendregt}\SLV{}{---here written $ \xredx{\hBar}{\beta}$---} is the closure of $\beta$ under head context
$\lam x_1\dots x_n.\hole {~}M_1\dots M_k $. 
\emph{Head normal forms} (\hnf), whose set is  denoted by $\Hnf $, are its   normal forms.
The literature of linear logic often uses a variant of head context which  includes  the standard one, and induces  exactly the same set $\Hnf$ of  normal forms. Given a rule ${\Rule}$, we write $\hredx{\Rule}$ for its closure under context $\hh$.
\[\hh ::= \hole{~}  \mid  \lambda x.\hh \mid  \hh M \qquad (\emph{Head contexts}) \]
\SLV{}{Clearly, $ \xredx{\hBar}{\beta} \subseteq \hredb$. Since 
$\hredb$ has the diamond  property of \Cref{fact:diamond}, and since $\hredb$ subsumes ``usual'' head reduction, 
there is no essential difference, neither in reaching a \hnf nor in the number of steps to do so.}
\SLV{Head factorization (see 
	\cite[Lemma 11.4.6]{Barendregt})  and head normalization (see \cite[Thm.~8.3.11]{Barendregt}) are classical results, which hold  also when the calculus includes constants, \ie for $(\Lambda_{\OpSet},\redb)$.}{
	\begin{remark}
		$ \xredx{\hBar}{\beta}    \subsetneq \hred$. For example $(\lam x.Ix )z \hredb (\lam x.x)z$ but $M=(\lam x.Ix )z  \not{\xredx{\hBar}{\beta}} (\lam x.x)z$
	\end{remark}

	
	\paragraph*{Head Factorization}
	Every $\redb$-reduction sequence can be re-organized/factorized
	as to first reducing  redexes in head position, and then everything else. Head  factorization \cite[Lemma~11.4.6]{Barendregt} and head normalization are classical results, which hold  also when the calculus includes operators, \ie for $(\Lambda_{\OpSet},\redb)$.
}
\SLV{}{
	\begin{theorem}[Head Factorization] \hfill
		\begin{itemize}
			\item \emph{Head  Factorization:}  ~~	 $\redb^* \ \subseteq \ \hredb^*  \cdot \nhredb^*$.
			\item \emph{Head Normalization:} $M$ has \hnf if and only if \\ $M\hredb^* S~ (\text{for some }S \in \Hnf)$ .
		\end{itemize}
	\end{theorem}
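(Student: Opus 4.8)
I would prove Head Factorization first and then obtain Head Normalization as a corollary. A preliminary remark makes operators harmless: head contexts $\hh$ have no clause for $\op$, and the rule at stake is only $\beta$, so $\hredb$ never enters an operator subterm --- operators are inert for head reduction and merely ride inside subterms (a head normal form may now carry an operator term at its head, e.g.\ $\lambda\vect x.\op(\vect M)N_1\cdots N_k$, but nothing else changes). Thus the argument is the usual one for pure $\Lambda$, with a trivial extra congruence clause for $\op$ wherever parallel reduction is defined.

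\textbf{Parallel reduction.} For Head Factorization I would use Takahashi's parallel-reduction method. Define parallel $\beta$-reduction $\PRed$ on $\LambdaOp$ in the standard way: reflexive on variables; congruence for $\lambda$, application and $\op$; and the contraction rule $(\lambda x.M)N\PRed M'\subs{x}{N'}$ whenever $M\PRed M'$ and $N\PRed N'$. Recall the classical facts ${\redb}\subseteq{\PRed}\subseteq{\redb^*}$ (hence $\redb^*={\PRed}^*$) and substitutivity ($M\PRed M'$, $N\PRed N'$ imply $M\subs{x}{N}\PRed M'\subs{x}{N'}$). Let $\nhPRed$ be internal parallel reduction, i.e.\ the restriction of $\PRed$ that does not contract the head redex of its source (when there is one). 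Two routine observations: (a) ${\nhPRed}\subseteq{\nhredb^*}$, because developing a set of non-head redexes keeps the head redex (if any) untouched and never turns a non-head redex into the head redex; (b) a non-head step (parallel or not) out of a term that is not a head normal form produces a term that is not a head normal form, since the head redex cannot be erased --- the fired redex is disjoint from it or lies strictly inside it, and in every case the head redex descends.

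\textbf{The crux: postponement.} The heart of the proof is the lemma: $M\PRed N$ implies $M\hredb^*\cdot\nhPRed N$. I would prove it by induction on $M$, splitting on whether $M$ has a head redex. If it does not, the derivation of $M\PRed N$ contracts no head redex, so already $M\nhPRed N$. If $M=\lambda\vect x.(\lambda z.M_0)M_1\cdots M_k$ and the derivation does not fire the head redex, again $M\nhPRed N$ directly. Otherwise $N=\lambda\vect x.M_0'\subs{z}{M_1'}M_2'\cdots M_k'$ with $M_i\PRed M_i'$; a single head step gives $M\hredb M'\defeq\lambda\vect x.M_0\subs{z}{M_1}M_2\cdots M_k$, substitutivity gives $M'\PRed N$, and the inductive hypothesis on $M'\PRed N$ yields $M'\hredb^*\cdot\nhPRed N$, hence $M\hredb\cdot\hredb^*\cdot\nhPRed N$. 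Iterating this lemma along a reduction $M\PRed^*N$, and using the Church--Rosser property of $\PRed$ to commute the head steps to the front past the internal parallel steps, gives $M\hredb^*\cdot\nhPRed^*N$; combined with (a) and $\redb^*={\PRed}^*$, this is exactly $\redb^*\subseteq\hredb^*\cdot\nhredb^*$. (A lighter attempt --- proving a ``linear swap'' ${\nhredb}\cdot{\hredb}\subseteq{\hredb}\cdot{\nhredb^*}$ and invoking the abstract factorization criterion discussed in \Cref{sec:tools} --- does \emph{not} work here: e.g.\ $\Delta(Iz)\nhredb\Delta z\hredb zz$, yet the only way to put the head step first is $\Delta(Iz)\hredb(Iz)(Iz)\hredb z(Iz)\nhredb zz$, which costs two head steps. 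This non-linearity is precisely what the parallel-reduction packaging absorbs.)

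\textbf{Head normalization, and the obstacle.} Head Normalization then follows. The ``if'' direction is immediate: $M\hredb^*S$ with $S\in\Hnf$ is in particular a $\redb$-reduction of $M$ to a head normal form. For ``only if'', if $M\redb^*H$ with $H\in\Hnf$, then by Head Factorization $M\hredb^*P\nhredb^*H$ for some $P$; by observation (b) (iterated), $P\nhredb^*H$ with $H\in\Hnf$ forces $P\in\Hnf$, so $M\hredb^*P$ with $P$ a head normal form. The main obstacle is the postponement lemma together with its iteration: making the induction well-founded (note that $M'$ is \emph{not} a structural subterm of $M$, so one really needs the standard finite-developments measure, or, equivalently, to phrase the step via complete developments à la Takahashi), and commuting head steps to the front --- a rearrangement that, as the $\Delta(Iz)$ example shows, does not terminate for the naive reason and genuinely relies on the parallel-reduction machinery. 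Everything else is routine.
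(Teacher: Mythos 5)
The paper does not actually prove this theorem: it is invoked as a classical result, with head factorization credited to Barendregt's Lemma 11.4.6 and head normalization to his Theorem 8.3.11, plus the remark that operators are inert for head reduction. Your reconstruction via Takahashi-style parallel reduction is the standard way to prove it from scratch, and the architecture is sound: the split lemma $M \PRed N \Rightarrow M \hredb^* \cdot \nhPRed N$, the preservation of non-head-normality under internal steps, the derivation of Head Normalization from these, and the observation that a naive linear swap fails on $\Delta(Iz)$ are all correct; so is your remark that the induction in the split lemma is not structurally well-founded in the $\beta$-case and must be repaired by a finite-developments measure or by complete developments.

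Two points need tightening. First, $\hredb$ in this paper is the \emph{non-deterministic} head reduction, the closure of $\beta$ under $\hh ::= \hole{~} \mid \lam x.\hh \mid \hh M$, which strictly contains the standard one $\xredx{\hBar}{\beta}$ (the paper's own example: $(\lam x.Ix)z \hredb (\lam x.x)z$ is a head step here but not a standard one). Hence ``the head redex of its source'' is not well defined: a term may have several redexes in head position. Your $\nhPRed$ must contract \emph{no redex in head position}; under the singular reading, claim (a) $\nhPRed \subseteq \nhredb^*$ fails --- contracting only $Iy$ in $(\lam x.(Iy)x)z$ would count as internal, yet it is a $\hredb$-step --- and the liberal factorization does not simply follow from the standard one, since the internal suffix $M \xredx{\hBar}{\beta}^{*} \cdot (\lnot\xredx{\hBar}{\beta})^{*} N$ may still contain liberal head steps. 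Second, the Church--Rosser property of $\PRed$ does not ``commute the head steps to the front'': confluence gives joinability, not postponement. To iterate the split lemma along $M \PRed^{n} N$ you need a separate commutation lemma of the shape $\nhPRed \cdot \hredb \subseteq \hredb^{+} \cdot \PRed$ (head steps can be anticipated through internal parallel steps), proved by its own induction; this is the other half of Takahashi's argument and carries a substantial part of the work. With these two repairs your proof goes through for the reduction actually used in the paper.
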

}

\subsection{A strategy for finitary  normalization in CbV and CbN \texorpdfstring{$\lambda$}{lambda}-calculus} 
\label{sec:normalization}
We revisit normalization for $\lam$-calculus---uniformly for  CbV and CbN ---and define a  strategy which is well-suited to be  extended to  probabilistic $\lam$-calculi, and  to asymptotic normalization. It  supports non-deterministic head and weak reduction
(as needed in  the probabilistic case) and breadth-first evaluation of redexes (as needed to deal with infinitary reduction graphs).

\SLV{}{\paragraph{From  Surface Normal Forms to Normal Forms}} 
\SLV{We call \emph{surface reduction} weak reduction in CbV and  head reduction in CbN, because they only fire redexes at \emph{depth 0},
where in CbV the  depth of a redex $R$ is  the number  of \emph{abstractions} in which $ R $ is nested, and  in CbN is the number of arguments.}{}
\SLV{}{We call head reduction in CbN and weak reduction in CbV \emph{surface}, because they only fire redexes at \emph{depth 0}, where  the CbN (resp. CbV) depth  of a redex $ R $ in a term is the number of \emph{arguments } (resp. \emph{abstractions}) in which $ R $ is nested.
If the language of terms includes operators, each nesting in the scope of an operator also increases the depth, matching the intuition that the evaluation of a term $\opp{p_1,\dots, p_k}$ first performs the operation $\op$.
Fixed a surface reduction $\sred$, we call \emph{surface normal forms } its normal forms, and write $\Snf$ for their set. 
Note that in CbV the values are exactly  the surface normal form of closed terms, but in general  $  \Val \subsetneq \Snf $ (for example $(x(\lam z.\Delta z)) (\lam y.\Delta I) \in \Snf$).
When interested in normal forms, we cannot restrict the attention only to closed terms, because  the evaluation process will need to act  in the scope of abstraction. For example, the  body $\Delta z$ of $ (\lam z.\Delta z)  $ is an open term.}%
%
Normal forms for $\beta$ and $\betav$ can be computed by iterating surface reduction in a suitable way, as~we~show~below.
	\SLV{}{Head and weak normal forms can so be seen as  "stage one"  in the process leading to the full normal form.}

\paragraph{Normalizing strategies.}

In $\Lambda^\cbn$, a paradigmatic normalizing strategy is leftmost-outermost reduction. 
It can be described as: first apply head reduction $\hredb$ until  \hnf,   and then  iterate the process, in left-to-right order.   
Normalization in $\Lambda^\cbv$ is less established: one can iterate $\lredbv$ left to right 
 (as in Plotkin's   standard reduction \cite{PlotkinCbV}), but also iterate  $\xredx{\rsym}{\betav}$  right to left, as in  
 Gr{\'{e}}goire  and Leroy's implementation \cite{GregoireL02}.
In all cases, once  a head or weak normal form is reached (think of  $xM_1\dots M_k$ in CbN) no interaction is possible among the subterms $M_i, \dots, M_k$, so in fact  the process can be iterated in any \emph{arbitrary order}.

We define  
a rather liberal normalizing 
strategy, 
\emph{uniformly} for CbN and CbV, and \emph{parametrically} in the choice of surface reduction. Unlike leftmost-outermost reduction, which is sequential and \emph{inherently depth-first},  the \emph{unbiased} reduction $\lsred$ is non-deterministic in the choice of the outermost redex, and can  support a breadth-first reduction policy.
%
	It  persistently performs surface steps, as long as it is possible, and then iterates the process in the subterms, in 	{arbitrary} order.
	



\begin{Def}[Unbiased iteration of surface reduction]\label{def:liberal} \label{def:U}
	Given 
	   $(\Lambda_\OpSet, \red)$, where 
	   $\red$ is the contextual closure of  a rule 
	   $\betab\in \{\beta,\betav\}$, let  $\sred \subseteq\, \red$ be as follows:
\[ \sred \eq \hred 	\mbox{ if } \bbeta=\beta \text{ (CbN) }      \quad  \quad 
 \sred\in \{\wred,\lred,\xredx{\rsym}{}\}  	\mbox{ if } \bbeta=\betav \mbox{ (CbV) }.\]
The  relation $\lsred\, \subseteq\, \red$ is inductively defined as follows:
	\begin{itemize}
		\item if $M\sred M'$ then $M\lsred M'$;
		
			\item if $M \not\! \sred$ then $M \lsred M'$ is defined according the rules below.
	\end{itemize}
{\small$	 \infer{(\lam x.P) \lsred (\lam x.P')}{P \lsred P'} \quad
			\infer{PQ \lsred P' Q }{P \lsred P'}\quad
			\infer{PQ \lsred PQ'}{ Q \lsred Q'   } \quad
			 	\infer{\opp {P_1, \dots, P_i, \dots, P_k} \lsred \opp {P_1, \dots, P'_i, \dots, P_k}}{ P_i \lsred P'_i  } $}

\vskip 8pt
\noindent The same  definition of $\lsred\, \subseteq\, \red$ still applies if  $\red$ is the contextual closure of  
$\Root \betab \cup   \Root\Rule $, 
\Ie of the  rule $\Root \betab$ extended  with  some other rule $\Root\Rule$ on $\Lambda_\OpSet$.

\end{Def}

We study $\lsredbb$. It is RD-diamond (see \Cref{fact:diamond}) and   is a normalizing strategy  for both  
CbN and CbV $\lam$-calculi. Note that in CbN,  $\lsredb$ \emph{subsumes} usual  \emph{leftmost-outermost reduction}.
\begin{proposition}[$\los$-Factorization]\label{thm:Ufactorization} 
	Let $\betab\in \{\beta,\betav\}$. 
	\begin{equation*}\tag{$\los$-Factorization }
		M\redbb^*  N  \timplies M \lsredbb^* \cdot  \nlsredbb^* N
	\end{equation*}
\end{proposition}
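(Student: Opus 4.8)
The plan is to derive $\los$-factorization from the surface factorization results already in hand---head factorization when $\betab=\beta$ and weak factorization when $\betab=\betav$ (both valid also in presence of operators)---by an induction that strips off one ``surface block'' at a time. Write $\sred$ for the chosen surface reduction ($\hred$ in CbN, one of $\wred,\lred,\xredx{\rsym}{}$ in CbV), and recall from \Cref{def:U} the dichotomy built into $\lsredbb$: if a term has a $\sred$-redex, then its $\lsredbb$-steps are exactly its $\sredbb$-steps, while if it is $\sred$-normal, its $\lsredbb$-steps are those obtained by descending into a proper subterm and iterating. In particular, every $\sredbb$-step is a $\lsredbb$-step, and every $\nsredbb$-step of a term that still possesses a $\sred$-redex is a $\nlsredbb$-step (being a $\redbb$-step that is not a $\sredbb$-step, hence not a $\lsredbb$-step).

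First I would record a persistence lemma: a $\nsredbb$-step neither creates nor destroys a $\sred$-redex, since reducing below a $\lambda$, or off the relevant spine, cannot affect the redexes in surface position; hence along any $\nsredbb^*$-reduction both ``being $\sred$-normal'' and ``having a $\sred$-redex'' are preserved. Then, given $M\redbb^* N$, apply surface factorization to obtain $M\sredbb^* M_1 \nsredbb^* N$; the prefix is already a $\lsredbb^*$-reduction. If $M_1$ is not $\sred$-normal, the persistence lemma makes the suffix $M_1\nsredbb^* N$ a $\nlsredbb^*$-reduction, and we are done. If $M_1$ is $\sred$-normal, then all remaining steps of $M_1\redbb^* N$ take place inside proper subterms $P_1,\dots,P_k$ of $M_1$ (the arguments of the head variable/operator in CbN; the bodies of the surface-position abstractions, and any stuck arguments, in CbV), so $N$ is obtained by replacing each $P_i$ with some $N_i$ with $P_i\redbb^* N_i$. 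Applying the induction hypothesis to each $P_i\redbb^* N_i$ yields $P_i\lsredbb^* Q_i \nlsredbb^* N_i$; lifting these through the context clauses of \Cref{def:U} (legal because $M_1$ stays $\sred$-normal, so its $\lsredbb$- and $\nlsredbb$-steps act inside the $P_i$), performing the $\lsredbb$-phases of the various subterms one after another and then the $\nlsredbb$-phases, produces $M_1\lsredbb^* \cdot \nlsredbb^* N$, which we concatenate with $M\sredbb^* M_1$. The induction is on the lexicographically ordered pair (length of $M\redbb^* N$, size of $M$): the surface-nonempty case strictly shortens the reduction, the $\sred$-normal case keeps the length and strictly shrinks the term.

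The crux---the step I expect to need the most care---is this reassembly: one must verify that interleaving $\lsredbb$-reductions carried out in disjoint subterms yields a genuine $\lsredbb$-reduction of the whole term, i.e.\ that each lifted step still fires a minimal-depth redex, and symmetrically for the $\nlsredbb$-phases; this is bookkeeping with the surface-block depth, but it is more delicate in CbV, where surface normal forms are far less rigid than head normal forms (e.g.\ $(x(\lambda z.\Delta z))(\lambda y.\Delta \id)$, and the left/right weak variants). The RD-diamond property of $\lsredbb$ stated just after the proposition helps here, as it makes the order of disjoint $\lsredbb$-steps immaterial. An alternative to the recursive reassembly is to prove instead the one-step postponement $\nlsredbb\cdot\lsredbb \subseteq \lsredbb^+\cdot\nlsredbb^*$ (the linear form $\nlsredbb\cdot\lsredbb\subseteq\lsredbb\cdot\nlsredbb^*$ fails, e.g.\ on $(\lambda x.xx)((\lambda a.a)b)$) and then sort a generic $\redbb^*$-reduction into $\lsredbb^*\cdot\nlsredbb^*$ shape, using a parallel version of $\nlsredbb$ to ensure the sorting terminates; the cases of this postponement are again governed by the persistence lemma and surface factorization.
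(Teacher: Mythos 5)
Your proof follows the paper's own argument almost exactly: apply surface factorization to get $M \sredbb^* U \nsredbb^* N$, note that the prefix is already a $\lsredbb^*$-sequence, split on whether $U$ is $\sred$-normal, use the persistence of surface-normality (and of surface-reducibility) along $\nsredbb$-steps---this is precisely \Cref{lem:snf}---to conclude directly in the reducible case, and in the $\sred$-normal case decompose $N$ by shape, recurse on the subterm reductions $P_i \redbb^* N_i$, and reassemble by running the $\lsredbb$-phases of the disjoint subterms first and the $\nlsredbb$-phases afterwards. Your identification of the reassembly as the delicate point, and the observation that a surface-normal term stays surface-normal so the context clauses of \Cref{def:U} keep applying, is exactly what the paper's detailed case analysis verifies.

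The one thing that does not work as stated is your induction measure. You induct on the pair (length of $M\redbb^* N$, size of $M$), claiming the $\sred$-normal case ``keeps the length and strictly shrinks the term''. Neither half holds: the recursive calls are on $P_i \redbb^* N_i$ where $P_i$ is a proper subterm of $U$, the surface-factorized intermediate, not of $M$, and $U$ may be much larger than $M$; and the length of each $P_i\redbb^* N_i$ is bounded only by the length of the $\nsredbb^*$-suffix produced by surface factorization, which in general exceeds the length of the original reduction (factorization does not preserve length for $\beta$: compare $(\lambda x.xx)(\id z)\redb(\lambda x.xx)z\redb zz$ with its head-first rearrangement, which takes three steps). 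Moreover the case you say ``strictly shortens the reduction'' performs no recursive call at all. The repair is immediate and is what the paper does: induct structurally on the term $N$; every recursive call is on a proper subterm $N_i$ of $N$. With that substitution your argument is the paper's proof.
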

%
\begin{restatable}{proposition}{Uproperties}\label{lem:Uproperties}With the same assumptions as in \Cref{def:liberal}, let $\betab\in \{\beta,\betav\}$. Then:
\begin{enumerate}
	\item 
	$\lsredbb$  is    RD-diamond.
	\item 
	$\lsredbb$ has the same normal forms as $\redbb$. 
	\item 
	Let $N$ be  $\betab$-normal. $M \redbb\tr N$ implies $ M\lsredbb\tr  N $.
\end{enumerate}
\end{restatable}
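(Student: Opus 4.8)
The three items will be proved in this order, since item~3 relies on items~1 and~2 together with $\los$-Factorization (\Cref{thm:Ufactorization}). Throughout, recall from \Cref{def:liberal} that the $\lsredbb$-steps of a term $M$ are its surface $\betab$-steps when $M$ is surface-reducible, and are the descent steps into its immediate subterms (the body of an abstraction, an argument of an application, an argument of $\op$) when $M$ is surface-normal. A recurring auxiliary fact, provable by a straightforward induction on $M$ from the grammatical description of $\Snf$ (head normal forms in CbN; the weak normal forms of \Cref{sec:weak} in CbV), is that $\Snf$ is closed under $\lsredbb$: if $M \in \Snf$ and $M \lsredbb M'$ then $M' \in \Snf$. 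This is precisely what licenses iterating the descent rules along a $\lsredbb$-sequence, and it is used in all three parts.

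For item~1, I would prove the diamond-like property \emph{RD-diamond} of \Cref{fact:diamond} by structural induction on $M$, analyzing a pair of steps $M \lsredbb M_1$ and $M \lsredbb M_2$. If $M$ is surface-reducible then, by \Cref{def:liberal}, both steps are $\sred$-steps, and the claim follows from RD-diamond of $\sred$ itself (recalled in \Cref{sec:weak} for $\wredbv$ and in \Cref{sec:head} for $\hredb$; trivial for the deterministic $\lredbv$ and $\xredx{\rsym}{\betav}$), using $\sred \subseteq \lsredbb$. If $M$ is surface-normal then both steps descend into immediate subterms: if they descend into the same subterm the diagram closes by the induction hypothesis on that strictly smaller subterm, and if they descend into distinct subterms the steps are disjoint and commute outright; in either case the reducts are again surface-normal by the auxiliary fact, so the descent rules apply to them as needed.

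For item~2, the inclusion $\lsredbb \subseteq \redbb$ (\Cref{def:liberal}) gives immediately that every $\redbb$-normal term is $\lsredbb$-normal. Conversely I would show by structural induction that an $\lsredbb$-normal $M$ is $\redbb$-normal: such an $M$ is surface-normal (otherwise a surface step would be an $\lsredbb$-step), and then the descent rules force every immediate descent subterm of $M$ to be $\lsredbb$-normal, hence $\redbb$-normal by the induction hypothesis; finally the shape of a surface normal form (in CbN, $\lam x_1\dots x_n.\, y\, M_1\cdots M_k$; similarly for the CbV weak normal forms) shows that such a term with all descent subterms $\redbb$-normal has no $\betab$-redex at all.

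For item~3, given $M \redbb^* N$ with $N$ $\betab$-normal, $\los$-Factorization (\Cref{thm:Ufactorization}) yields $M \lsredbb^* L \nlsredbb^* N$, so it remains to show the $\nlsredbb$-tail is empty, i.e.\ $L = N$. By item~2, $N$ is $\lsredbb$-normal, hence $\redbb$-normal, so it suffices to check that $\nlsredbb$-steps preserve $\redbb$-reducibility: then $L \redbb^+ N$ with $N$ normal is impossible unless $L = N$. For this last point, if the source is surface-reducible, a $\nlsredbb$-step is a non-surface step which cannot erase a surface redex, so the target stays surface-reducible; if the source is surface-normal but $\redbb$-reducible, then — using item~2, by induction on the descent subterm carrying the redex, and the observation that a surface-normal term $\redbb$-reducing in one step to a normal form does so by an $\lsredbb$-step — a $\nlsredbb$-step to a $\betab$-normal form is impossible, so the target is $\redbb$-reducible. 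Composing these, $L \nlsredbb^* N$ forces $L = N$, whence $M \lsredbb^* N$. I expect this last analysis — pinning down what non-surface ($\nlsredbb$) and descent ($\lsredbb$) steps can do to redexes — to be the main obstacle, since it is where the concrete shapes of CbN/CbV normal forms and the position of the fired redex genuinely enter; everything else is bookkeeping via structural induction, $\los$-Factorization, and the already-recalled RD-diamond of $\sred$.
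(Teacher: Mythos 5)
Your proposal is correct and follows essentially the same route as the paper: item 1 by structural induction reducing to RD-diamond of surface reduction (with the closure of surface normal forms under descent steps justifying the commuting case), item 2 by the same two inclusions, and item 3 by $\los$-factorization plus the key fact that the target of a $\nlsredbb$-step is never $\betab$-normal, which the paper isolates as a separate lemma proved by induction on the shape of the source (handling the surface-reducible and surface-normal cases exactly as your two bullets do). The only cosmetic difference is that you inline that lemma rather than stating it separately.
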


 Normalization for both CbN and CbV follows from the  points above.
\begin{theorem}[Normalization]\label{thm:Unormalization} For $\betab\in \{\beta,\betav\}$, 
		$\lsredbb$ is a normalizing strategy for $\redbb$.
\end{theorem}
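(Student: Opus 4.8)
The plan is to read the theorem directly off the definition of \emph{normalizing strategy}, feeding in the three clauses of \Cref{lem:Uproperties} and the Random Descent machinery of \Cref{fact:diamond}. Recall that $\lsredbb$ is a normalizing strategy for $\redbb$ precisely when: (a) $\lsredbb \subseteq \redbb$; (b) $\lsredbb$ has the same normal forms as $\redbb$; and (c) whenever $M$ has a $\redbb$-normal form, every maximal $\lsredbb$-sequence from $M$ ends in a $\redbb$-normal form. Clause (a) is immediate from \Cref{def:liberal}, whose inductive rules close under contexts a relation already contained in $\red$, so $\lsredbb$ is a one-step reduction inside $\redbb$; clause (b) is exactly \Cref{lem:Uproperties}.2. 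Hence the whole content of the theorem is clause (c).

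To establish (c), suppose $M$ has a $\redbb$-normal form $N$ (otherwise there is nothing to prove). First I would apply \Cref{lem:Uproperties}.3: since $M\redbb^* N$, also $M\lsredbb^* N$; and $N$, being $\redbb$-normal, is $\lsredbb$-normal by clause (b), so this is a \emph{finite} maximal $\lsredbb$-sequence from $M$. Next I would invoke \Cref{lem:Uproperties}.1 — $\lsredbb$ is RD-diamond — together with \Cref{fact:diamond}, to conclude that $\lsredbb$ enjoys Random Descent: all maximal $\lsredbb$-sequences from $M$ have the same length and, if finite, end in the same element. Since we exhibited one finite maximal $\lsredbb$-sequence (of length the number of steps reaching $N$), every maximal $\lsredbb$-sequence from $M$ is finite and cofinal at the same point, which is therefore $N$, a $\redbb$-normal form. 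This is exactly clause (c).

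I expect no serious obstacle at this level: all the technical weight has been pushed into \Cref{thm:Ufactorization} and \Cref{lem:Uproperties} (factorization and RD-diamond for $\lsredbb$, uniformly for $\beta$ and $\betav$), which we may assume. The two points needing a little care are (i) combining ``one finite maximal sequence'' with Random Descent so as to force \emph{all} maximal $\lsredbb$-sequences from $M$ to be finite and to share their endpoint, and (ii) transferring ``ends in a $\lsredbb$-normal form'' to ``ends in a $\redbb$-normal form'' via clause (b). Finally, I would note that the argument is uniform in $\betab\in\{\beta,\betav\}$ and in the chosen surface reduction $\sred$ (namely $\hred$ in CbN, and one of $\wred$, $\lred$, $\xredx{\rsym}{}$ in CbV), since it only consumes \Cref{lem:Uproperties} and \Cref{fact:diamond}, which are stated at that generality — this is what makes the ``for CbN and CbV'' part of the statement come for free.
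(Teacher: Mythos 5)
Your proof is correct and follows essentially the same route as the paper's: the paper likewise derives \Cref{thm:Unormalization} from \Cref{lem:Uproperties}, using point~3 (completeness w.r.t.\ normal forms) to exhibit one terminating $\lsredbb$-sequence and point~1 (RD-diamond, via \Cref{fact:diamond}) to conclude that all maximal $\lsredbb$-sequences behave the same. Your write-up merely makes explicit the bookkeeping (clauses (a)--(c) of the definition and the transfer of normality via point~2) that the paper leaves implicit.
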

\SLV{}
{\begin{theorem}[Normalization] \hfill\\
	\emph{CbN:}
	$\lsredb$ is a normalizing strategy for $\redb$\\
	\emph{CbV:}	$\lsredbv$ is a normalizing strategy for $\redbv$.
\end{theorem}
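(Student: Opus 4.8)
The plan is to verify directly the three defining conditions of a normalizing strategy for $\redbb$ (in the sense of \Cref{sec:rewriting}), with essentially all the work already packaged in \Cref{lem:Uproperties} and \Cref{fact:diamond}. First, ${\lsredbb}\subseteq{\redbb}$ and $\lsredbb$ is a one-step relation: by \Cref{def:liberal} every $\lsredbb$-step is either a single surface step (hence a single $\redbb$-step) or the closure of a single $\lsredbb$-step under one of the listed term constructors. Second, point~2 of \Cref{lem:Uproperties} states that $\lsredbb$ and $\redbb$ have the same normal forms. So $\lsredbb$ is a one-step strategy for $\redbb$, and only the normalization clause remains.

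For the normalization clause, suppose $M$ has a $\redbb$-normal form, i.e.\ $M\redbb^* N$ for some $\redbb$-normal $N$. By point~3 of \Cref{lem:Uproperties} there is a \emph{finite} reduction $M\lsredbb^* N$; since ${\lsredbb}\subseteq{\redbb}$, $N$ is also $\lsredbb$-normal, so $M\lsredbb^* N$ is a maximal $\lsredbb$-sequence from $M$. Now $\lsredbb$ is RD-diamond by point~1 of \Cref{lem:Uproperties}, so by \Cref{fact:diamond} it enjoys Random Descent: all maximal $\lsredbb$-sequences from $M$ have the same length and, if finite, end in the same element. Since one of them is finite, every maximal $\lsredbb$-sequence from $M$ is finite and ends in $N$, which is $\redbb$-normal. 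This is exactly the normalization condition, so $\lsredbb$ is a normalizing strategy for $\redbb$. The argument is uniform in $\betab$: it applies with $\betab=\beta$ and $\sred=\hred$ (CbN), and with $\betab=\betav$ and $\sred$ any of $\wred,\lred,\xredx{\rsym}{}$ (CbV), since \Cref{lem:Uproperties} and \Cref{def:liberal} are stated for all these cases.

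I expect no real obstacle at this stage; the substance has been offloaded into \Cref{lem:Uproperties} (and, through its proof of point~3, into the factorization theorem \Cref{thm:Ufactorization}). The only point demanding a little care is the passage from ``\emph{some} maximal $\lsredbb$-sequence from $M$ is finite and ends in $N$'' to ``\emph{every} maximal $\lsredbb$-sequence from $M$ does'': this is precisely where Random Descent (via the RD-diamond property) is indispensable and where mere confluence or uniqueness of normal forms would not be enough, since those do not rule out diverging $\lsredbb$-sequences from a term that has a normal form.
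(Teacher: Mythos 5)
Your proof is correct and follows essentially the same route as the paper: the paper also derives the theorem directly from \Cref{lem:Uproperties}, combining point~3 (completeness with respect to normal forms) with point~1 (RD-diamond, hence Random Descent and uniform normalization via \Cref{fact:diamond}). Your write-up merely spells out the paper's one-line argument in more detail, correctly identifying Random Descent as the step that upgrades ``some maximal $\lsredbb$-sequence reaches $N$'' to ``every maximal $\lsredbb$-sequence reaches $N$''.
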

}
\SLV{}{
\paragraph*{Leftmost-outermost and Optimality.}
Note that in CbN,  $\lsredb$ \emph{subsumes} usual  \emph{leftmost-outermost reduction}.
\footnote{Still, it carry also  all the advantages of non-deterministic $\hred$ which we have recalled in \Cref{sec:NDE}} $\loredb$. 
	The RD property implies that
	\SLV{if $\ered \subseteq \lsred$ is \emph{another strategy}, $\ered$ is normalizing if and only if $\lsred$ is normalizing. Moreover,  the cost of computing a \nnf is the same.}{ if a term $M$ has normal form $N$, then every  $\lsred$-sequences from $M$ end in $N$, in the same number of steps. Hence in particular,if $\ered \subseteq \lsred$ is \emph{another strategy}, $\ered$ is normalizing if and only if $\lsred$ is normalizing. Moreover, there is no difference in the cost of computing a normal form, in term of performed steps.  }
}	
\SLV{}{
	\paragraph*{Factorization}
	We observe here also a property of factorization which we will use later
	\begin{prop}[$\out$-Factorization]\label{prop:out-factorization}	Given $(\LambdaOp, \redbb)$, where  $\redbb\in \{\beta,\betav\}$, let  $\lsredbb$ be as in \Cref{def:liberal}, Point 1.
		Then:
		\[M \redbb^* U  ~~\mbox{ implies } ~~   M \lsredbb^* \cdot \nlsredbb^* U  \]
	\end{prop}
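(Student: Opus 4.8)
The plan is to derive $\out$-factorization by postponing every non-essential step after the essential ones, reducing the global statement to a single local swap together with a standard induction on reduction length. Write $e \defeq {\lsredbb}$ for the essential (unbiased) steps and $i \defeq {\nlsredbb}$ for the internal ones. First I would record the trivial inclusion $\sredbb \subseteq \lsredbb$, so surface steps are already essential. The engine is a \emph{peeling} induction on the length of $M \redbb^* U$: if the first step is essential I prepend it and apply the induction hypothesis to the tail; if the first step is internal, $M \mathrel{i} M' \redbb^* U$, the hypothesis factors the tail as $M' \mathrel{e^*} V \mathrel{i^*} U$, and it remains to commute the single internal step past the block $e^*$, i.e.\ to turn $M \mathrel{i} M' \mathrel{e^*} V$ into $M \mathrel{e^*}\cdot\mathrel{i^*} V$. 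Thus everything rests on a local postponement lemma for $i$ against $e$.

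The core lemma I would prove is the linear swap
\[ \nlsredbb \cdot \lsredbb \ \subseteq \ \lsredbb^+ \cdot \nlsredbb^* , \]
by case analysis on the relative positions of the two contracted redexes, using surface factorization (weak factorization in CbV, head factorization in CbN) to settle the top layer and the induction hypothesis on proper subterms for the deeper cases. When the two redexes lie in disjoint subterms they commute verbatim, giving one $e$-step then one $i$-step. The delicate situation is when the internal redex sits strictly below the essential one: here the inductive clauses of \Cref{def:liberal} force the essential redex to be at surface level (if $M$ admits any surface redex, its only essential steps are surface steps), so the essential step is a $\betab$-contraction at depth $0$, and surface factorization governs precisely how the internal redex inside an abstraction body (CbV) or inside an argument (CbN) is relocated once that surface redex is fired.

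The main obstacle is that this swap is genuinely \emph{not} a strong postponement of the form $\nlsredbb \cdot \lsredbb \subseteq \lsredbb \cdot \nlsredbb^*$: firing the surface redex can both duplicate (or erase) the postponed internal redex and \emph{promote} it to an essential step, since a redex that was internal under an abstraction becomes a surface redex once that abstraction is consumed. For example $(\lam x.\id z)V \nlsredbb (\lam x.z)V \lsredbb z$ postpones to $(\lam x.\id z)V \lsredbb \id z \lsredbb z$, i.e.\ two essential steps and no internal step. Hence the right-hand side must allow $\lsredbb^+$, and the global induction cannot close by a naive iteration of the swap, because each swap may increase the number of essential steps. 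I would resolve this in the style of the abstract essential/internal factorization framework underlying \Cref{thm:ACompl}: equip reductions with a well-founded measure (a lexicographic/multiset order on the depths and lengths of the out-of-order internal steps) that the linear swap strictly decreases, so that repeatedly swapping the leftmost internal-then-essential inversion terminates and yields the essential-first form. The value restriction in CbV (internal steps occur only inside abstraction bodies or beside non-value arguments) and the structural determinacy of the head context in CbN are exactly what keep the creation of new essential redexes controlled, so that each subterm case falls under the induction hypothesis.

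A self-contained alternative that avoids this measure bookkeeping is the Takahashi parallel-reduction method: define a parallel surface step and a parallel internal step, show that a $\redbb$-step followed by a parallel essential step is simulated by a parallel essential step followed by parallel internal steps, and conclude factorization of $\redbb^*$ by the usual parallel-to-sequential passage. In either route the only external input is surface factorization, which encapsulates the delicate top-level redex analysis, while the iteration into subterms is handled uniformly by structural induction following the clauses of \Cref{def:liberal}.
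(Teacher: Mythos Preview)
Your approach is workable but takes a quite different route from the paper. You argue by induction on the length of $M \redbb^* U$ and reduce everything to a local swap $\nlsredbb \cdot \lsredbb \subseteq \lsredbb^+ \cdot \nlsredbb^*$, correctly noting that the swap is not length-preserving (your example is right) and therefore needs either a well-founded termination measure or a Takahashi-style parallel detour. The paper instead uses \emph{Mitschke's argument}: structural induction on the target term $U$, with surface factorization as a black box. From $M \redbb^* U$ one first applies surface factorization to get $M \sredbb^* T \nsredbb^* U$; since non-surface steps preserve the shape of terms, every term in $T \nsredbb^* U$ has the same top-level shape as $U$. If $T$ still has a surface redex then (by propagation of surface-reducibility along $\nsredbb$) every term in that tail does, so every $\nsredbb$ step is in fact a $\nlsredbb$ step and we are done. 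Otherwise $T$ is surface-normal, hence decomposes according to the shape of $U$, and the induction hypothesis applies to each immediate subterm.

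What each approach buys: the paper's structural induction completely sidesteps the termination bookkeeping you identify as the main obstacle, because the measure is simply the size of the target term and all the delicate redex-creation phenomena are absorbed into the single call to surface factorization. Your route is the general abstract-factorization machinery (local commutation plus a decreasing measure, or parallel moves), which is more portable but here requires you to actually exhibit and verify the measure---something you only sketch. Since surface factorization is already available and does all the heavy lifting, the Mitschke argument is the shorter and more transparent proof in this concrete setting.
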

\SLV{}{	\begin{proof}
		Again, by  induction on $U$, using surface factorization and shape preservation.
	\end{proof}
}
}

\paragraph{Depth-first vs Breadth-first.}
 Leftmost-outermost reduction fires redexes  in a  depth-first way. 
 Instead, 
 $\lsred$ evaluates in a breadth-first  style, which is more suitable to deal with possibly infinitary reductions. 
For example, in CbN think  of $z(\Delta\Delta)(\dz\dz)$. Leftmost-outermost reduction never leaves the redex $\Delta\Delta$, while $\lsred$ can also fire $ (\dz\dz) $    yielding  $z(z(z \dots))$.

\paragraph{A parallel variant.} Once a term is  $\sred$-normal,  the process can be iterated  in any arbitrary order, or in \emph{parallel}. 
\SLV{Parallel  (multi-step) reduction $ \xredx {\pd}{}$ is easily defined (\Cref{sec:parallelU}).}{
	Given $(\Lambda,\red)$ and $\sred$ as  in \Cref{def:liberal}, the parallel  (multi-step) reduction $ \xredx {\pd}{}$ is easily defined. 
\begin{itemize}
	\item If  $M\sred{} M'$ then $M \xredx {\pd}{} M'$ ($M$ surface redex)
	\item If $M \not \red$ then $M \xredx{\pd}{\surf} M$ ($M$ \nnf)
	\item Otherwise,  for  $M $ which is $\lam x. P, \  P_1P_2 $, or $  \opp{ P_1 \dots P_k}$:
{\small \begin{gather*}
\infer{\lam x. P \xredx{\pd}{}   \lam x.P'}{P \xredx{\pd}{} P'} \quad
\infer{P_1P_2 \xredx{\pd}{}   P_1'P_2'}{ P_1\xredx{\pd}{}   P_1' & P_2\xredx{\pd}{}   P_2'}\quad
\infer{\opp{ P_1 \dots P_k}\xredx{\pd}{}   \opp{ P_1' \dots P_k'}}{ (P_i\xredx{\pd}{}  P_i')_{\ik}} 
\end{gather*}}
\end{itemize}


The parallel $\xredx{\pd}{}$ is deterministic, and it is guaranteed to reach the $\red$-\nnf, if any exists.
\begin{prop}[$\Nnf$-completeness] Let $\betab\in\{\beta,\betav\}$ and $N$ $\betab$-normal.  ~$M\redbb^*N$  if and only if $M\xredx{\pd}{\betab}^*N $
\end{prop}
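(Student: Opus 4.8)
The plan is to prove the two implications of $M\redbb^*N \iff M\xredx{\pd}{\betab}^*N$ separately, exploiting that $\xredx{\pd}{}$ is a multi-step strategy built by iterating the parallel closure of surface reduction, together with the factorization and normalization results already established for $\lsredbb$.

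\textbf{From $\xredx{\pd}{}$ to $\redbb$.} This direction is the easy one. A single $\xredx{\pd}{\betab}$-step is, by its inductive definition, obtained from surface redexes contracted in parallel in disjoint subterm positions, hence $M \xredx{\pd}{\betab} M'$ implies $M \redbb^* M'$ (a straightforward induction on the derivation of $M \xredx{\pd}{\betab} M'$: the base cases $rdx$ and the reflexive-on-$\nnf$ case are immediate, and the $app$, $\lam$, and $\op$ cases concatenate the reductions obtained from the premises, which act on disjoint positions). Iterating, $M\xredx{\pd}{\betab}^*N$ gives $M\redbb^*N$; if moreover $N$ is $\betab$-normal this is what we want.

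\textbf{From $\redbb$ to $\xredx{\pd}{}$.} Here I would go through $\lsredbb$. By $\los$-Factorization (\Cref{thm:Ufactorization}) together with \Cref{lem:Uproperties}(3), if $N$ is $\betab$-normal and $M\redbb^*N$ then $M\lsredbb^*N$. So it suffices to show that any $\lsredbb$-sequence reaching a $\betab$-normal form can be reorganized into a $\xredx{\pd}{}$-sequence. The key observation is that $\xredx{\pd}{}$ simply fires, in one multi-step, \emph{all} the redexes that $\lsredbb$ would fire one at a time at the current ``frontier'': if $M$ is not surface-normal, one $\xredx{\pd}{}$-step performs a maximal parallel surface reduction, which covers (in the RD-diamond sense, cf.\ \Cref{lem:Uproperties}(1)) all the $\sred$-steps $\lsredbb$ can do at depth $0$; once $M$ is surface-normal, both relations descend into the immediate subterms in lockstep. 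Formally I would argue by induction on the length of the $\lsredbb^*$-sequence from $M$ to $N$, using the RD property of $\lsredbb$ (\Cref{lem:Uproperties}(1)) to permute steps so that all depth-$0$ surface steps come first and are then absorbed into a single $\xredx{\pd}{}$-step, and then invoking the induction hypothesis on each subterm. Since $\xredx{\pd}{}$ is deterministic and the target $N$ is a $\betab$-normal form (hence a $\xredx{\pd}{}$-normal form, by the reflexive-on-$\nnf$ clause), this produces $M\xredx{\pd}{\betab}^*N$.

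\textbf{Main obstacle.} The delicate point is the reorganization in the second direction: one must be careful that firing \emph{all} surface redexes in parallel does not create or destroy the opportunities that the sequential $\lsredbb$ exploited later, and that the normal form reached is genuinely the same $N$. This is exactly where the RD-diamond property of $\lsredbb$ (and of $\sred$) does the work — it guarantees that the choice and grouping of steps is irrelevant both for the reachability of $N$ and, if needed, for step counting. I expect the bookkeeping of ``depth $0$ steps first, then recurse'' to be the only part requiring genuine care; everything else is routine induction on reduction length or on derivation of $\xredx{\pd}{}$.
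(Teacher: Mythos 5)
Your overall route coincides with the paper's: the $\Leftarrow$ direction is an easy induction on the derivation, and the $\Rightarrow$ direction reduces $\redbb^*$ to an organized reduction via factorization and then recurses into subterms, padding the parallel sub-sequences using the clause that makes $\xredx{\pd}{\betab}$ reflexive on normal forms. The paper proves the hard direction by induction on the normal form $N$: surface factorization gives $M\sredbb^* U \nsredbb^* N$, non-surface steps preserve the shape of terms, so $U$ has the same shape as $N$, and one concludes by the induction hypothesis on the subterms of $N$ (see also \Cref{lem:parallelU}, whose point 2 is the general form of your ``reorganization'' step).

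Two points in your reorganization step need repair. First, the induction measure: inducting on the length of the $\lsredbb$-sequence does not go through on its own, because once the term is surface-normal all remaining steps recurse into subterms, and a single subterm may absorb the entire remaining sequence, so its sub-sequence need not be shorter. The well-founded measure is the structure of the normal form $N$, as in the paper. Second, Random Descent is not where the work is, and the paper never invokes it here: no permutation of steps is needed, since by \Cref{def:liberal} a $\lsredbb$-step is forced to be a surface step whenever one is available, and non-surface steps preserve both surface-normality and the shape of terms (\Cref{lem:snf}); these facts, together with surface factorization (\Cref{thm:Ufactorization} and \Cref{lem:Uproperties}), already yield the required organization. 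Relatedly, by the definition in \Cref{sec:parallelU} a $\xredx{\pd}{\betab}$-step on a term that is not surface-normal is a \emph{single} surface step, not a maximal parallel contraction, so the frontier phase of your argument is even more direct than you describe. With the induction re-based on $N$, your proof is correct and essentially the paper's.
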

\SLV{}{\begin{proof}$ \Leftarrow $:  immediate.  $\Rightarrow$: by induction on $N$. The claim is an 
easy consequence of Surface Factorization and the fact that non-surface step preserve the shape of terms.  By the former $M\sredbb^* U \nsredbb^* N$.  By the latter, $U$ has the same shape as $N$; we then  examining the possible cases, and   conclude by \ih.
\end{proof}
}
}


\renewcommand{\Nnfv}{\Nnf}

\section{Probabilistic \texorpdfstring{$\lambda$}{lambda}-calculi and Asymptotic Normalization}\label{sec:CbVproba}\label{sec:proba}\label{sec:PLambda}
A standard way to model \emph{probabilistic  choice} (a fair coin) is by means of  a binary  operator $\oplus$.  
We write $M \oplus N$ for $\oplus(M,N)$.  
Intuitively, $M\oplus N$   reduces to either $M$ or $N$,
\emph{with equal probability} $\two$.
Reduction is then defined not simply on terms but on (monadic) structures representing probability distributions over terms.
Here we  follow \cite{FaggianRonchi}, which defines both a CbV and a CbN calculus 
$\PLambda^\cbv$ and $\PLambda^\cbn$, where $\beta$ or $\betav$ reduction are  ``as usual'', so if a term contains no probabilistic operator, it behaves the same as  in the usual $\lam$-calculus (\ie the extension is \emph{conservative}).
Probabilistic reduction instead  needs to be constrained in order to have good  properties such as confluence (see \cite{FaggianRonchi},  and  \cite{deLiguoroP95,LagoZ12} for a  discussion of the issues). 

%
%

\SLV{}{
\begin{example}[Failure of commutations]	
The need for constraints on the probabilistic operator is exemplified by the 
following well-known  counter-example to confluence, which  is a counter-example also to left-to-right standardization in its simpler form, head factorization.
In a CbN setting, 
	consider the duplicator $\Delta$, an effectful term  $ P = (I\oplus \Delta\Delta) $, and the following  two sequences from $\Delta P$:
	{\small \begin{itemize}
			\item $\mset{\Delta P }\nhred \mset{\two \Delta I, \two \Delta (\Delta\Delta)} \hred \mset{ \two II, \two (\Delta\Delta)(\Delta\Delta)}=\m$
			\item  $\mset{\Delta P} \hred   \mset{PP}=\m' $
	\end{itemize}}	
	\emph{Confluence failure}: $\m$ and $\m'$ have no common reduct.
	
		\emph{Head factorization failure}: from $\Delta P$ it is not possible to reach $\m$ performing head steps first.
\end{example}
}
%

\subparagraph{Discrete Probability Distributions.}\label{notation:dist} Given    a countable set $\Omega$, a
function $\mu \colon\Omega\to[0,1]$ is a probability \emph{subdistribution} if 
$\norm \mu := \sum_{\omega\in \Omega} \mu(\omega)\leq 1$ (a \emph{distribution} if $\norm \mu=1$).
Subdistributions allow us to deal with partial results.
We write  $\DST{\Omega}$ for the set
of   subdistributions on $\Omega$,  equipped with the pointwise order on functions:  $\mu \leq \rho$ if
$\mu (\omega) \leq \rho (\omega)$ for all $\omega\in \Omega$. 
$\DST{\Omega}$ has a bottom  element (the subdistribution $\zero$) and maximal elements (all distributions)}.

\subparagraph{Multi-distributions.}\label{sec:multi} 
We use multi-distributions \cite{Avanzini} to syntactically represent distributions, 
A \emph{multi-distribution} $\m=\mdist{p_iM_i}_{ i\in I}$ on the set of terms $\LambdaOp$   is a finite multiset of pairs of the 
form
$pM$, with $p\in]0,1]$,  $M\in \LambdaOp$, and  $\sum_i p_i\leq 1$.
  The set of
all multi-distributions on $\LambdaOp$ is $\MDST \LambdaOp$.
The sum of multi-distributions is noted  $+$.  
The product $q\cdot \m$ of a scalar $q$ and a multi-distribution $\m$ is
defined pointwise $q\mdist{p_iM_i}_{\iI} := \mdist{(qp_i)M_i}_{\iI} $. We   write $\mdist{M}$ for   $\mdist{1M}$.

\paragraph{Syntax.}

%
%


\emph{Terms ($\Lambda_\oplus$)} and  \emph{values} are as in \Cref{sec:lambda}, with the operator $\op$ being here $\oplus$.

\subparagraph{Call-by-Value.} The  \emph{calculus} $\PLambda^\cbv$ is the rewrite system  $(\MPLambda, \Red)$ where   
 $\MPLambda$ is the set of  \emph{multi-distributions} on $\PLambda$ and 
 the relation $\Red\subseteq \MDST{\PLambda}\times \MDST{\PLambda}$ is defined in  
\Cref{fig:reductions} and \Cref{fig:lifting}.  First,  define   one-step reductions from terms to 
multi-distributions---so for 
example, $M\oplus N \red \mdist{\two M, \two N}$. Then,  lift the definition of reduction
to a binary relation on $\MDST{\PLambda}$,  in the natural way---for instance $\mdist{\two (\lam x.x)z, \two (M\oplus 
	N)} \Red $
$ \mdist{\two z, \four  M, \four N}$.
Precisely:
\begin{enumerate}
	\item  The reductions 	
	$\redbv,\redo\subseteq \PLambda \times \MPLambda$ are defined  in 	Fig.~\ref{fig:reductions}.  \emph{Contexts $\cc$ and    $\ww$}  are  as in \Cref{sec:lambda}.
	Note that $\betav$ is closed under arbitrary context, while the $\oplus$ rule---probabilistic choice---is  closed under  weak 
	contexts $\ww$ (no reduction  in the scope of $\lambda$ or $\oplus$).
	We write $\sredbv$ for  the closure of $\betav$ under  context $\ww$.
%
	The relation  $\red$  is  $\redbv \cup \redo$.
%
	\emph{Surface  reduction} is  $\sred \eq \sredbv \cup \redo$.  
	 A $\red$-step which is not surface is noted $\nsred$.

	\item 
	The lifting of a relation
	$\red_r \subseteq  \Lambda_\oplus \times \MDST{\Lambda_\oplus}$ to a reduction on multi-distributions is defined  in Fig.~\ref{fig:lifting}. In particular, 
	$\red, \redbv,\redo, \sred,\nsred$ lift to 
	$\Red,\Redbv,\Redo, \sRed, \nsRed$.  
	
\end{enumerate}
%
A term $M$ is   $\red$-\textbf{normal} if there is no $\m$ such that $M\red\m$.  We also write $M\not\red$. 
We denote by $\Nnfv$ the set of the \textbf{normal forms} of $\red \eq (\redbv\cup \redo)$. 
\begin{figure}
	\fbox{	\footnotesize{ \begin{minipage}{0.41\textwidth}
			\begin{align*}   
			\cc \hole{(\lambda x.M)V } &\redbv \mdist{\cc\hole{ M \subs x V} } 
			\\
			\ww\hole{M\oplus N} &\red_{\oplus} \mdist{\frac{1}{2}\ww\hole M, \frac{1}{2}\ww\hole N}
			\\[4pt]
		 	\red\,:=\, \redbv \!\cup &\redo \quad
			\sred \!:=\! \sredx{\betav} \!\cup \redo 
			\end{align*}
	\caption{$\red$-steps for the calculus $\PLambda^\cbv$  }\label{fig:reductions}
	\end{minipage}}
 	\quad
	{\footnotesize 
	\begin{minipage}[c]{0.52\textwidth}
			$
			\dfrac{}{\mdist{M}\Red \mdist{M}} \quad 
			\dfrac{M\red\m}{\mdist{M}\Red \m} \quad   
			\dfrac{(\mdist{M_i} \Red \m_i)_{\iI} }{ \multiset{p_{i}M_{i}\mid i\in I} \Red  \sum_{\iI} {p_i \!\cdot \m_i}} 
			$  	
			\caption{Lifting of $\red$}\label{fig:lifting}
			\vskip 4pt
			{	$
				\dfrac{M\not \red}{\mdist{M}\full \mdist{M}} \quad 
				\dfrac{M\red\m}{\mdist{M}\full \m}  \quad   
				\dfrac{(\mdist{M_i} \full  \m_i)_{\iI} }{ \multiset{p_{i}M_{i}\mid i\in I} \full  \sum_{\iI} {p_i \!\cdot \m_i}} 
				$}\caption{Full lifting of $\red$ }\label{fig:lifting_full}
			\end{minipage}}}
\end{figure}

\subparagraph{Call-by-Name.} The  calculus  $\PLambda^\cbn$  is defined in a similar way, by replacing $\betav$ with $\beta$ and weak contexts with head contexts $\hh$ (as defined in \Cref{sec:lambda}).

\SLV{\subparagraph{Observations on multi-distributions.}\label{sec:observations} }
{\subsubsection{Observations on multi-distributions}\label{sec:observations}  }

\SLV{}{
A multi-distribution $\m\in \MDST{\PLambda} $ is a \emph{syntactical representation}  of a space 
where to  each element  is associated a probability and a term of $\Lambda$ and contains a large amount of information. 
We analyze it  by defining random variables that \emph{observe specific properties of interest}. 
}

In \textbf{CbV}, events of interest are the set  $\Val$ of values  and  the set $\Nnfv$ 
of  $\red$-normal forms (for $\red \eq \redbv \cup \redo$). 
Focusing on $\Nnfv$, we can  define:
\begin{figure}
		\fbox{
		\begin{tabular}{r@{} l@{\quad} l@{\ } c}
			$ \obsN \colon$ &$\MDST{\PLambda} \to  \DST{\Nnfv} $  &$ \mdist{p_iM_i}_{ i\in I} \mapsto \mu $  &
			where 
			$ \forall N\in {\Nnfv}, \ \mu(N) = \sum_{\iI}p_i \mbox{ s.t. }   M_i=N
			$\\[4pt]
			$\obs_{\pn} \colon $& $\MDST{\PLambda} \to  [0,1] $   &$ \mdist{p_iM_i}_{ i\in I} \mapsto \norm{\mu}$ &
		\end{tabular}
		}
	\caption{CbV observations on multi-distributions}\label{fig:obs_proba}
\end{figure}
\begin{itemize}
	\item $ \obsN$
   extracts  from $\m=\mset{p_iM_i}_{\iI}$ a \emph{subdistribution $\mu$ over normal forms}.
For example, if 
		 $\m=\mset{\four \true, \frac{1}{8} \true, \frac{1}{4} \false, \four II  }$,   
		{$\obsN(\m)$}   is the subdistribution {$\{\true^{\frac{3}{8}},  \false^\four  \}$, \ie~$\mu(\true) = \frac{3}{8}$, $\mu(\false) = \frac{1}{4}$}.

	\item $\obs_{\pn}$  observes the  probability that   $\m$ has reached  a normal form. For example, with $\m$ as above,  $ \obs_{\pn} (\m)=\frac{5}{8}$.

\end{itemize}
%
%
%
%
%
%
In \textbf{CbN}, events of interest are the set  
of normal forms (w.r.t. $\redb \cup \redo$), and     the set $\Hnf$ of head normal forms. The corresponding observations are  defined in the obvious way.


\subsection{Asymptotic Normalization for Probabilistic \texorpdfstring{$\lambda$}{lambda}-Calculi}\label{sec:P_Anormalization}\label{sec:PCbV}

We can now  revisit  the probabilistic calculi {$\PLambda^\cbv$}  and {$\PLambda^\cbn$} as  QARS, and define  for them an asymptotically normalizing strategy.   
%
We develop explicitly only the CbV case, but similar definitions and  results hold for  CbN, 
 taking into account  that  $\redbv$ is replaced by  $\redb$ and surface reduction is $\hred$.
Method  and proofs are exactly the same.

\medskip
%


%




The QARS framework allows us to  express and analyze   the asymptotic behaviour of the calculus $\PLambda^\cbv = (\MPLambda, \Red)$. Here we  are interested in   
$	\obsN:\MPLambda\to \DST{\Nnf} $
as defined  in \Cref{fig:obs_proba}.
It is immediate  that $\m\red \m'$ implies $\obsN(\m)\leq \obsN(\m')$. So,
$ 	\brack{\PLambda^\cbv, \obsN}$ is a QARS. 
%
\SLV{}{\begin{remark}
	  $\obsN$ is finer than $\obs_\pr$, and  $\obsN(\r)=\obsN(\s)$ implies $\obs_\pr(\r)=\obs_\pr(\s)$;  we therefore only focus on $\obsN$.	
\end{remark}
}
We prove (\Cref{thm:main_CbV}) that $\PLambda^\cbv$ satisfies  the following properties:
\SLV{
	(1) the result $\sem \m $ of computing $\m$ is well defined; 
	(2) there exists a strategy that is guaranteed to produce  $\sem \m $.
}{
\begin{enumerate}
	\item  the result $\sem \m $ of computing $\m$ is well defined;
	\item there exists a strategy that is guaranteed to produce  $\sem \m $.
\end{enumerate}
}


\subparagraph{Beyond the surface.}
We define a  reduction
 $\llred\subseteq \PLambda\times \MPLambda$ which performs surface steps  ($\sred = \sredbv \cup \redo$, see  \Cref{sec:PLambda}) as much as  possible, and then  iterates the process on the subterms.  
 There are two subtleties here. First: $M\not \sredx{}$ if and only if  ($M\not\sredbv$ and $M\not \sredo$).
 Second: an occurrence of   $\oplus$-redex can only be fired when it is a surface redex. 
 By keeping this into account,  \Cref{def:liberal} updates as follows. We denote by $\Snf$ the set of \textbf{$\sred$-normal forms}. 
	\begin{Def}[Unbiased evaluation $\llRed$, $\llfull$ ]\label{def:E}
		
		 \begin{itemize}
			\item The relation $\llred\subseteq \PLambda \times \MPLambda$ is  defined by the following rules, depending if  $M\not \in \Snf$ 
			or 
			 $M \in \Snf$.  The relation $\lsredx{\betav}$ is as in \Cref{def:liberal}.
		\[\infer[\mathsmaller{(M \not \in \Snf)}]{M\llred\ \m}{M \sred \ \m}   \quad\quad \infer[\mathsmaller{(M \in \Snf)}]{M \llred \mdist{M'}}{  M\not \sred   & M \lsredx{\betav}M'}\]	
\item  	$\llRed, \llfull \subseteq \MPLambda \times \MPLambda$ are respectively   the lifting  and  full lifting of $\llred$ (\Cref{fig:lifting,fig:lifting_full}).
		\end{itemize}
	\end{Def}

Clearly, $\llred \subseteq \red$, and 
moreover	$\red$ and $\llred$ have the \emph{same normal forms}.

\ArX{}{\begin{remark} Notice that   $\llred  \  \not= \  \lsredbv \cup \redo$,  which is not diamond, check $(\Delta\oplus  \Delta\Delta)(x(Iz))$.
	\end{remark}
}

\begin{remark}\label{rem:complete}  
	$\llRed$ is $\obsN$-complete, but  not $\obsN$-normalizing for $\Red$. Indeed, the  sequence  $\m = \mdist{II\oplus \Delta\Delta} \llRed\mdist {\two II, \two \Delta\Delta}\llRed \mdist{\two II, \two \Delta\Delta} \llRed \dots $  never fires $II$.
	The solution is to move to $\llfull$, which forces all non-normal terms to reduce. 
	Note  that $\llfull$ does not factorize $\Red$.
\end{remark}
%
%
We show that 
$\llfull$ is  an $\obsN$-normalizing strategy  for $\Red$. 
The pillars   of our  construction are  $\EX$-factorization  and weighted Random Descent. 
The former holds for  $\llRed$\SLV{}{(obtaining that it is  $\obsN$-complete)},  the latter  for $\llfull$.
\SLV{}{(obtaining that $\llfull$ has a unique limit). } 
\begin{prop}[Factorization and $\obsN$-neutrality]\label{prop:Pfactorization} \hfill
	\begin{enumerate}
		\item \emph{$\llsmall$-factorization:}~~	$\m \Red^*\n$ implies $\m \llRed^*\cdot \iRed^* \n$.
		\item \emph{$\obsN$-neutrality:}~~	 $\m\iRed \n$ implies  $\obsN(\m) =\obsN(\n)$.
	\end{enumerate}
\end{prop}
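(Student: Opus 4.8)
The plan is to prove the two items separately: item~1 ($\llsmall$-factorization) is the genuinely finitary part and where I expect the difficulty; item~2 ($\obsN$-neutrality) is a local one-step test of the kind announced in \Cref{sec:tools}.

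\emph{Item~1.} Morally $\llred$ is obtained by the same unbiased-iteration recipe as in \Cref{def:liberal}, now applied to the surface reduction $\sred=\sredbv\cup\redo$ of \Cref{sec:PLambda}, so I would argue as in the proof of \Cref{thm:Ufactorization}, using two ingredients glued by the probabilistic lifting. (a) \emph{Surface factorization} of $\PLambda^\cbv$ (the surface-standardization result for this calculus, see \cite{FaggianRonchi}): $\m \Red^* \n$ implies $\m \sRed^* \cdot \nsRed^* \n$. Since a surface step contracts a redex at a weak position, its host term is not $\sred$-normal, so by the first clause of \Cref{def:E} every surface step is an $\llred$ step; hence $\sRed \subseteq \llRed$ and the prefix is already $\llRed^*$. (b) It remains to refactor the tail $\m' \nsRed^* \n$. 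The non-surface part uses only $\betav$-steps — never $\oplus$ — so it does not branch, and by the standard fact that $\ex$-factorization lifts to multi-distributions (see \cite{FaggianRonchi,parsLMCS}) it suffices to work term by term on each $M \nsredbv^* N$. By shape preservation (a strictly-deeper $\betav$-contraction neither destroys nor creates a weak-position redex), such a reduction keeps \emph{all} intermediate terms $\sred$-normal or keeps \emph{none} of them so. In the latter case every step is a non-$\llred$ step, so $M \nsredbv^* N$ is already $\iRed^*$. In the former, from an $\sred$-normal term $\lsredx{\betav}$ behaves as the second clause of \Cref{def:E}, and \Cref{thm:Ufactorization} splits $M \nsredbv^* N$ as $M \lsredx{\betav}^* M'' \nlsredbv^* N$, with the $\lsredx{\betav}^*$ prefix (staying $\sred$-normal, by shape preservation) made of $\llred$ steps and the suffix of non-$\llred$, i.e.\ $\iRed$, steps. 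Concatenating and lifting gives $\m \llRed^* \cdot \iRed^* \n$.

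\emph{Item~2.} Observe first that $\iRed$ never contracts a $\oplus$-redex: such a redex is at a weak position, so its host is not $\sred$-normal and the $\redo$ step is an $\llred$ step. Hence $\iRed$ reduces single, necessarily non-normal, terms by $\betav$ steps. The key local claim is: \emph{if $M \redbv N$ with $N \in \Nnf$, then $M \llred \mdist N$} (the step must be essential). I would prove it by a case analysis on the contracted redex $R$. If $M$ has a surface redex, then any redex other than $R$ that lies at, or encloses, a weak position survives the contraction of $R$ — a deeper contraction cannot erase a weak-position redex, and $R$ cannot strictly contain one since its argument is a value — so $N$ could not be normal unless $R$ is that surface redex, whence the step is a $\sredbv$, hence $\llred$, step. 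If $M$ is $\sred$-normal, the same survival argument forces every subterm on the path from $M$ down to $R$ to be $\sred$-normal, which is exactly the side-condition making the contraction of $R$ derivable as a $\lsredx{\betav}$ step, hence an $\llred$ step. Granting the claim, in an $\iRed$ step every moving term is non-normal and is sent to a non-normal term, so no probability mass enters $\Nnf$; with $\obsN(\m)\le\obsN(\n)$ (monotonicity along $\Red$) this gives $\obsN(\m)=\obsN(\n)$.

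The main obstacle is item~1, and within it the clean interface between the two levels of factorization under the probabilistic lifting: after pushing all surface steps to the front one must still reorganize the remaining non-surface steps into $\lsredx{\betav}$ steps followed by genuinely internal ones \emph{uniformly over the support} of the multi-distribution. The invariance of ``$\sred$-normal vs.\ not'' under non-surface steps, together with the absence of branching in $\nsRed$, is what makes this possible; once it is in place, the appeal to \Cref{thm:Ufactorization} below the surface and the lifting step are routine.
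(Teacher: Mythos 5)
Your proof is correct and follows essentially the same route as the paper's: surface factorization of $\Red$ from \cite{FaggianRonchi} for the $\llRed$-prefix, a term-by-term dichotomy on membership in $\Snf$ (which is invariant along non-surface steps, \Cref{lem:snf}) combined with $\los$-factorization (\Cref{thm:Ufactorization}) for the non-surface tail, and for neutrality the observation that a non-$\llred$ step never produces a $\red$-normal form (the paper's \Cref{lem:nfU}). The only slight imprecision is in your case~2 of the neutrality claim, where the innermost subterm on the path to the contracted redex is the one that \emph{fires} a weak redex and hence is not itself $\sred$-normal; this does not affect the argument, which otherwise matches the paper's.
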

\begin{prop}[Diamond]\label{prop:fulldiamond}$\llfull$ is  $\obsN$-diamond.
\end{prop}
\SLV{}{
	\paragraph{Factorization and Neutrality}
\begin{prop}[$\llsmall$-Factorization of $\Red$.]\label{prop:Pfactorization} In $\PLambda^\cbv$:
		$\m \Red^*\n$ implies $\m \llRed^*\cdot \iRed^* \n$
\end{prop}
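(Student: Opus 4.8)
\emph{Neutrality (Point 2).} I would first dispatch $\obsN$-neutrality, the easy half. The point is that an internal step cannot produce a normal form out of a non-normal term. Let $M\nllred\mdist{M'}$ be a term-level internal step: it fires a \emph{deep} $\betav$-redex $R$, while $M$, being non-$\red$-normal and since $\red$ and $\llred$ share the same normal forms, also carries an essential redex $R'\neq R$ (if the step firing $R$ were essential it would lie in $\llred$, not in $\nllred$). Since $R$ is deep---nested inside a $\lambda$, or inside a branch or the weak context of a surface $\oplus$---it is either disjoint from $R'$ or strictly nested with it; in either case firing $R$ leaves (a copy of) $R'$ in place, so $M'$ is again non-normal. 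Thus neither $M$ nor $M'$ lies in $\Nnf$, and $\obsN(\mdist{M})=\zero=\obsN(\mdist{M'})$. Lifting to $\MPLambda$: in a step $\m\iRed\n$ the reduced components pass from non-normal to non-normal (zero mass on $\Nnf$ before and after), while the untouched components are unchanged; since $\obsN$ sums the mass on $\Nnf$ componentwise, $\obsN(\m)=\obsN(\n)$.

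\emph{Factorization (Point 1).} For $\llsmall$-factorization the plan is the classical one: prove a \emph{local postponement} of internal steps past essential steps and conclude by the standard ``strong postponement $\Rightarrow$ factorization'' argument, lifted to multi-distributions as in \cite{FaggianRonchi,parsLMCS}. Concretely, it suffices to show the swap $\iRed\cdot\llRed\ \subseteq\ \llRed\cdot\iRed^{*}$ on $\MPLambda$, which I verify by a term-level case analysis. The observation that makes the swap possible is that a deep $\betav$-step \emph{never creates a surface redex}: its effect is confined inside a $\lambda$ or inside an $\oplus$, so any surface redex that the subsequent essential step fires was already present before the internal step, and may be fired first.

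\emph{The case analysis and the obstacle.} The swap splits on the essential redex $R'$. If $R'$ is a surface $\betav$-step, the reorganization is exactly the surface (weak) factorization swap for $\betav$; if the term is already $\sred$-normal, so that $R'$ is a deep unbiased step $\lsredx{\betav}$, then both redexes are $\betav$ and the reorganization is the swap underlying $\los$-factorization (\Cref{thm:Ufactorization}), since $\lsredx{\betav}$ contracts an outermost redex while $R$ is nested inside it or disjoint from it. The delicate case---and the main obstacle---is when $R'$ is a surface $\oplus$-step: firing $\oplus$ duplicates its weak context and hence duplicates the internal redex $R$ whenever $R$ sits in that context, so I must check that firing $\oplus$ first and then $R$ in \emph{each} of the two resulting components yields \emph{exactly} the same multi-distribution as firing $R$ first and then $\oplus$. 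This equality holds precisely because $\oplus$ is fired only at the surface and $R$ is disjoint from the contracted $\oplus$-redex: $R$ lies either inside one branch (and then survives only in the corresponding component) or inside the surrounding weak context (and is then copied identically into both components, each with weight $\two$). This is the exact point where the surface-only discipline of $\oplus$ is indispensable---it is what rules out the commutation/standardization failure recalled in the Introduction. With the swap established, iterating it pushes every internal step to the tail of the reduction, giving $\m\Red^*\n\Rightarrow\m\llRed^*\cdot\iRed^*\n$; the lifting bookkeeping (applying the swap componentwise and tracking the duplicated internal steps) is routine, and the only ingredient beyond the pure calculus is the $\oplus$-duplication case of the swap.
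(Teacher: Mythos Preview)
Your neutrality argument is fine in spirit. The gap is in factorization: the swap you state, $\iRed\cdot\llRed\subseteq\llRed\cdot\iRed^{*}$, is \emph{false}. The problem is that after you fire the essential redex first, the residual of the internal redex can \emph{become} essential, so the tail is no longer an $\iRed$-sequence. Concretely, take $M=\lambda w.(\lambda y.y)(\lambda z.Iz)$. The internal step (reducing $Iz$ under the inner $\lambda$) gives $M'=\lambda w.(\lambda y.y)I$, and then the essential step yields $\lambda w.I$. If instead you fire the essential redex first, you get $M_1=\lambda w.\lambda z.Iz$; from here the \emph{only} step to $\lambda w.I$ reduces $Iz$, but now $M_1$ is surface-normal and $Iz$ is the outermost redex along the unbiased spine, so this step is $\llred$, not $\nllred$. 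Hence $M\nllred\cdot\llred\,\lambda w.I$ but $M\llred M_1$ and $M_1\not\iRed^{*}\mdist{\lambda w.I}$: you need \emph{two} essential steps. Weakening the swap to $\llRed^{+}\cdot\iRed^{*}$ no longer yields factorization by the standard Hindley argument (iterating the swap need not terminate), so the ``routine'' closure you invoke is not available.

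The paper avoids this obstacle entirely by working \emph{modularly} rather than via a local swap: first apply surface factorization of $\Red$ (from \cite{FaggianRonchi}) to get $\m\sRed^{*}\td\nsRed^{*}\n$; the prefix is already $\llRed^{*}$ since $\sred\subseteq\llred$. For the $\nsRed^{*}$ tail, every step is a non-weak $\betav$-step on terms, so one partitions $\td$ into terms that are surface-normal and terms that are not. On the non-surface-normal ones all non-weak steps are internal by definition and (by \Cref{lem:snf}) stay so; on the surface-normal ones one invokes $\los$-factorization of $\redbv$ (\Cref{thm:Ufactorization}), which is proved by induction on the \emph{target term}, not by a local swap---precisely because the swap fails. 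The two factorizations compose to give $\llRed^{*}\cdot\iRed^{*}$. If you want to salvage your direct approach, you would need to replace the one-step swap by a swap of the form $\iRed^{*}\cdot\llRed\subseteq\llRed\cdot\Red^{*}$ together with a separate termination/measure argument, which is essentially re-proving the two factorization results you could simply cite.
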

\begin{proof}
By  surface factorization of $\Red$ and   $\lsredbv$ factorization (\Cref{prop:out-factorization}).
\end{proof}

\begin{prop}[neutrality]\label{prop:Pneutral} If 
	$\m\iRed \n$ then  $\obsN(\m) =\obsN(\n)$.
\end{prop}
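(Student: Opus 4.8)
The plan is to use the linearity of $\obsN$ to pull the statement back to term-level internal steps, and then to show that an internal step neither destroys nor creates a $\red$-normal form. Recall that $\iRed$ is the lifting (\Cref{fig:lifting}) of the term-level reduction $\nllred$, \ie of the $\red$-steps that are not $\llred$-steps (\Cref{def:E}), and that $\obsN$ (\Cref{fig:obs_proba}) commutes with sums of multi-distributions and with multiplication by a scalar. Hence, if $\m\iRed\n$, then $\m=\mdist{p_iM_i}_{\iI}$ and $\n=\sum_{\iI}p_i\m_i$ where for each $i$ either $\m_i=\mdist{M_i}$ or $M_i\nllred\m_i$; by linearity it suffices to prove $\obsN(\m_i)=\obsN(\mdist{M_i})$, which is trivial in the first case, so the whole statement reduces to: for every term-level internal step $M\nllred\m$, one has $\obsN(\m)=\obsN(\mdist{M})$.

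First I would note that every internal step is a \emph{non-surface} $\betav$-step, and therefore $\m=\mdist{M'}$ is a singleton. Indeed, $\redo$-steps are always surface (the $\oplus$-rule only fires under weak contexts), hence $\llred$-steps; and a surface $\betav$-step issues from a term not in $\Snf$, from which \emph{all} surface steps are $\llred$-steps by \Cref{def:E}; so an internal step can only fire a $\betav$-redex out of surface position. Since this fired redex already witnesses that $M$ is not $\red$-normal, we get $\obsN(\mdist{M})=\zero$, and it remains to show that $M'$ is not $\red$-normal either, so that $\obsN(\mdist{M'})=\zero$ as well. (The inequality $\obsN(\mdist{M})\le\obsN(\mdist{M'})$ holds for free by the QARS property; the real content is that an internal step creates no new normal form.)

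The heart of the proof — and the one point I expect to require care — is exactly this: firing an internal redex cannot produce a normal form, because an internal redex is always \emph{shadowed}. If $M\notin\Snf$, the fired redex sits at depth $\ge 1$ while $M$ still carries a surface redex (a $\betav$- or weak $\oplus$-redex at depth $0$). If $M\in\Snf$, then since the step is not a $\lsredx{\betav}$-step, the fired redex sits strictly inside a subterm that already carries a surface $\betav$-redex, which $\lsredx{\betav}$ would reduce first. In both cases there is a $\betav$- or weak-$\oplus$-redex $R'$ that is disjoint from, or properly contains, the fired redex; a short case analysis on the shape of $R'$ — either $(\lambda x.P)V$, whose argument $V$ is a value and hence remains a value under reduction in its interior, or $P\oplus Q$ in weak position — shows that reducing the shadowed redex leaves a residual of $R'$ that is again a redex of the same kind in the same position. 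Hence $M'$ still has a redex, so $M'\notin\Nnf$ and $\obsN(\mdist{M'})=\zero=\obsN(\mdist{M})$. Propagating this equality through the lifting and the linearity of $\obsN$ gives $\obsN(\m)=\obsN(\n)$. The only genuinely non-routine part is the bookkeeping behind the shadowing claim — identifying precisely which ancestor redex survives an arbitrary internal step — which is the "shape preservation under non-surface reduction" already used elsewhere in the paper.
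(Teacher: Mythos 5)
Your proof is correct and follows essentially the paper's route: reduce by linearity of $\obsN$ to a single term-level internal step, observe that such a step is a non-surface (hence non-$\lsredx{\betav}$) $\betav$-step whose source and target are both non-normal, so both contribute $\zero$ to the observation. The only difference is that where the paper simply invokes \Cref{lem:nfU} (proved by induction on the shape of the source term) to conclude that the target of a $\nlsredbv$-step is never normal, you establish the same fact directly by exhibiting a shadowing surface redex and checking that its residual survives the internal step---a legitimate, slightly more bookkeeping-heavy, rendering of the same key lemma.
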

\begin{proof}By the fact that $\nllred \subseteq \nsred$ preserves the shape of terms, and the redexes.
\end{proof}
\paragraph{Diamonds.}\label{sec:diamonds}
The reduction $\llfull$ is $\obsN$-diamond. 
\Cref{prop:fulldiamond} follows from the following
\begin{lemma}[Pointed Diamond]
	\label{l:diamond}
	Let $\a,\c\in \{\beta,\oplus\}$.
	Assume  $M$ has two distinct redexes, such that $M \lsredx{\a} \m_1 $ and $M \lsredx{\c} \m_2$. 
	Then  \begin{enumerate}
		\item exists $\s$ such that $\m_1\llfull{}_{\c}\ \s$ and $\m_2\llfull{}_{\a}\ \s$.
		
		\item Moreover, no $M_i$ in  $ \m_1=\mdist{ p_i M_i}_{i} $  and no  $M_j $ in $\m_2=\mdist{q_jM_j}_{j}$ is normal.
	\end{enumerate}
\end{lemma}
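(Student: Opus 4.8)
The plan is to reduce the statement to the classical commutation of \emph{disjoint} redexes, the only genuinely new ingredients being the case-split created by the $\oplus$-rule and the fact that residuals must be fired \emph{in parallel}, whence $\llfull$ (rather than $\llRed$) on the side that closes the diagram. Throughout, ``the $\betav$-rule'' denotes the non-$\oplus$ rule of $\PLambda^\cbv$.

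\emph{Step 1: the two fired redexes are disjoint.} Unfolding \Cref{def:E} together with \Cref{def:liberal}, the redexes fireable from $M$ by $\llred$ are exactly the depth-$0$ (surface) redexes of $M$ when $M\notin\Snf$, and the $\lsredbv$-fireable redexes when $M\in\Snf$ --- and in the latter case each such redex lies strictly inside one immediate subterm of $M$, along a chain of subterms none of which has a surface redex. An easy induction on $M$ then shows that the set of $\llred$-fireable redexes of any term is \emph{pairwise disjoint}: two distinct surface redexes are disjoint because a depth-$0$ redex ($(\lambda x.P)V$ or $P\oplus Q$) has all its proper subterms at depth $\geq 1$, hence cannot properly contain another surface redex; and when $M$ has no surface redex, two distinct fireable redexes either lie in distinct immediate subterms (hence disjoint) or in the same one, where we conclude by the induction hypothesis. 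Applying this to the hypotheses: the redex $R_1$ fired by the rule-$\a$ step and the redex $R_2$ fired by the rule-$\c$ step are distinct, hence disjoint; moreover $\a$ (resp. $\c$) equals $\oplus$ only if $M\notin\Snf$ and the corresponding redex is a surface $\oplus$-redex.

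\emph{Step 2: commuting $R_1$ and $R_2$.} Contracting $R_1$ touches $M$ only at the occurrence of $R_1$: if rule $\a$ is $\betav$ then $\m_1=\mdist{M'}$ with $R_2$ unaffected in $M'$; if rule $\a$ is $\oplus$ then $\m_1=\mdist{\two M',\two M''}$, the two branches, each still containing $R_2$. Because $R_2$ lies off the path to $R_1$, the abstractions and operators enclosing $R_2$ are unchanged, so $R_2$ is still $\llred$-fireable with rule $\c$ in every component of $\m_1$, and every such component is non-normal (it still contains $R_2$); symmetrically every component of $\m_2$ contains $R_1$, hence is non-normal. This already establishes part (2). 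Now firing (the residual of) $R_2$ with rule $\c$ in \emph{every} component of $\m_1$ is one $\llfull$-step $\m_1 \llfull{}_\c \s$ (no component is normal, so none is left untouched); symmetrically $\m_2 \llfull{}_\a \s'$ by firing (the residual of) $R_1$ in every component of $\m_2$. Finally $\s=\s'$: both are ``$M$ with both $R_1$ and $R_2$ contracted'', and disjointness makes the order of contraction irrelevant while the probability weights match up (one component if neither step is $\oplus$; two components of weight $\two$ if exactly one is; four of weight $\four$ if both are). This is part (1).

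\emph{Main obstacle.} The delicate point is making Step 1 fully precise --- in particular checking that disjointness interacts correctly with the breadth-first depth discipline of $\lsred$/$\llred$, so that after contracting $R_1$ the occurrence $R_2$ is not merely present but still \emph{fireable} by $\llred$ in each component of $\m_1$ (no newly created redex ``shadows'' it in the strategy), and keeping the $\oplus$-generated multisets aligned across the two closing steps, including the degenerate sub-case where the two contractions produce equal terms. The rest is routine residual bookkeeping for disjoint redexes.
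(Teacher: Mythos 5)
Your proof is correct, but it is organized differently from the paper's. The paper splits on whether $M\in\Snf$: when $M$ is surface-normal both steps are $\lsredbv$-steps on a single term, and the diamond is delegated wholesale to the already-established RD-diamond property of $\lsredbb$ (\Cref{lem:Uproperties}.1, proved by structural induction in \Cref{lem:Udiamond}), transported through \Cref{fact:trans}; when $M\notin\Snf$ both steps are surface steps and the paper closes the square by a short case analysis on the two surface redexes. You instead give a single unified argument: all $\llred$-fireable redexes of a term are pairwise disjoint, and disjoint redexes commute with the expected weight bookkeeping. This buys a self-contained proof that treats the $\betav$/$\oplus$ and surface/non-surface cases uniformly and makes part (2) fall out for free (each component retains the other redex); the paper's route buys brevity by reusing a lemma it needs anyway. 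The one point you flag as the ``main obstacle'' --- that after contracting $R_1$ the residual of $R_2$ is still \emph{fireable by the strategy}, i.e.\ no newly created surface redex demotes $R_2$ --- is genuine but is exactly what \Cref{lem:snf} and \Cref{lem:snf_propagation} discharge: a non-surface step cannot create a surface redex, so every term on the access path to $R_2$ that was $\sred$-normal in $M$ remains $\sred$-normal after contracting $R_1$, and the first non-$\sred$-normal subterm on that path (the one in which $R_2$ is a surface redex) is unchanged around $R_2$'s position since the redexes are disjoint. Citing those lemmas closes your argument; without them the claim would indeed be unsupported, since \emph{a priori} upward redex creation could alter which redexes the breadth-first discipline permits.
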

\begin{prop}[Diamond]\label{prop:fulldiamond}$\llfull$ is  $\obsN$-diamond.
\end{prop}
}
%
%
\SLV{}{\subsubsection{Asymptotic Normalization}}
We are now ready to  prove that, in $\brack{\PLambda^\cbv, \obsN}$,  the  reduction $\llfull\subseteq \Red$ (\ie, the full lifting of $\llred$) is guaranteed to compute  the best possible result from each $\m \in \MPLambda$.
\subparagraph{Asymptotic Completeness.}
%
We have  that $\llRed$ is  asymptotically complete for $\Red$, because  it satisfies the  conditions 
of   \Cref{thm:ACompl} (by \Cref{prop:Pfactorization}).
\begin{lemma}\label{prop:RedACompl}
If	$\m\Red\conv{} \cpo{\r}$ then 	$\m\llRed\conv{} \cpo{\r}$.
\end{lemma}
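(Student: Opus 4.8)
The plan is to recognize this as nothing more than a direct instance of the abstract Asymptotic completeness criterion, \Cref{thm:ACompl}, specialized to the QARS $\brack{\PLambda^\cbv,\obsN}$. I would take the partition of $\Red$-steps into \emph{essential} steps $\ered := \llRed$ and \emph{internal} steps $\nered := \iRed$, where $\iRed$ collects exactly the $\Red$-steps that are not $\llRed$-steps. With this instantiation, \Cref{thm:ACompl} requires precisely two hypotheses: ($\ex$-factorization) that every $\Red^*$-sequence rearranges as $\llRed^*\cdot\iRed^*$, and ($\nex$-neutrality) that each internal step leaves the observation unchanged, i.e. $\m\iRed\n$ implies $\obsN(\m)=\obsN(\n)$. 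Its conclusion, read off verbatim with $\red=\Red$, $\obs=\obsN$, $\ered=\llRed$, is exactly $\m\Red\conv{}\cpo{\r}\timplies\m\llRed\conv{}\cpo{\r}$, which is the statement to prove.

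So, concretely, the proof proceeds in three short steps. First, I would observe that the partition above is well posed: by \Cref{def:E}, $\llred\subseteq\red$, hence lifting to multi-distributions gives $\llRed\subseteq\Red$, and $\iRed$ is by definition the complementary relation, so every $\Red$-step is classified. Second, I would supply hypothesis (i) of \Cref{thm:ACompl} by citing \Cref{prop:Pfactorization}(1), the $\llsmall$-factorization $\m\Red^*\n\implies\m\llRed^*\cdot\iRed^*\n$. Third, I would supply hypothesis (ii) by citing \Cref{prop:Pfactorization}(2), $\obsN$-neutrality of $\iRed$. Then \Cref{thm:ACompl} applies and gives the claim.

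There is essentially no obstacle internal to this lemma: it is a bookkeeping corollary, with all the mathematical content already discharged upstream. If one wanted to point to where the actual difficulty lives, it is inside \Cref{prop:Pfactorization}: the factorization part relies on surface factorization of $\Red$ combined with the $\lsredbv$-factorization of \Cref{thm:Ufactorization} (the subtlety being that $\oplus$-redexes may only be fired at the surface, which is why one works with $\llRed$ rather than $\lsredbv\cup\redo$), while neutrality relies on non-surface steps preserving both the shape of terms and their redexes, so that $\obsN$ — which only reads off the already-normal components of a multi-distribution — cannot change. The only point meriting a line of care here is that these pointwise facts about terms lift correctly through the multi-distribution lifting of \Cref{fig:lifting}, but that compatibility is already packaged into the statement of \Cref{prop:Pfactorization}, so the present lemma is immediate.
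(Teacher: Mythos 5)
Your proof is correct and is exactly the paper's argument: the lemma is obtained by instantiating the abstract criterion of \Cref{thm:ACompl} with $\ered:=\llRed$ and $\nered:=\iRed$, with both hypotheses (factorization and $\obsN$-neutrality) discharged by \Cref{prop:Pfactorization}. Nothing further is needed.
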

In turn,  $\llfull$ is  asymptotically complete for $\llRed$ (immediate). 
So  via  \Cref{prop:RedACompl}  we have:
\begin{theorem}\label{thm:complete}$\llfull$ is asymptotically complete for $\Red$: if
		$\m\Red\conv{} \cpo{\r}$ then $\m\llfull\conv{} \cpo{\s}$ and $\cpo{\r} \leq \cpo{\s}$.
\end{theorem}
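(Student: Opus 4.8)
The plan is to recognise the statement as nothing but the assertion that $\llfull$ is asymptotically complete for $\Red$, and to obtain it by composing two completeness facts along the chain $\llfull \subseteq \llRed^* \subseteq \Red^*$. The first half is already available: \Cref{prop:RedACompl} says that $\llRed$ is asymptotically complete for $\Red$, in fact with equality of limits, being an instance of \Cref{thm:ACompl} via \Cref{prop:Pfactorization}. So the only thing left to establish is the second half, namely that $\llfull$ is asymptotically complete for $\llRed$: if $\m \llRed\conv{} \cpo\r$ then $\m\llfull\conv{} \cpo\s$ for some $\cpo\s$ with $\cpo\r \le \cpo\s$. Granting this, the theorem is immediate by composition: from $\m\Red\conv{} \cpo\r$ we obtain $\m\llRed\conv{} \cpo\r$, hence $\m\llfull\conv{} \cpo\s$ with $\cpo\r\le\cpo\s$, which is the conclusion. (Asymptotic completeness composes in general by chaining the two implications and the two inequalities; the only routine point is that an $\llfull$-sequence, read as the $\llRed$-sequence obtained by spelling out each $\llfull$-step as the finitely many $\llRed$-steps it abbreviates, has the same $\obsN$-limit, because $\obsN$ is monotone along $\llRed$ and so the inserted terms are $\obsN$-sandwiched.)

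For the remaining half I would argue as follows. By \Cref{prop:fulldiamond} the reduction $\llfull$ is $\obsN$-diamond, hence $\obsN$-RD, hence by \Cref{thm:diamond} the set $\wLim(\m,\llfull)$ is a singleton $\{\cpo\s\}$: all $\llfull$-sequences from $\m$ converge to the same limit, so the $\cpo\s$ to be exhibited is forced. What is left is the inequality $\cpo\r\le\cpo\s$ for every $\cpo\r\in\wLim(\m,\llRed)$, i.e.\ that $\cpo\s$ is the greatest $\llRed$-limit from $\m$. I would prove this by showing that the fully parallel reduction dominates an arbitrary $\llRed$-sequence round by round: fixing an $\llRed$-sequence $\seq\m$ with $\m_0=\m$ and $\sup_n\obsN(\m_n)=\cpo\r$, and an $\llfull$-sequence $\seq\n$ with $\n_0=\m$, the claim is $\obsN(\m_n)\le\obsN(\n_n)$ for all $n$; taking suprema then gives $\cpo\r=\sup_n\obsN(\m_n)\le\sup_n\obsN(\n_n)=\cpo\s$.

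The round-by-round domination is proved by tiling $\seq\m$ against $\seq\n$, where the local tile is the commutation that already underlies \Cref{prop:fulldiamond}: given $\m'\llRed\m''$ and $\m'\llfull\m'''$, one joins $\m''$ and $\m'''$ with $\obsN$ non-decreasing along the way and with no component ever lost, because an $\llfull$-step fires a superset of the redexes an $\llRed$-step fires, the normal mass created in one reduction round is additive over the independently-reduced components of a multi-distribution, and — by the RD-diamond of surface reduction and of $\lsredx{\betav}$ — that mass does not depend on which redex is fired inside a component. Reading the resulting diagram column by column and pushing $\obsN$ through it gives $\obsN(\m_n)\le\obsN(\n_n)$; this is the content that the word \emph{immediate} compresses, once the diamond machinery for $\llfull$ is in place.

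The step I expect to be the main obstacle is exactly this tiling: a fully parallel step is not stable under reduction — firing every redex of $\m'$ creates fresh ones — so $\llRed$ and $\llfull$ do not literally commute and the diagram does not close into finite squares. The way around it is that we never need an actual common reduct, only the monotone behaviour of $\obsN$ along the two borders of each tile; and $\obsN$ is additive over components and indifferent to the choice of fired redex, which is precisely what the Pointed Diamond lemma behind \Cref{prop:fulldiamond} records. Hence the diagram may grow unboundedly while the column-wise comparison of $\obsN$-values still goes through, delivering the domination $\obsN(\m_n)\le\obsN(\n_n)$ and therefore $\cpo\r\le\cpo\s$, which completes the plan.
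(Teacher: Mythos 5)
Your proposal matches the paper's proof step for step: the paper likewise obtains the theorem by composing \Cref{prop:RedACompl} (asymptotic completeness of $\llRed$ for $\Red$, via \Cref{thm:ACompl} and \Cref{prop:Pfactorization}) with the asymptotic completeness of $\llfull$ for $\llRed$, which the paper dismisses with the single word ``immediate''. Your round-by-round domination argument, resting on the $\obsN$-diamond property of \Cref{prop:fulldiamond}, is a sound way to discharge that second step; the only small inaccuracy is the parenthetical about unfolding an $\llfull$-step into several $\llRed$-steps --- by the definitions in \Cref{fig:lifting,fig:lifting_full} the full lifting merely restricts when a component may idle, so ${\llfull}\subseteq{\llRed}$ already holds at the level of single steps and no unfolding is needed.
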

%
\subparagraph{Unique Result.}
All $\llfull$-sequences  from $\m$ converge to the same limit, 	by \Cref{prop:fulldiamond,thm:diamond}. 
\begin{theorem}\label{thm:unique} 
	$\xLim{\obsN}(\m,\llfull)$ contains a unique element.
\end{theorem}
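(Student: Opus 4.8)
The plan is to obtain the statement as an immediate consequence of the abstract machinery already set up, with \Cref{prop:fulldiamond} as the only substantive input. First I would recall that, by \Cref{prop:fulldiamond}, the multistep reduction $\llfull$ is $\obsN$-diamond in the QARS $\brack{\PLambda^\cbv,\obsN}$ — this is where all the combinatorial work sits (the pointed diamond lemma for $\lsredx{\betav}$ and $\redo$, plus the observation that non-normal terms are never split off as normal). Given that, I would simply feed it into the chain of implications of \Cref{thm:diamond}: $\obsN$-diamond $\Rightarrow$ $\obsN$-RD $\Rightarrow$ $\xLim{\obsN}(\m,\llfull)$ is a singleton.

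Spelling out the two steps: from $\obsN$-RD we get that any two $\llfull$-sequences $\seqr{\r}$ and $\seqr{\s}$ starting from $\m$ satisfy $\obsN(\r_n)=\obsN(\s_n)$ for every $n$, hence induce the very same $\omega$-chain $\langle\obsN(\m_n)\rangle_n$ in $\DST{\Nnf}$, hence have the same supremum, i.e. the same $\obsN$-limit. Thus $\xLim{\obsN}(\m,\llfull)$ has at most one element; and it has at least one, since $\m$ has some maximal $\llfull$-sequence whose limit exists (as $\DST{\Nnf}$ is an $\omega$-cpo). Therefore it is exactly a singleton.

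The one point that deserves an explicit word is that \Cref{thm:diamond} and the definitions in \Cref{def:WRD} are phrased for a one-step subreduction $\ered\subseteq\red$, whereas here $\llfull$ is a \emph{multistep} reduction (the full lifting of $\llred$, reflexive on normal forms). I would note that, exactly as remarked in \Cref{sec:QARSdefs} and after \Cref{def:liberal}, the notions $\obsN$-diamond, $\obsN$-RD and the conclusion on $\wLim$ all transfer verbatim to the multistep setting — the only thing used is monotonicity of $\obsN$ along $\llfull$, which holds since each non-reflexive $\llfull$-step is built from $\Red$-steps (and $\brack{\PLambda^\cbv,\obsN}$ is a QARS). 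So, modulo \Cref{prop:fulldiamond}, there is no real obstacle here: the theorem is a corollary of \Cref{thm:diamond}.
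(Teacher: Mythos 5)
Your proposal is correct and follows exactly the paper's route: the paper derives \Cref{thm:unique} directly from \Cref{prop:fulldiamond} ($\llfull$ is $\obsN$-diamond) combined with \Cref{thm:diamond} ($\obsN$-diamond $\Rightarrow$ $\obsN$-RD $\Rightarrow$ unique limit). Your additional remarks — unwinding $\obsN$-RD into identical $\omega$-chains with identical suprema, and noting that the one-step formulations transfer to the multistep reduction $\llfull$ — are consistent with how the paper treats these notions and do not change the argument.
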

\SLV{}{
 All $\llfull$-sequences  have the \emph{same} limit, because ($\obsN$-random descent)   
 all $\llfull$-sequences $\seqr \r$, $\seqr \s$ from the same $\m$ are indistinguishable 
w.r.t $\obsN$ :  \SLV{\quad	$\seqr {\obsN~\r} = \seqr{\obsN~\s}$.}{
\begin{center}
	$\seqr {\obsN~\r} = \seqr{\obsN~\s}$.
\end{center}
}
}
\subparagraph{Asymptotic Normalization.}
By \Cref{thm:main},
the  main result   follows from   \Cref{thm:complete,thm:unique}.
\begin{theorem}[Main, probabilistic \CbV]\label{thm:main_CbV} 
	For each $\m\in \MPLambda$:
	\begin{enumerate}
		\item $\sem \m$ is  defined;
		
		\item $\m \llfull\conv{} \cpo \r$ if and only if $\cpo \r = \sem \m$.
	\end{enumerate}
Hence $\llfull$ is an  \emph{$\obsN$-normalizing strategy} for $\Red$ (see \Cref{rem:Anormalizing}).
\end{theorem}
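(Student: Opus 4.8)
The plan is to obtain \Cref{thm:main_CbV} as a direct instantiation of the abstract result \Cref{thm:main}, taking $\red:=\Red$, the subreduction $\stred:=\llfull$, and the observation $\obs:=\obsN$. Since $\brack{\PLambda^\cbv,\obsN}$ has just been checked to be a QARS, \Cref{thm:main} becomes applicable as soon as its two hypotheses are verified, and these are precisely the two theorems established immediately before.

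First I would discharge hypothesis~(i.) of \Cref{thm:main}, namely that $\llfull$ is asymptotically complete for $\Red$: this is exactly \Cref{thm:complete}. (Recall it is itself obtained by feeding the factorization and $\obsN$-neutrality statements of \Cref{prop:Pfactorization} into the completeness criterion \Cref{thm:ACompl}, which gives $\obsN$-completeness of $\llRed$ as in \Cref{prop:RedACompl}, and then composing with the immediate asymptotic completeness of $\llfull$ for $\llRed$.) Next I would discharge hypothesis~(ii.), that $\xLim{\obsN}(\m,\llfull)$ is a singleton for every $\m$: this is \Cref{thm:unique}, which follows from the $\obsN$-diamond property of $\llfull$ (\Cref{prop:fulldiamond}) via \Cref{thm:diamond}.

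With both hypotheses in place, \Cref{thm:main} yields at once that $\sem\m$ is defined (claim~1) and that every $\llfull$-sequence from $\m$ converges to $\sem\m$; in particular $\sem\m\in\xLim{\obsN}(\m,\llfull)$. For the equivalence in claim~2 I would then argue the two directions. If $\m\llfull\conv{}\cpo\r$, \ie $\cpo\r\in\xLim{\obsN}(\m,\llfull)$, then since by \Cref{thm:unique} this set has exactly one element and $\sem\m$ already lies in it, necessarily $\cpo\r=\sem\m$. Conversely, if $\cpo\r=\sem\m$, then $\m\llfull\conv{}\cpo\r$ holds because there is at least one maximal $\llfull$-sequence from $\m$ (as $\llfull$ is total, being reflexive on $\Red$-normal multidistributions) and every such sequence converges to $\sem\m$ by \Cref{thm:main}. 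Finally, the closing sentence follows from \Cref{rem:Anormalizing}: since $\sem\m$ is defined for all $\m\in\MPLambda$, satisfying the hypotheses of \Cref{thm:main} is equivalent to being an $\obsN$-normalizing strategy, so $\llfull$ is one for $\Red$.

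I do not anticipate a genuine obstacle: \Cref{thm:main_CbV} is a packaging statement, with all substantive content already localized in \Cref{thm:complete} and \Cref{thm:unique}. The only delicate point is the implication ``$\cpo\r=\sem\m$ implies $\m\llfull\conv{}\cpo\r$'', which must not be read circularly: it is not enough that $\sem\m$ be \emph{some} limit, it must be reached by an $\llfull$-sequence specifically, and this is exactly the universal clause of \Cref{thm:main} applied to any maximal $\llfull$-sequence out of $\m$, at least one of which always exists.
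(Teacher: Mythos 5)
Your proposal is correct and follows exactly the paper's own route: the paper proves \Cref{thm:main_CbV} precisely by instantiating the abstract \Cref{thm:main} with $\Red$, $\llfull$, and $\obsN$, discharging hypothesis~(i.) via \Cref{thm:complete} and hypothesis~(ii.) via \Cref{thm:unique}. Your additional unpacking of the biconditional in claim~2 (including the observation that maximal $\llfull$-sequences always exist because the full lifting is reflexive on normal terms) is sound and merely makes explicit what the paper leaves implicit.
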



\SLV{}{
\subsection{Asymptotic Normalization for Probabilistic CbN $\lam$-calculus}\label{sec:PCbN}

{Similar results hold for for CbN, taking into account in the definition of $\llred$ that $\redbv$ is replaced by  $\redb$ and surface reduction is $\hred$.
	The method  and proofs are exactly the same.}
}

Some simple examples will help to see how the normalizing strategy works, and how it differs from surface reduction.  
\begin{example}\label{ex:CBVPfull}
	Recall that $\betav$-reduction is unrestricted, so for example  $M= \lam z. (Iz) \redbv \lam z.z$. Instead, $ \lam z. (Iz) \not\sred$, because surface reduction cannot fire under abstraction. So 
	\emph{surface reduction is not a complete strategy} w.r.t. $\betav$-normal forms.
	
	A direct  consequence is that surface reduction is not informative about normalization, as it  produces ``false positive''. For example,  $N= \lam z. \Delta\Delta$ is  diverging  w.r.t. $\betav$-reduction, but it is a surface normal form.
	Let us now incept probability (with the terms  $M$ and $N$ as above).
	\begin{enumerate}		
		\item  Let $R= (\lam x. M \oplus xx)(\lam x. M \oplus xx) $.  Then		
		$\mdist{R} \llfull  \mdist{M \oplus R} \llfull \mdist{ \two M, \two R} \llfull\mdist{ \two I, \two M \oplus R}  \llfull  \mdist{ \two I, \four M,  \four R} \llfull \cdots $. At the limit,  $R$ converges with probability $1$ to   $I$, as wanted. 
		
		\item The term $ S= (\lam x.  N \oplus xx)  (\lam x. N \oplus xx)$  converges to  normal form with probability $0$.  One can easily check that  
		$\mdist{S} \llfull\conv{}  \zero$.
		
		\item The term $ S'= (\lam x.  (N \oplus I) \oplus xx)  (\lam x. (N\oplus I )\oplus xx)$  converges with probability $\two$ to the normal form $I$.  One can easily check that  
		$\mdist{S'} \llfull\conv{}  \{ I ^ \two\}$.
		
		
		\ArX{}
		{	\item  Notice  that  from $S'$, the reduction $\llRed$ (instead of $\llfull$) would  admit also reduction sequences which produces no normal form, 		for example  $S' \llRed^* \mdist{\four N, \two I, S'} \llRed \mdist{\four N, \four I, \two S'} \llRed \mdist{\four N, \four  I,  \two S'} \cdots$. By persistently reducing only $N$, we have 	$\mdist{S'} \llRed\conv{}  \zero$. }
	\end{enumerate}
\end{example}		

\begin{example}\label{ex:CBVPfull2}
	One can easily build probabilistic terms with a more interesting behaviour than those in \Cref{ex:CBVPfull}. First, observe that for $F=\lam x.I$ (encoding  the boolean \emph{false}), we have that $(\lam z. FF) \redbv F $.  	 	
		Now let   $U=\lam xy. \big(y\oplus xx(\lam z. yy)\big)$ and consider the term $UUF$, which converges with probability $1$ to $F$. Indeed 	$  \mdist{UUF}   \llfull\conv{}  \{F^1\}  $.  In contrast, surface reduction converges to a distribution over  \emph{countably many different surface normal forms}, since  each iteration produces a new \snf:  $  \two F, \four \lam z.FF, 
		\frac{1}{8} \lam z.  (\lam z.FF)(\lam z.FF) , ... $. 
\end{example}



\section{Asymptotic Normalization:  More Case Studies}\label{sec:output}
Our method applies---uniformly---to the other examples in  \Cref{sec:examples}. 
In this section we consider a CbV $\lam$-calculus extended with an output operator. For the sake of a compact presentation, 
we take as output  not a string, but simply an integer (think of it as a string on a single character). 
Albeit simple, this case study allows us to illustrate the subtleties related to limits with output calculi, and the use of our method.
In a similar way, one can revisit B{\"o}hm Trees as the limit of a specific asymptotic strategy---we leave  this to Appendix~\ref{sec:BT}.

\paragraph{$\lambda$-calculus with output: the payoff calculus.} \label{sec:payoff}

%
%

%
%

\begin{figure}
	\centering
	 $\infer {\opair n {\ww\hole{\tick.P}} ~\wredx{\tick} ~  \opair {n+1}{\ww\hole P}}{} $
	\quad
	$ \infer {\opair n M \wredbv \opair n{M'}}{M\wredbv M'}$
	\quad
	$ \infer {\opair n M\redbv \opair n {M'}}{M\redbv M'}
	$

\caption{Payoff reductions  $\wredx{\tick},{\redbv},{\wredbv} \subseteq (\Nat\times \Lambda_{\tick})  $ }\label{fig:payoff}

	\begin{gather*}
	\infer{\opair 0 M \pwred\!  \opair 0 {M'} }{M\Root{\betav} M'}  
	\qquad 
	\infer{\opair 0 M \pwred\!  \opair 1 {M'} }{M\Root{\tick}M'}
	\qquad 
	\infer{\opair 0 V \pwred\!  \opair 0 {V} }{} 
	\\
	\infer{\opair 0 {P_1P_2}  \pwred\!    \opair {k_1+k_2} {P_1P_2}  }
	{  \opair 0 {P_1}  \pwred    \opair {k_1} {P_1} & \opair 0 {P_2}  \pwred    \opair {k_2} {P_2} }
	\qquad
	\infer{ \opair {n} {M}  \pwred\!   \opair {n+k} {M'}}{n>0 & \opair 0 {M}  \pwred    \opair {k} {M'} }
	\\[-25pt]
	\end{gather*}
\caption{Parallel weak reduction in the payoff calculus}\label{fig:parallel_payoff}
\end{figure}

	The  payoff $\lam $-calculus   (called cost $\lam $-calculus in \cite{LagoG19,GavazzoF21}) extends the $\lam$-calculus with a 
	ticking operation. Its intrinsic purpose is to  facilitate an  intensional analysis of programs,   endowing  terms with   constructs  to  perform cost analysis.

Let	$\Lambda_{\tick}$ denote the set  of $\lam$-terms extended  with a unary  operator $\tick$. 
	The elements of the payoff calculus are  pairs  $\m = \opair n M$ of a counter $n\in \Nat$ and a closed term $M\in \Lambda_{\tick}$. Intuitively, 
	the term   $\tick   (P)$ increments the counter by $1$, and continues as $P$.
	Following \cite{GavazzoF21}, in  $ (\Nat\times \Lambda_{\tick})$ we define the full reduction $\red$ and the weak reduction $ \wred $ as follows:
\begin{center}
		$ \red \;\defeq\; \redbv \cup \; \wredx{\tick} $\qquad
	$ \wred \;\defeq\; \wredx{\betav} \cup \; \wredx{\tick} $
\end{center}
	where $\wredx{\tick}$,$\redbv$, $\wredbv \subseteq   (\Nat\times \Lambda_{\tick})  $ are  given in \Cref{fig:payoff}. 
Note that  weak effectful reduction
$\wredx{\tick}$ is   the closure under weak  context $\ww$ of the rule $(\tick.P)\Root{\tick}P$ (effects are only allowed under weak context). 
Left and right reductions $\lred$ and  $\xredx{\rsym}{}$ 
can  be  defined  similarly.

	\SLV{}{
	\paragraph{Payoff of convergent computations.}
	Assume we observe (in $\Nat^\infty$) the payoff  associated to the computation of
	a value
	\[
	\left\{
	\begin{array}{ll}
		\obs_{v} (n:V) \deff n &  \\
		\obs_{v} (n:M) \deff 0 & \mbox{ if } M \not\in \Val 
	\end{array}
	\right.\]
	
	%
	It is easy to check that the  reduction $\wred$ satisfies the  $\obs_v$-diamond property.  Hence
	\begin{fact}Let $M \in \ \Lambda_{\tick}$. Assume 
		$M:0\wred^k (n:V)$, then \emph{every} maximal  $\wred$-reduction ends  after $k$
		steps in the same pair $(n:V)$.
	\end{fact}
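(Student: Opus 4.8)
The plan is to read the statement as a direct instance of Newman's Random Descent, once the diamond property of $\wred$ is in place. First I would record the one ingredient that drives everything: that $\wred$ on $\Nat\times\Lambda_{\tick}$ satisfies the RD-diamond property of \Cref{fact:diamond} --- this is precisely the ``easy to check'' $\obs_v$-diamond asserted just above (of which RD-diamond is a part, cf.\ \Cref{def:WRD}). To verify it I would take two distinct coinitial steps $\opair n N \wred \m_1$ and $\opair n N \wred \m_2$, firing redexes $R_1$ and $R_2$, and observe that they necessarily occupy \emph{disjoint} positions: a $\betav$-redex $(\lambda x.M)V$ has a value $V$ as argument, and neither a value nor the body of an abstraction is reachable by a weak context $\ww$, so no fired $\wred$-redex can lie strictly inside another (likewise a $\tick$-redex $\tick.P$ has its argument $P$ under the operator, again unreachable by $\ww$). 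Firing one step therefore leaves the other redex untouched, so $\m_1 \wred \m_3$ and $\m_2 \wred \m_3$ for a common $\m_3$, and the counter of $\m_3$ equals $n$ increased by the number of $\tick$-steps among $R_1,R_2$, independently of the order; hence the two derivations reach literally the same pair. This is RD-diamond (in fact a genuine diamond).

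Next I would invoke \Cref{fact:diamond} to conclude that $\wred$ has Random Descent: from any pair, all maximal $\wred$-sequences have the same length, and when that length is finite they all end in the same element. The last point to check is that $\opair n V$ is a $\wred$-normal form: $V$ is a closed value, i.e.\ a closed abstraction, and a weak context never places its hole under a $\lambda$, so $\opair n V$ has no $\wred$-redex. Hence the hypothesised reduction $\opair 0 M \wred^k \opair n V$ is a \emph{maximal} $\wred$-sequence, of finite length $k$. Random Descent then forces every maximal $\wred$-sequence from $\opair 0 M$ to have length exactly $k$ and to end in $\opair n V$; in particular the observed payoff $\obs_v$ of the final pair is $n$.

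I do not expect a real obstacle: the only mildly delicate step is the absence of critical pairs in the diamond check, which rests exactly on the fact that in \emph{weak} CbV the argument of a $\betav$-redex is already a value (hence not decomposed by $\ww$), together with the routine bookkeeping of the counter under the two possible orders --- and this is why the excerpt states it as ``easy to check''. Everything after that is a black-box application of the Random Descent machinery already set up in \Cref{sec:rewriting}.
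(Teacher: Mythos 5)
Your proposal is correct and follows essentially the same route as the paper, whose entire justification is the sentence preceding the Fact: one checks that $\wred$ is ($\obs_v$-)diamond and then concludes by Newman's Random Descent (\Cref{fact:diamond}). You merely make explicit the two details the paper leaves to the reader --- that distinct weak redexes occupy disjoint positions (since $\ww$ never enters an abstraction body, a value argument, or the scope of $\tick$), and that $\opair{n}{V}$ is $\wred$-normal so the given sequence is maximal --- which is exactly the intended argument.
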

}
	


The pair $ 	\brack{(\Lambda_\tick,\red), \obs}$ is a QARS where we   observe the payoff, \ie~$ 	\obs \opair  n M = n $.
We  now prove (using \Cref{thm:ACompl})  that $\wred=\wredbv \cup \; \wredx{\tick}$ is  asymptotically complete for $\red$.

\begin{lemma}\label{lem:payoff}	For every pair $\m = \opair n M$, $\m \ \red\conv{} \cpo{n}$ implies 	$\m \ \wred\conv{} \cpo{n}$, because 
	\begin{itemize}
		\item $\weak$-factorization of $\red$: if $\m \ \red^* \ \n$ then $\m \ \wred^*\cdot \nwred^* \ \n$;
		\item $\obs$-neutrality : if	$\m \ \nwred \ \m'$ then  $\obs {(\m)} = \obs {(\m')}$.
	\end{itemize}
\end{lemma}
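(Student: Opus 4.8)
The plan is to obtain the displayed implication as a direct instance of the asymptotic completeness criterion \Cref{thm:ACompl}, applied to the QARS $\brack{(\Lambda_\tick,\red),\obs}$ with $\ered := \wred$ and $\nered := \nwred$. With this instantiation, hypothesis~(i) of \Cref{thm:ACompl} is exactly the stated $\weak$-factorization of $\red$, and hypothesis~(ii) is exactly the stated $\obs$-neutrality; the conclusion of \Cref{thm:ACompl} is then precisely ``$\m \ \red\conv{}\cpo{n}$ implies $\m \ \wred\conv{}\cpo{n}$''. So it remains only to prove the two bullets. A preliminary observation used throughout: $\wredx{\tick}$ is the closure of the $\tick$-rule under the \emph{weak} context $\ww$, hence every $\tick$-step is weak; since $\red = \redbv \cup \wredx{\tick}$, a $\red$-step is non-weak iff it is a non-weak $\betav$-step, i.e. $\nwred\, =\, \nwredbv$.

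\emph{$\obs$-neutrality.} If $\m \nwred \m'$ then $\m \nwredbv \m'$ by the observation above; a $\betav$-step rewrites only the term component and leaves the counter unchanged, so writing $\m = \opair n M$ and $\m' = \opair n {M'}$ we get $\obs(\m) = n = \obs(\m')$.

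\emph{$\weak$-factorization of $\red$.} I would prove this by a postponement argument: since the non-weak $\red$-steps are exactly the $\nwredbv$-steps, it suffices to show $\nwredbv\cdot\wred\, \subseteq\, \wred\cdot\nwredbv^*$ and then conclude $\red^*\subseteq \wred^*\cdot\nwredbv^* = \wred^*\cdot\nwred^*$ by the standard iterated postponement lemma for abstract rewriting. The postponement clause splits along $\wred = \wredbv \cup \wredx{\tick}$. First, $\nwredbv\cdot\wredbv \subseteq \wredbv\cdot\nwredbv^*$: as the counter is inert under $\betav$, this is the weak factorization of the pure Call-by-Value $\lambda$-calculus (the case $\surf=\weak$ of $\redbv^*\subseteq\sredbv^*\cdot\nsredbv^*$ recalled in \Cref{sec:weak}), read through the one-step swap lemma underlying it; firing the weak redex first may duplicate the non-weak one, whence the $\nwredbv^*$, which is harmless for the target shape. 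Second, $\nwredbv\cdot\wredx{\tick} \subseteq \wredx{\tick}\cdot\nwredbv$ by a position argument: the fired $\tick$-redex sits at a position whose path from the root uses only application moves (it lies inside a weak context $\ww$), whereas the fired non-weak $\betav$-redex sits at a position whose path from the root contains at least one $\lambda$-move or $\tick$-move; hence the $\betav$-redex position cannot be a prefix of the $\tick$-redex position, and the two positions cannot coincide (one carries a $\tick$-redex, the other a $\betav$-redex). In the remaining configurations the two redexes occupy disjoint subterms, or the $\tick$-redex lies strictly above the $\betav$-redex; in both cases the two steps commute, each order firing exactly one $\tick$-step, so the counter increments agree. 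Combining the two clauses gives $\weak$-factorization of $\red$, and together with $\obs$-neutrality the lemma follows from \Cref{thm:ACompl}.

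The main obstacle is the second postponement clause: one must check that swapping a non-weak $\betav$-step past a $\tick$-step never duplicates (or erases) the $\tick$-redex, which is exactly what the position argument secures, together with the routine bookkeeping that postponing across $\wredbv$ may multiply the non-weak steps but only pushes them to the tail of the sequence. The pure-CbV weak factorization can be taken as a black box from \Cref{sec:weak}, so no new rewriting theory is needed.
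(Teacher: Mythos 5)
Your overall architecture coincides with the paper's: the lemma is obtained by instantiating \Cref{thm:ACompl} with $\ered:=\wred$, so that the two bullets are exactly hypotheses (i) and (ii), and your proof of $\obs$-neutrality (every non-weak step is a non-weak $\betav$-step, which leaves the counter untouched) is precisely the ``straightforward case analysis'' the paper alludes to. Where you diverge is the first bullet: the paper does not prove $\weak$-factorization of $\red$ at all, but imports it from \cite{GavazzoF21}, where it is established (as \emph{surface factorization}) uniformly for all CbV monadic calculi, including the payoff calculus. Your self-contained postponement argument is therefore a genuinely different route; it buys independence from that reference at the cost of redoing a known piece of rewriting theory.

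The problem is that both of your local postponement clauses are false as literally stated, and for the same reason: after the swap, the formerly non-weak $\betav$-step may become \emph{weak}. For the second clause take $\m=\opair{0}{\tick(\id y)}$: the redex $\id y$ sits under the $\tick$ operator, hence is non-weak, so $\m\nwredbv\opair{0}{\tick(y)}\wredx{\tick}\opair{1}{y}$; but the only reordering is $\m\wredx{\tick}\opair{1}{\id y}\wredbv\opair{1}{y}$, whose second step is weak, so $\nwredbv\cdot\wredx{\tick}\not\subseteq\wredx{\tick}\cdot\nwredbv$. The first clause fails analogously on $(\lam x.(\lam y.y)z)\,\id$, where the inner redex is non-weak only because of the outer $\lam x$ and surfaces to a weak position once the outer redex is fired. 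Your position argument correctly shows that the two redexes cannot overlap badly and that each has a unique residual, but it then draws the wrong conclusion about the \emph{label} of that residual. This is repairable: the diagrams that actually hold have the shape $\nwred\cdot\wred\subseteq\wred\cdot\red^{*}$, which still yields factorization via the modular linear-swap technique of \cite{AccattoliFG19} (Hindley-style strong postponement, which your phrasing implicitly invokes, is not directly applicable). But as written, the key inclusions your induction rests on do not hold, so the factorization bullet is not yet proved; the quickest fix is to do what the paper does and cite \cite{GavazzoF21}.
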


Weak reduction $\wred$ however does \emph{not} have a unique limit, as \Cref{ex:payoff_lim} below illustrates. An unsatisfactory  solution would be to  fix   a deterministic evaluation order (left or right, as in point 1. below), making the limit easy to predict but also rather \emph{arbitrary}.
	\begin{example}\label{ex:payoff_lim}
		Consider  $M=(\Delta\Delta)(\dtick\dtick)$, where $\dtick= \lam x.\tick(xx) $, and let
		$\m=\opair{0}{M}$. \begin{enumerate}
			\item By fixing  left  (resp.~right)  evaluation,   $\wLim (\m,
			\lred)=\{0\} $  (resp.~$\wLim (\m, \xredx{\rsym}{})=\{\infty\} $).
			\item By choosing   a redex in 
			unspecified order, we have an uncountable number of $\wred$-sequences,
			leading to $\wLim (\m, \wred)=\{0,1, \dots\,\infty\} = \Nat^\infty $.
		\end{enumerate}
	\end{example}
A way out is to  proceed somehow similarly to  \Cref{sec:PCbV}. 
If we examine  more closely 
	the  set  of limits associated with $\wred$, we realize that $\xLim{\obs}{(\m,\wred)}$ does have a greatest element.
	Thus  $\sem {\m}$ can naturally be defined as   the best possible payoff from $\m$.
We prove that  parallel reduction  $\xredx{\pw}{}$  (given in \Cref{fig:parallel_payoff})
is a 
 (\emph{multistep}) strategy  which is guaranteed to  compute  $\m$. Indeed,  
	it is easy to verify that  $\xredx{\pw}{}$ is asymptotically complete for $\wred$. By composing with  \Cref{lem:payoff} we have that $\xredx{\pw}{} $ is asymptotically complete for $\red$ (point 1. below). 
	\begin{lemma}\label{l:completeness-uniqueness-tick}
		\begin{enumerate}
			\item \emph{Asymptotic Completeness.} 
			If $\opair{k}{M} \;\wred\conv{} \cpo{ n}$ then $\opair{k}{M} \xredx{\pw}{} \conv{} \cpo {n'}$ and $\cpo{n} 	\leq \cpo{n'}$. That is, $\xredx{\pw}{} $ is asymptotically complete for $\wred$  and  (by  \Cref{lem:payoff}) for $\red$.
			\item \emph{Unique Limit. } The reduction $ \xredx{\pw}{}  $ is deterministic.
		\end{enumerate}
	\end{lemma}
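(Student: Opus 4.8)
The plan is to treat the two parts separately; the second is essentially immediate. For \textbf{Part 2} (determinism of $\pwred$) I would run a routine syntax-directedness check: by induction on $M$, every pair $\opair 0 M$ has exactly one $\pwred$-reduct. If $M$ is a value, only the reflexivity rule applies; if $M$ is a root $\betav$-redex or a root $\tick$-redex, only the matching root rule applies (and these two cases are mutually exclusive, a $\tick$-redex never being a $\betav$-redex); if $M=P_1P_2$ is an application that is not a root $\betav$-redex, only the application rule applies --- the root rules having priority, exactly as in the presentation of \Cref{fig:weak_parallel} --- and each $\opair 0{P_i}$ has a unique $\pwred$-reduct by the induction hypothesis. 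The $n>0$ rule then reduces the case $n>0$ deterministically to the case $n=0$. So $\pwred$ is a total function; in particular $\m$ has a single maximal $\pwred$-sequence, so $\xLim{\obs}(\m,\pwred)$ is a singleton.

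For \textbf{Part 1} I would first reduce the statement to the single claim that $\pwred$ is asymptotically complete for $\wred$. Once that is proved, \Cref{lem:payoff} gives that $\wred$ is asymptotically complete for $\red$, and ``asymptotically complete for'' composes, because $\leq$ on $\Nat^\infty$ is transitive: from $\m\red\conv{}\cpo n$ we obtain $\m\wred\conv{}\cpo{n'}$ with $\cpo n\leq\cpo{n'}$ (by \Cref{lem:payoff}), and then $\m\pwred\conv{}\cpo{n''}$ with $\cpo{n'}\leq\cpo{n''}$, whence $\cpo n\leq\cpo{n''}$ --- which is exactly the displayed inequality of Part 1.

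To prove that $\pwred$ is asymptotically complete for $\wred$, I would use two ingredients. First, $\pwred\subseteq\wred^*$: a parallel weak step is just the contraction, in any order, of all the weak redexes it fires (easy induction on the derivation of $\pwred$); in particular the payoff counter is monotone non-decreasing along $\pwred$, as it is along $\wred$. Second, a \emph{cofinality lemma} for the (total, deterministic) reduction $\pwred$: whenever $\m\wred^*\ud$, there is $\n'$ \emph{on the $\pwred$-sequence issued from $\m$} with $\ud\wred^*\n'$. Granting these, let $\m=\m_0\wred\m_1\wred\cdots$ be a $\wred$-sequence with $\sup_i\obs(\m_i)=\cpo n$. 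For each $i$, applying the cofinality lemma to $\m\wred^*\m_i$ yields such an $\n'$ with $\m_i\wred^*\n'$, hence $\obs(\n')\geq\obs(\m_i)$ (the counter never decreases along $\wred$). Therefore the limit $\cpo{n'}$ of the $\pwred$-sequence from $\m$ dominates every $\obs(\m_i)$, so $\cpo{n'}\geq\cpo n$ --- which is asymptotic completeness of $\pwred$ for $\wred$.

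The cofinality lemma is the crux and the step I expect to be the main obstacle. It is an instance of the classical fact that the full-development (Gross--Knuth) strategy is cofinal in a reduction graph; for weak $\betav$-reduction this is standard, and the unary operator $\tick$ contributes nothing delicate on the rewriting side, its elimination $\tick(P)\mapsto P$ being non-duplicating and non-erasing. Concretely I would derive it from the one-step form ``$\m\wred\m'$ and $\m\pwred\n'$ imply $\m'\wred^*\n'$'' --- the single weak step being part of the finite development performed by the parallel step --- together with confluence of $\wred$. The one genuinely new point, with respect to the pure-$\lambda$ statement, is the bookkeeping of the payoff counter: a $\betav$-contraction may duplicate its argument, but a $\betav$-argument is a \emph{value}, hence contains no weak $\tick$-redex, so duplication neither creates nor destroys pending ticks; this is what guarantees that a parallel step increments the counter by at least as much as any interleaving of the weak steps it subsumes, so that $\pwred$ never falls behind $\wred$ in payoff.
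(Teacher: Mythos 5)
Your proposal is correct and follows the same decomposition as the paper: Part~2 is immediate from the syntax-directed form of the rules in Fig.~\ref{fig:parallel_payoff} (with the root rules taking priority, as you rightly read off from Fig.~\ref{fig:weak_parallel}), and Part~1 is obtained by proving asymptotic completeness of $\xredx{\pw}{}$ for $\wred$ and then composing with \Cref{lem:payoff}, exactly as the paper indicates. The paper leaves the completeness of $\xredx{\pw}{}$ for $\wred$ as ``easy to verify''; your Gross--Knuth-style cofinality lemma, together with the observation that $\betav$-arguments are values and hence carry no pending weak $\tick$-redexes, is a sound and adequately detailed way to discharge that step.
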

Since (by points 1. and 2. in \Cref{l:completeness-uniqueness-tick}) both conditions of \Cref{thm:main} are verified, we have:

	\begin{theorem}[Main, payoff] 
	Given the   QARS $\brack{(\Nat\times \Lambda_{\tick}, \red), \obs}$, for each pair  $\m = \opair{k}{M}$, 
		$\sem {\m}$ 
	is  defined, and $\m \xredx{\pw}{}  \conv{}  \sem {\m} $.
Hence, 	multistep reduction 
	$\xredx{\pw}{}$ is  asymptotically normalizing   for $\wred$ and for $\red$ (\Cref{rem:Anormalizing}).
\end{theorem}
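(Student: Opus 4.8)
The plan is to obtain the theorem as a direct instance of \Cref{thm:main}, applied to the QARS $\brack{(\Nat\times\Lambda_{\tick},\red),\obs}$ with subreduction $\stred := \pwred$ and full reduction $\red = \redbv \cup \wredx{\tick}$. So I would first check the two hypotheses of \Cref{thm:main} --- asymptotic completeness of $\pwred$ for $\red$, and uniqueness of the $\pwred$-limit --- and then read off that $\sem\m$ is defined and $\m\ \pwred\conv{}\sem\m$, with the closing sentence about asymptotic normalization coming from \Cref{rem:Anormalizing}.

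For hypothesis (i), I would invoke point~1 of \Cref{l:completeness-uniqueness-tick}, which already states that $\pwred$ is asymptotically complete for $\red$. Unfolding the argument for the reader: $\pwred$ is asymptotically complete for $\wred$ by comparing, step by step, a $\pwred$-sequence with an arbitrary $\wred$-sequence from the same pair --- a single parallel step fires \emph{every} redex in weak position simultaneously, hence dominates a single weak step --- and it is then asymptotically complete for $\red$ by composing this with \Cref{lem:payoff}, which supplies exactly the two hypotheses of the factorization criterion \Cref{thm:ACompl}, namely $\weak$-factorization of $\red$ and $\obs$-neutrality of the non-weak steps. Concretely, $\m\ \red\conv{}\cpo{n}$ gives $\m\ \wred\conv{}\cpo{n}$ by \Cref{lem:payoff}, and then $\m\ \pwred\conv{}\cpo{n'}$ with $\cpo{n}\leq\cpo{n'}$.

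For hypothesis (ii), I would use point~2 of \Cref{l:completeness-uniqueness-tick}: $\pwred$ is deterministic, so from $\m$ there is a unique maximal $\pwred$-sequence, and since the codomain $\Nat^\infty$ of $\obs$ is an $\omega$-cpo this sequence has a (unique) limit; hence $\xLim{\obs}(\m,\pwred)$ is a singleton. Note that this is where the payoff case is \emph{simpler} than the probabilistic case of \Cref{sec:PCbV}: here I do not need the $\obs$-diamond / random-descent machinery, plain determinism suffices. Applying \Cref{thm:main} then yields that $\sem\m$ is defined and that every $\pwred$-sequence from $\m$ converges to $\sem\m$; since there is only one such sequence, $\m\ \pwred\conv{}\sem\m$. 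Finally, for the last sentence: $\pwred$ agrees with $\wred$ and with $\red$ on normal forms and $\pwred\subseteq\red^+$ (modulo the harmless reflexivity on value pairs built into \Cref{fig:parallel_payoff}), so it is a multi-step strategy for both $\wred$ and $\red$; since $\sem\m$ is defined for every $\m$, \Cref{rem:Anormalizing} lets me identify ``satisfying the conditions of \Cref{thm:main}'' with ``being an $\obs$-normalizing strategy'', and the claim follows.

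The genuine content of the theorem lives in the two preceding lemmas, \Cref{lem:payoff} and \Cref{l:completeness-uniqueness-tick}, so at this level there is no real obstacle; I anticipate that the one place where care is needed --- and where it has already been dispatched --- is asymptotic completeness of $\pwred$ for $\wred$: one must argue that collecting \emph{all} weak redexes into a single parallel step never loses payoff relative to an ``unlucky'' (say, left-biased) weak reduction, which is precisely the breadth-first feature that distinguishes $\pwred$ from the deterministic left/right schemes of \Cref{ex:payoff_lim}, and that is what makes its unique limit the \emph{greatest} element of $\xLim{\obs}(\m,\wred)$ rather than an arbitrary one.
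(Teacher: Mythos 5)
Your proposal is correct and follows essentially the same route as the paper: the theorem is obtained as an instance of \Cref{thm:main}, with hypothesis (i) supplied by point 1 of \Cref{l:completeness-uniqueness-tick} (completeness for $\wred$ composed with \Cref{lem:payoff}) and hypothesis (ii) by point 2 (determinism of $\xredx{\pw}{}$), and the final claim read off from \Cref{rem:Anormalizing}. The extra detail you unfold --- why a parallel step dominates an arbitrary weak step, and why determinism makes the random-descent machinery unnecessary here --- matches what the paper leaves implicit as ``easy to verify''.
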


	
	\SLV{}{
	\begin{restatable}{proposition}{prop:payoff}
		\begin{enumerate}
		
				\item \emph{$ \xredx{\pw}{} $ is Asymptotically Complete.}   
			$(k:M) \wred\conv{} \cpo{ n}$ implies 	\big($k:M \xredx{\pw}{}\ \conv{} \cpo {n'}$ and $\cpo{n} 	\leq \cpo{n'}$\big).
			\item \emph{Unique Limit. } The reduction $ \xredx{\pw}{}  $ is deterministic.
		\end{enumerate}
	\end{restatable}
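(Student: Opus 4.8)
The proposition bundles two essentially independent facts about the parallel weak reduction $\xredx{\pw}{}$ of \Cref{fig:parallel_payoff}, so the plan is to prove them separately: first the determinism of point 2 (the easy half), then, using determinism together with the diamond property of weak reduction, the asymptotic completeness of point 1. For point 2 I would argue that the rules defining $\xredx{\pw}{}$ are syntax-directed and their cases are mutually exclusive and exhaustive, so every pair $\opair n M$ has at most one reduct. I first reduce to the case $n=0$: the last rule merely carries a positive counter through additively, so $\opair n M \pwred \opair{n+k}{M'}$ iff $\opair 0 M \pwred \opair k{M'}$, and determinism for $n=0$ transfers. For $\opair 0 M$ I proceed by structural induction on the closed term $M$. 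A closed $M$ is exactly one of: a root $\betav$-redex $(\lam x.P)V$, handled only by the $\betav$ rule; a $\tick$-redex $\tick.P$, handled only by the $\tick$ rule; a value (for closed terms, an abstraction), handled only by the value rule; or an application that is not a root redex, handled only by the application rule. In this last case the two premises $\opair 0{P_1}\pwred\opair{k_1}{P_1'}$ and $\opair 0{P_2}\pwred\opair{k_2}{P_2'}$ concern structurally smaller terms, so by the induction hypothesis $k_1,P_1',k_2,P_2'$ are uniquely determined, and hence so is $\opair{k_1+k_2}{P_1'P_2'}$. Since the cases are disjoint and cover all closed terms, $\xredx{\pw}{}$ is deterministic.

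For point 1, fix $\opair k M$ and a $\wred$-sequence $\langle \m_j\rangle_j$ witnessing $\opair k M \wred\conv{} \cpo n$, so $\cpo n=\sup_j \obs(\m_j)$. By determinism there is a \emph{unique} maximal $\xredx{\pw}{}$-sequence $\langle \mathbf m_t\rangle_t$ from $\opair k M$; let $\cpo{n'}=\sup_t \obs(\mathbf m_t)$ be its limit. It suffices to show $\cpo n\leq\cpo{n'}$, and I would do so in three steps. (a) $\xredx{\pw}{}\subseteq\wred^*$ with payoff tracked exactly: each parallel step decomposes into a finite sequence of weak $\betav$- and $\tick$-steps (the application rule fires the subterm redexes inside the contexts $\ww M$ and $M\ww$, and the increment $k_1+k_2$ counts precisely the $\tick$-redexes fired), so $\obs$ agrees along the decomposition. (b) Cofinality of the parallel strategy: each parallel step is the complete development of \emph{all} weak redexes of its source — because whenever the application root is fired it is the only weak redex present (its operands are values whose inner redexes lie under $\lam$, hence non-weak), and otherwise the weak redexes are exactly those of the two operands. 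Consequently $\langle\mathbf m_t\rangle_t$ is the Gross–Knuth / complete-development strategy, which is cofinal for the diamond (indeed RD-diamond, \Cref{fact:diamond}) reduction $\wred$: for every $j$ there is a $t$ with $\m_j\wred^*\mathbf m_t$. (c) Passage to the limit: by monotonicity of $\obs$ along $\wred$ (\Cref{def:qars}), $\m_j\wred^*\mathbf m_t$ gives $\obs(\m_j)\leq\obs(\mathbf m_t)\leq\cpo{n'}$ for every $j$; taking the supremum over $j$ yields $\cpo n\leq\cpo{n'}$. This is asymptotic completeness of $\xredx{\pw}{}$ for $\wred$; composing it with \Cref{lem:payoff} gives asymptotic completeness for $\red$, which is the second half of point 1 as stated.

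The main obstacle is step (b), specifically the strip lemma underlying cofinality: $\m\wred\m'$ and $\m\pwred\mathbf m$ should imply $\m'\wred^*\mathbf m$. Although the diamond property of $\wred$ makes cofinality of a complete-development strategy expected, the argument must track residuals through the finite development carefully and, crucially for this calculus, must be \emph{payoff-aware}: one has to verify that catching up with $\mathbf m$ from $\m'$ re-fires exactly the residuals of the original weak redexes (not redexes newly created by substitution) and therefore fires no fewer $\tick$-redexes than $\m$ does, so that the counter in $\mathbf m$ dominates. Once this payoff-aware strip lemma is established, steps (a) and (c) and the whole of point 2 are routine, and both conditions of \Cref{thm:main} are then in place for the downstream normalization result.
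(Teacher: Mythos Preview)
Your argument is correct and considerably more detailed than what the paper offers: the paper simply asserts that ``it is easy to verify that $\xredx{\pw}{}$ is asymptotically complete for $\wred$'' and leaves it at that, so there is no real proof to compare against. Your route via cofinality of the complete-development strategy is a perfectly good way to fill in the gap, and it is in the same spirit as the analogous result the paper does spell out for $\xredx{\pd}{}$ (\Cref{lem:parallelU}, part 2).

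Two remarks. First, you slightly oversell the difficulty of step (b). The strip lemma you isolate is in fact almost immediate here: the weak redexes of a term are pairwise \emph{disjoint} (if the root is a $\betav$- or $\tick$-redex it is the \emph{only} weak redex, since its immediate subterms are values or sit under $\tick$, which is not a weak context; otherwise the weak redexes split between the two sides of the outermost application). Hence firing one weak redex leaves all the others untouched, with no duplication and no residual tracking needed; firing the remaining ones reaches $\mathbf m$ on the nose, and the counter bookkeeping is then literally additive. So the ``payoff-aware residual analysis'' you flag as the main obstacle collapses to a one-line observation about disjointness. Second, your determinism argument tacitly assumes the side condition ``$P_1P_2$ is not a root redex'' on the application rule of \Cref{fig:parallel_payoff}; this condition is explicit in the pure version (\Cref{fig:weak_parallel}) but omitted---presumably by oversight---in the payoff figure, so it is worth stating the assumption.
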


\begin{thm}[Main, payoff]	Given the   QARS $\brack{(\Nat\times \Lambda_{\tick}, \red), \obs}$,
	%
	multistep reduction 
	$\xredx{\pw}{}$ is  asymptotically normalizing   for $\wred$ and  for $\red$.
\end{thm}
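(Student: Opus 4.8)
The plan is to derive this theorem as a direct instance of the abstract criterion \Cref{thm:main}, applied to the QARS $\brack{(\Nat\times\Lambda_\tick,\red),\obs}$ with the subreduction $\stred \defeq {\xredx{\pw}{}}$; the whole argument reduces to verifying the two hypotheses of \Cref{thm:main} and reading off its conclusion.

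First I would establish hypothesis (i), asymptotic completeness of $\xredx{\pw}{}$ for $\red$, by composing the two completeness facts already in place. \Cref{l:completeness-uniqueness-tick}.1 gives that $\xredx{\pw}{}$ is asymptotically complete for $\wred$, and \Cref{lem:payoff} --- itself an instance of the factorization criterion \Cref{thm:ACompl}, using $\weak$-factorization of $\red$ together with the fact that non-weak steps leave the counter untouched, hence are $\obs$-neutral --- gives that $\wred$ is asymptotically complete for $\red$. Chaining them: if $\m\,\red\conv{}\cpo n$ then $\m\,\wred\conv{}\cpo{n_1}$ with $\cpo n\leq\cpo{n_1}$, hence $\m\,\xredx{\pw}{}\conv{}\cpo{n_2}$ with $\cpo{n_1}\leq\cpo{n_2}$, and transitivity of $\leq$ closes it. This composition is immediate from the definition of asymptotic completeness and needs no extra lemma.

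Next I would check hypothesis (ii), uniqueness of the $\xredx{\pw}{}$-limit. By \Cref{l:completeness-uniqueness-tick}.2 the multistep reduction $\xredx{\pw}{}$ is deterministic, so from any $\m$ there is a single maximal $\xredx{\pw}{}$-sequence; hence $\xLim{\obs}(\m,\xredx{\pw}{})$ is a singleton and (ii) holds with nothing to prove. Applying \Cref{thm:main} then yields both claims of the theorem at once: $\sem\m$ is defined and $\m\,\xredx{\pw}{}\conv{}\sem\m$ (the sequence being unique, there is no quantification left). For the ``hence'' clause I would invoke \Cref{rem:Anormalizing}: since $\sem\m$ is now known to be defined for every $\m$, satisfying the hypotheses of \Cref{thm:main} is equivalent to being an $\obs$-normalizing strategy, so $\xredx{\pw}{}$ is asymptotically normalizing for $\red$; running the same two-step argument with $\wred$ in place of $\red$ --- and using that asymptotic completeness makes $\wLim(\m,\wred)$ and $\wLim(\m,\red)$ share their greatest element --- gives asymptotic normalization for $\wred$ as well.

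At the level of this theorem there is essentially no obstacle: it is pure bookkeeping over the abstract machinery of \Cref{sec:strategies}. The substantive content sits one layer down, in the two lemmas being invoked --- \Cref{lem:payoff}, which must establish $\weak$-factorization of $\red$ and the neutrality of non-weak steps, and \Cref{l:completeness-uniqueness-tick}, which must show that the parallel step $\xredx{\pw}{}$ collects, at the limit, at least as much payoff as any $\wred$-sequence can; the delicate point there is that firing all weak $\betav$- and $\tick$-redexes simultaneously never forfeits payoff that some sequential weak schedule would have accumulated, which rests on the diamond/random-descent structure of $\wred$ recorded earlier, plus determinism of $\xredx{\pw}{}$. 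The one caveat worth flagging is that the reflexivity-on-values rule in \Cref{fig:parallel_payoff} makes $\xredx{\pw}{}$ reflexive on values, so strictly ${\xredx{\pw}{}}\not\subseteq{\red^+}$; as elsewhere in the paper this is a harmless shortcut that does not affect the multistep-strategy reading.
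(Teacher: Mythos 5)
Your proposal matches the paper's own argument: the paper likewise derives the theorem by verifying the two hypotheses of \Cref{thm:main} via \Cref{l:completeness-uniqueness-tick} (completeness by composing with \Cref{lem:payoff}, uniqueness of the limit by determinism of $\xredx{\pw}{}$) and then concludes asymptotic normalization for both $\wred$ and $\red$ through \Cref{rem:Anormalizing}. The extra observations you add (explicit transitivity of the composition, the reflexivity-on-values caveat) are harmless refinements of the same route.
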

}

\ArX{}{}{
	\paragraph{$\lam $-calculus with outputs.} The calculus in   \Cref{ex:output} can be formalized in a similar way to the payoff calculus.
	We can define	$\obs (\opair s M) \eq s$.
	As already noted, $\wred$ is not confluent,  and given a pair $\m$,  the set of limits may contain uncountably many different   elements. Still,  the reduction  has interesting    properties, {which appear when looking not directly at the string $s$ itself, but  at its length $|s|$.}  This way, 
	one  can  transfer the  results from the payoff calculus.}

\section{Conclusions}\label{sec:conclusions}
We  propose a method to study  completeness and  normalization when the result of computation is \emph{asymptotic}.
Our  techniques  abstract from    details specific to the calculus under study---they are therefore of \emph{general} application.
The robustness of the method is witnessed by its ability to deal with different  settings and \emph{different notions of asymptotic computation}. 

The application to {probabilistic $\lam$-calculus} yields a result of independent interest: a theorem of \emph{asymptotic normalization},  both for CbV and  CbN probabilistic $\lambda$-calculi. Remarkably, the same definitions and proof techniques apply \emph{uniformly} to both. In the paper 
we prefer to give  the details for the  CbV calculus, which is arguably a  more
natural one in presence of effects.
%



\paragraph{Related work.}  QARS, proposed  in \cite{parsLMCS} in the setting of probabilistic rewriting, refine Ariola and Blom's  ARSI \cite{AriolaBlom02}. 
The techniques  in \Cref{sec:strategies} are an original contribution of this paper. 
Our \Cref{thm:ACompl} generalizes an ARS  technique for finitary normalization 
(studied in \cite{AccattoliFG19,OostromT16,LOrevisited}) 
to asymptotic computation, refining it for arbitrary observations.


	The study of  reduction  strategies in a probabilistic $\lam$-calculus where the notion of reduction is  general---rather than simply fixing  a deterministic reduction---started in \cite{FaggianRonchi} (CbV and CbN) 
and \cite{Leventis19} (CbN).  {Asymptotic completeness} is  there  established only for \emph{surface}  normal forms (values in closed CbV, hnf's in CbN).
Strategies that are complete for  \emph{full normal forms} (which we treat and solve here) are more difficult to study than head or weak reduction, especially in the CbV setting. The question of defining  such a strategy was left open in \cite[Remark 27]{FaggianRonchi}. 
We stress that our  technique would also yield  a simpler proof  of  the results in \cite{FaggianRonchi,Leventis19}, where  confluence is  used  to establish that a greatest limit exists. 
The (non-trivial) proofs {there} use   properties that are \emph{specific} to  probability distributions.
The method we propose here avoids technical issues, it  is much  simpler,  and it is general, in that it can be applied to other settings.

%

Finally, we mention that forgoing confluence and studying uniqueness of normal forms via a complete subreduction is a route already employed  in the context of infinitary $\lam$-calculi \cite{BerarducciIntrigila96, BarendregtKlop}.

\newpage

\bibliography{biblioPARS}

\clearpage
\appendix
\appendix
\section*{APPENDIX}
We include some  proofs and  details that have been omitted in the article.

\ArX
{
	\paragraph{One more example.}\label{app:QARS}
	 $\wLim(t,\red)$ may have a lub but not a maximum---similarly to $\Nat$. 
	 \begin{example}[\Cref{sec:QARS}, QARS]\label{ex:nomax}We revisit 
	 	\Cref{ex:strings},  allowing full reduction $\redbv$. Let $\obs_p(\opair s M)=s$ if $M\in \Val$, $\bot$ otherwise.
	 	The pair  $\m \eq \opair  \epsilon  {(\lam z. I)(\lam z.\dout 0\dout 0)}$ has countably many limits, but not a greatest one, because all strings in  $ \xLim{\obs_p}(\m, \redbv)$ are finite.
	 \end{example}
}{}

\ArX{	}{
		\paragraph{Some  more examples.}\label{app:QARS}
	$\wLim(t,\red)$ may have a lub but not a maximum---similarly to $\Nat$. 
	\begin{example}[\Cref{sec:QARS}, QARS]\label{ex:nomax}We revisit 
		\Cref{ex:strings}, now allowing full reduction $\redbv$. Let $\obs_p(\opair s M)=s$ if $M\in \Val$, $\bot$ otherwise.
		The pair  $\m \eq \opair  \epsilon  {(\lam z. I)(\lam z.\dout 0\dout 0)}$ has countably many limits, but not a greatest one, because all strings in  $ \xLim{\obs_p}(\m, \redbv)$ are finite.
	\end{example}
	
	\begin{example}[\Cref{sec:wRD},  Weighted Random Descent, strings] By fine tuning the notion of observation, one can establish  non trivial properties of the   calculus $\OutC {\alphabet}$  in \Cref{ex:strings} (here reduction is  CbV and weak), such as the following, which is also an example of weighted RD.
		If $\m=\opair s M$ has a terminating computation   $\m  \ \wred^* \ \opair {s'} V$ for some value $V$, then \emph{all maximal $\wred$-sequences from $\m$ terminate} (in the same number of steps) 
		and all end exactly with \emph{the same value $V$ and  a string of the same length as $s'$}.	
	\end{example}
}
}

\SLV{}{
	\paragraph{\Cref{sec:P_Anormalization}: CbV asymptotic normalization.}
	\begin{example}\label{ex:CBVPfull}Some simple examples will help to see how the normalizing strategy works, and how it differs from surface reduction.  
		Recall that $\betav$-reduction is unrestricted, so for example  $M= \lam z. (Iz) \redbv \lam z.z$. Instead, $ \lam z. (Iz) \not\sred$, because surface reduction cannot fire under abstraction. So 
		\emph{surface reduction is not a complete strategy} w.r.t. $\betav$-normal forms.
		
		A direct  consequence is that surface reduction is not informative about normalization, as it  produces ``false positive''. For example,  $N= \lam z. \Delta\Delta$ is  diverging  w.r.t. $\betav$-reduction, but it is a surface normal form.
		Let us now incept probability (with the terms  $M$ and $N$ as above).
		\begin{enumerate}		
			\item  Let $R= (\lam x. M \oplus xx)(\lam x. M \oplus xx) $.  Then		
			$\mdist{R} \llfull  \mdist{M \oplus R} \llfull \mdist{ \two M, \two R} \llfull\mdist{ \two I, \two M \oplus R}  \llfull  \mdist{ \two I, \four M,  \four R} \llfull \cdots $. At the limit,  $R$ converges with probability $1$ to   $I$, as wanted. 
			
			\item The term $ S= (\lam x.  N \oplus xx)  (\lam x. N \oplus xx)$  converges to  normal form with probability $0$.  One can easily check that  
			$\mdist{S} \llfull\conv{}  \zero$.
			
			\item The term $ S'= (\lam x.  (N \oplus I) \oplus xx)  (\lam x. (N\oplus I )\oplus xx)$  converges with probability $\two$ to the normal form $I$.  One can easily check that  
			$\mdist{S'} \llfull\conv{}  \{ I ^ \two\}$.
			
			
			\ArX{}
			{	\item \RED{ Notice  that  from $S'$, the reduction $\llRed$ (instead of $\llfull$) would  admit also reduction sequences which produces no normal form, 		for example  $S' \llRed^* \mdist{\four N, \two I, S'} \llRed \mdist{\four N, \four I, \two S'} \llRed \mdist{\four N, \four  I,  \two S'} \cdots$. By persistently reducing only $N$, we have 	$\mdist{S'} \llRed\conv{}  \zero$. }}
		\end{enumerate}
	\end{example}
}

\paragraph{Surface reduction.}
Everywhere in the appendix, we fix surface reduction to be as follows. 
\begin{itemize}
	\item CbN ($\betab=\beta$): $\surf =\head$   (the contextual closure of $\hh$).
	\item CbV ($\betab=\betav$): $\surf= \weak$   (the contextual closure of $\ww$).
\end{itemize}

	\section{Properties of surface normal forms}
	We will use extensively the following easy fact.
	\begin{lemma}[Surface normal forms]\label{lem:ex_nf}\label{lem:snf} 
		$M$ is $\weak$-normal (resp. $\head$-normal) if there is no redex $R$ such that 	$M=\ww\hole R$ (resp. $M=\hh\hole R$).
		\begin{enumerate}
			\item \CbV. 
			Assume 
			$M \nwredbv M'$.
			$M$ is   $\weak$-normal  $\Leftrightarrow$ $M'$ is     $\weak$-normal.
			
			\item \CbN. Assume $M \nhredb M'$.	
			$M$ is   $\head$-normal  $\Leftrightarrow$ $M'$ is     $\head$-normal.
			
		\end{enumerate}
	\end{lemma}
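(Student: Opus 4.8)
The plan is to prove both statements by a single structural induction on $M$, after isolating two elementary facts about non-surface steps.

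First I would record a \emph{shape-preservation} fact: if $M \nwredbv M'$ (resp.\ $M \nhredb M'$) then $M$ and $M'$ fall in the same syntactic category. Indeed, a term that fires a step is not a variable; an abstraction $\lambda x.P$ is neither a $\beta$- nor a $\beta_v$-redex, so any step from it is inside $P$ and $M'$ is again an abstraction; and an application $PQ$ can be reduced only by a root step — which is a \emph{surface} step, since the empty context $\hole{}$ is both a weak and a head context — or by a step inside $P$ or $Q$, so $M'$ is again an application. In particular a non-surface step never creates nor destroys a root redex, because whether $PQ$ is a root redex depends only on $P$ being an abstraction (and, in CbV, $Q$ being a value), and these properties are shape-invariant. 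Second I would record a \emph{localization} fact: if $M = PQ$ and $M \nwredbv M'$, then the step takes place in $P$ or $Q$ and is itself non-weak there, i.e.\ $M' = P'Q$ with $P \nwredbv P'$ or $M' = PQ'$ with $Q \nwredbv Q'$; this is because a weak step inside $P$ (resp.\ $Q$) would, via the weak context $\ww\hole{}$ applied as $\ww\,Q$ (resp.\ $P\,\ww$), give a weak step of $PQ$. Likewise, if $M = \lambda x.P$ then $M' = \lambda x.P'$ with $P \nwredbv P'$, since $\lambda x.\ww$ is not a weak context. The analogues hold in CbN, with the caveat that $\lambda x.\hh$ \emph{is} a head context, so a step of $\lambda x.P$ is head iff the corresponding step of $P$ is.

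For CbV the induction on $M$ then runs as follows. If $M$ is a variable there is no step. If $M = \lambda x.P$, then $M$ and $M'$ are both abstractions, hence both $\weak$-normal (a weak context cannot enter a $\lambda$, and $M$, $M'$ are not themselves $\beta_v$-redexes), so the equivalence is trivial. If $M = PQ$, a weak redex of $PQ$ is either $PQ$ itself — when it is a root $\beta_v$-redex — or a weak redex inside $P$ or inside $Q$; hence $PQ$ is $\weak$-normal iff it is not a root redex and $P$, $Q$ are both $\weak$-normal. By localization, $M'$ is $P'Q$ (with $P \nwredbv P'$) or $PQ'$ (with $Q \nwredbv Q'$); by shape-preservation $PQ$ is a root redex iff $M'$ is; and by the induction hypothesis $P$ is $\weak$-normal iff $P'$ is (resp.\ $Q$ iff $Q'$), the other component being unchanged. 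Chaining these equivalences gives $M$ $\weak$-normal iff $M'$ $\weak$-normal.

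The CbN case is entirely parallel — replace $\weak$ by $\head$ and $\beta_v$ by $\beta$ — with one adjustment: in the case $M = \lambda x.P$ one uses that $\lambda x.P$ is $\head$-normal iff $P$ is (head redexes of $\lambda x.P$ are exactly the head redexes of $P$), so $M' = \lambda x.P'$ with $P \nhredb P'$ and one concludes by the induction hypothesis on $P$; in the case $M = PQ$, a head redex of $PQ$ is either $PQ$ itself or a head redex of $P$ (the head context on the left of an application is $\hh\,M$, so the argument is never affected), whence $PQ$ is $\head$-normal iff it is not a root $\beta$-redex and $P$ is $\head$-normal, and one concludes as before. I do not expect any real obstacle: the only thing needing care is making the shape-preservation and localization facts explicit (in particular that root steps are surface because $\hole{}$ is a surface context), after which the induction is purely mechanical.
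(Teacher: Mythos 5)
The paper does not actually prove this lemma: it is stated in the appendix as an ``easy fact'' to be used freely, so there is no official proof to compare against. Your structural induction, organised around the two auxiliary observations (shape preservation and localization of non-surface steps), is sound and is exactly the kind of argument the paper implicitly relies on elsewhere (e.g.\ ``since $\nsredbb$-steps preserve the shape of terms\dots''). Two minor points. First, your localization claim for CbV abstractions is wrong as stated: from $\lambda x.P \nwredbv \lambda x.P'$ you cannot conclude $P \nwredbv P'$, since the inner step may perfectly well be weak in $P$ (e.g.\ $\lambda x.(II) \nwredbv \lambda x.I$); luckily you never use this, because you dispatch the abstraction case by observing that both sides are abstractions and hence $\weak$-normal. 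Second, the paper's syntax $\LambdaOp$ includes operator terms $\op(M_1,\dots,M_k)$, and the lemma is later applied to terms containing $\oplus$, so your induction is missing that case; it is, however, trivial for the same reason as the CbV abstraction case (weak and head contexts do not enter operators, an operator term is not a $\beta$- or $\beta_v$-redex, and a non-surface step keeps it an operator term), so the gap is cosmetic rather than substantive.
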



\ArX{\section{ \Cref{sec:normalization}:   properties of unbiased reduction}\label{app:normalization}
 The properties of $\lsred$ (	\Cref{thm:Ufactorization},
 \Cref{lem:Uproperties}, \Cref{thm:Unormalization},  and those stated  here) are proved in \cite{long}.


\begin{lemma}[Normal forms]\label{lem:nfU}\label{lem:Unormal}
	If   $U\nlsredbb N$, then $N$ is not  $\betab$-normal.
\end{lemma}

}{}

\ArX{}{
\section{ \Cref{sec:normalization}:   properties of unbiased reduction}\label{app:normalization}

With the same assumptions as in \Cref{def:liberal}, let $(\betab, \surf)\in \{(\beta, \head),(\betav, \weak)\}$. 
{The relation $\lsredbb$ is as in \Cref{def:liberal}.}
The relation $\nlsredbb$ is the complement of $\lsredbb$, \ie   $\nlsredbb \ = \ \redbb \smallsetminus \lsredbb$.

We start with a basic remark.
\begin{remark}
	\label{rem:Usurf}
	Fixed $(\betab, \surf)\in \{(\beta, \head),(\betav, \weak)\}$, 	we have  $\sred \subseteq \lsredbb$. Hence, $\nlsredbb \subseteq \nsred$.
\end{remark}

\paragraph{Diamonds.}

\begin{lemma}[RD-diamond]\label{lem:Udiamond} With the same assumptions as in \Cref{def:liberal}, let $(\betab, \surf)\in \{(\beta, \head),(\betav, \weak)\}$. $\lsredbb$ has the  RD-diamond property of \Cref{fact:diamond}.
\end{lemma}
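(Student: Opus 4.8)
The plan is to establish the RD-diamond property of \Cref{fact:diamond} for the reduction $\lsredbb$, by structural induction on $M$, splitting on whether or not $M$ is surface-reducible. So assume $M \lsredbb M_1$ and $M \lsredbb M_2$ with $M_1 \neq M_2$; I must produce $\tmu$ with $M_1 \lsredbb \tmu$ and $M_2 \lsredbb \tmu$. If $M \sred$, then by \Cref{def:liberal} the only $\lsredbb$-steps out of $M$ are the surface steps $M \sred M_i$; since surface reduction — head reduction $\hredb$ in CbN, weak reduction $\wredbv$ in CbV — has the RD-diamond property (for $\wredbv$ this is recalled in \Cref{sec:weak}; $\hredb$ is deterministic, hence trivially RD-diamond), there is $\tmu$ with $M_1 \sred \tmu$ and $M_2 \sred \tmu$, and since $\sred \subseteq \lsredbb$ (\Cref{rem:Usurf}) this $\tmu$ closes the diamond.

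The substantive case is $M \not\sred$: both steps then come from the recursive clauses of \Cref{def:liberal}, so $M$ has shape $\lam x.P$, $PQ$, or $\op(P_1,\dots,P_k)$, and each step contracts a redex strictly inside one of the maximal proper subterms. If the two steps act in the \emph{same} subterm, I apply the induction hypothesis to that subterm and re-embed the resulting diamond: for instance, for $M = PQ$ with $M_1 = P_1Q$, $M_2 = P_2Q$ and $P \lsredbb P_i$, the induction hypothesis gives $P'$ with $P_1 \lsredbb P'$ and $P_2 \lsredbb P'$, whence $\tmu := P'Q$ works (the abstraction and operator shapes being analogous, and the case of the other subterm symmetric). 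If the two steps act in \emph{different} subterms, say $M = PQ$ with $M_1 = P_1Q$ and $M_2 = PQ_2$, the residual steps commute and $\tmu := P_1Q_2$ closes the diamond.

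The step I expect to be the main obstacle is checking, in this last case, that the residual reductions are genuinely $\lsredbb$-steps and not merely $\redbb$-steps: by \Cref{def:liberal} the recursive clause requires the intermediate terms ($P_1Q$, $PQ_2$, and their analogues for the other shapes) to be surface-normal. I will obtain this from a short, shape-by-shape check based on two facts: (i) surface-normality of a term is characterised through its outermost constructor — e.g.\ an application $PQ$ is head-normal exactly when it is not a $\betab$-redex and $P$ is head-normal, and weak-normal exactly when it is not a $\betab$-redex and both $P$ and $Q$ are weak-normal; and (ii) the properties in play — being surface-normal, being or not an abstraction, and in CbV being or not a value — are invariant under non-surface $\betab$-steps, by \Cref{lem:snf} together with the observation that a non-surface step leaves the outermost constructor of a term unchanged. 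Concretely, whenever a subterm is required to be surface-normal after a step, it was surface-normal before it (being, by (i), among the subterms that govern surface-normality of the surface-normal $M$), so the step on it was non-surface and \Cref{lem:snf} applies; the remaining intermediate terms are surface-normal by (i) directly, using that the relevant applications stay non-redexes because neither the outermost constructor of the function part, nor — in CbV — the ``is-a-value'' status of the argument, is affected by the step. The one point that wants a second glance is an application $PQ$ whose function part $P$ is an abstraction: in CbN this is incompatible with $PQ$ being head-normal (it would be a head redex), and when $PQ$ is weak-normal it forces $Q$ to be a non-value, a status preserved by any reduction of $Q$.
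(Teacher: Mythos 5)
Your proof follows essentially the same route as the paper's: split on whether $M$ is surface-reducible; when it is, the only $\lsredbb$-steps out of $M$ are surface steps (by \Cref{def:liberal}) and the RD-diamond property of $\sred$ closes the square; when it is not, induct on the shape of $M$, handling steps in the same subterm by the induction hypothesis and steps in disjoint subterms by commuting them. You are in fact more explicit than the paper on the one delicate point, namely that the closing steps are genuine $\lsredbb$-steps because the intermediate terms stay surface-normal (\Cref{lem:snf} applied to the non-surface residual steps); the paper's proof asserts those conclusions without comment.

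One claim in your base case is false, although it does not sink the argument: the head reduction $\hredb$ used throughout is \emph{not} deterministic. It is the closure of $\beta$ under the contexts $\hh ::= \hole{~} \mid \lambda x.\hh \mid \hh M$ (\Cref{sec:head}), which strictly contains standard head reduction --- for instance $(\lambda x. Ix)z$ has two distinct $\hredb$-redexes, the outer $\beta$-redex and $Ix$. The fact you need ($\hredb$ is RD-diamond) is true and is what the paper takes as known, but it requires the same kind of one-step diamond check as $\wredbv$, not an appeal to determinism.
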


\begin{proof}We already know that $\sred$ ($\surf\in \{\head,\weak\}$) has the $RD$-diamond property.
	If  $M$ has $\surf$-redexes, then   $M_1 \xrevredx{\surf}{} M \sred M_2$, and the claim holds.
	Otherwise, if  $M$ has no $\surf$-redexes, then  $M$, $M_1$ and $M_2$ have the same shape. We have either  $M=\lam x. P$ or $M= PQ$. 
	\begin{itemize}
		\item Case  $M=\lam x. P$. Then $M_1=\lam x.P_1$, $M_2=\lam x.P_2$, and $P_1 \xrevredx{\los}{} P \lsred P_2$, and we conclude by \ih.
		\item Case  $M=PQ$.  Three cases are possible:
		\begin{itemize}
			\item  $P_1Q \xrevredx{\los}{} PQ \lsred P_2Q$, and we conclude by \ih.
			\item  $PQ_1 \xrevredx{\los}{} PQ \lsred PQ_2$, and we conclude by \ih.
			\item  $P'Q \xrevredx{\los}{} PQ \lsred PQ'$, where $P\lsred P'$ and $Q\lsred Q'$. Then we conclude
			$P'Q \sred PQ' \xrevredx{\los}{}  PQ'$.
		\end{itemize}
		
		\item 	 Case  $M=\opp{ P_1 \dots P_k}$. 
		\begin{itemize}
			\item If $ \opp{\dots P_i'\dots } \xrevredx{\los}{} \opp{\dots P_i \dots } \lsred \opp{\dots P_i'' \dots }$,  we conclude by \ih.
			\item If $ \opp{\dots P_i' \dots P_j \dots} \xrevredx{\los}{} \opp{\dots P_i \dots P_j \dots } \lsred \opp{\dots P_i \dots P_j' \dots} $, then\\ 
			$ \opp{\dots P_i' \dots P_j \dots}    \sred  \opp{\dots P_i' \dots P_j' \dots}  \xrevredx{\los}{}   \opp{\dots P_i \dots P_j' \dots} $
			\qedhere
		\end{itemize}
	\end{itemize}

\end{proof}

\paragraph{Factorization.}We only sketch the proof of 
 $\los$-factorization, which  is straightforward to establish from  $\surf$-Factorization, using Mitschke argument \cite{Mitschke79}.

\begin{theorem}[$\los$-Factorization]
	\label{thm:Ufactorization} 
	With the same assumptions as in \Cref{def:liberal}, let $(\betab, \surf)\in \{(\beta, \head),(\betav, \weak)\}$. Then:
	\begin{equation*}\tag{$\los$-Factorization }
		M\redbb^*  N  \timplies M \lsredbb^* \cdot  \nlsredbb^* N
	\end{equation*}
\end{theorem}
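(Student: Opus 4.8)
The plan is Mitschke's standardization argument~\cite{Mitschke79}: reduce $\los$-factorization to the already available $\surf$-factorization of $\redbb$ (weak factorization of $\redbv$ in \CbV, head factorization of $\redb$ in \CbN) by pushing the surface steps to the front and then recursing into the subterms of a surface normal form. The induction is on the \emph{size of the target} $N$, not of the source: although $\redbb$ can enlarge terms, every recursive call is made on a reduction whose target is a proper subterm of $N$, which is what makes the induction well founded.

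First I would apply $\surf$-factorization to $M\redbb^* N$, getting $M\sredbb^* M_0 \nsredbb^* N$ with $M_0\in\Snf$. Since surface steps are $\los$-steps (\Cref{rem:Usurf}), $M\lsredbb^* M_0$. By \Cref{lem:snf} a non-surface step preserves surface normal forms, hence every term of the suffix $M_0\nsredbb^* N$, in particular $N$, lies in $\Snf$; it remains to rewrite $M_0\nsredbb^* N$ as $M_0\lsredbb^*\cdot\nlsredbb^* N$.

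The heart of the argument is a frame decomposition of this $\Snf$-to-$\Snf$ reduction. A surface normal form has a rigid skeleton whose holes are the only positions where reduction can happen: in \CbN a head normal form $\lam x_1\dots x_n.\,y\,P_1\cdots P_k$, in \CbV (recursively) a variable, an abstraction $\lam z.\,P$, or an application $Q_0 Q_1$ with $Q_0$ a $\surf$-normal non-value; by \Cref{lem:snf}, $M_0$ and $N$ share this skeleton. Thus $M_0\nsredbb^* N$ splits into independent reductions $R_j\redbb^* S_j$, one per hole, each $S_j$ a proper subterm of $N$. The induction hypothesis yields $R_j\lsredbb^*\cdot\nlsredbb^* S_j$ for every $j$. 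Using the recursive clauses of \Cref{def:liberal}, inside a surface normal term the $\lsredbb$-steps are \emph{exactly} the lifts of the $\lsredbb$-steps of the hole-contents, so a $\lsredbb$-step (resp.\ a $\nlsredbb$-step) in a hole lifts to a $\lsredbb$-step (resp.\ $\nlsredbb$-step) of the whole term; and steps in distinct holes occur at disjoint positions, hence commute while keeping their $\lsredbb$/$\nlsredbb$ label. Therefore all the $\lsredbb$-segments can be front-loaded, giving $M_0\lsredbb^*\cdot\nlsredbb^* N$; concatenating with $M\lsredbb^* M_0$ closes the induction. The base case is immediate, since no surface normal form reaches a variable by a nonempty non-surface reduction.

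The main obstacle is bookkeeping rather than conceptual: (i) spelling out the skeleton of a \CbV $\weak$-normal form, which, unlike a \CbN head normal form, nests through non-value arguments; and (ii) checking the \emph{exactness} of the lifting, i.e.\ that the recursive clauses of \Cref{def:liberal} never create a $\lsredbb$-step other than the lifts from the holes, so that $\nlsredbb$-steps really do lift to $\nlsredbb$-steps. Both points rely on the fact that on a term of $\Snf$ the reduction $\lsredbb$ is defined purely by recursion into subterms, together with the shape-preservation \Cref{lem:snf}. No infinitary or quantitative ingredient enters here: $\los$-factorization is a finitary property of $\redbb$ alone.
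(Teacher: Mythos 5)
Your overall strategy is the paper's own: induction on $N$, first invoking $\surf$-factorization and then recursing into the subterms of the shape-preserved intermediate term. But there is a genuine gap at the very first step. You assert that $\surf$-factorization yields $M\sredbb^* M_0 \nsredbb^* N$ \emph{with} $M_0\in\Snf$. That is false: the factorization theorem only reorders steps, it does not make the surface phase exhaustive, and it cannot, because by \Cref{lem:snf} non-surface steps preserve (non-)membership in $\Snf$, so $M_0\in\Snf$ if and only if $N\in\Snf$. Whenever $N$ still contains a surface redex --- e.g. $x(II)(\lam y.\,II)\nwredbv x(II)(\lam y.\,I)$ in \CbV, with $I=\lam z.z$: the target retains the $\weak$-redex $II$ --- the middle term $M_0$ is not surface normal, your frame decomposition does not exist, and, more importantly, the recursive clauses of \Cref{def:liberal} do not apply to $M_0$ at all, since on a term that has a surface redex $\lsredbb$ coincides with $\sredbb$ by the first clause of that definition. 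So the second half of your argument silently assumes $N\in\Snf$ and covers only that case.

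The missing case is exactly the other branch of the paper's proof, and the repair is short: if $M_0\notin\Snf$, then by \Cref{lem:snf} every term in the suffix $M_0\nsredbb^* N$ has a surface redex, hence on each of them $\lsredbb$ is exactly $\sredbb$, so every step of the suffix is a $\nlsredbb$-step and $M\lsredbb^* M_0\nlsredbb^* N$ holds directly, with no recursion. With that case added, your treatment of the $N\in\Snf$ branch matches the paper's: the ``exactness of the lifting'' and the disjoint-position commutation you describe are precisely what the paper checks when it front-loads the $\lsredbb$-segments (verifying via \Cref{lem:snf} that each rearranged intermediate term is still surface normal, so the steps keep their labels). One further small inaccuracy: your description of \CbV surface-normal applications as $Q_0Q_1$ with ``$Q_0$ a $\surf$-normal non-value'' is wrong --- $Q_0$ may be an abstraction provided $Q_1$ is not a value (e.g. $(\lam x.x)(yz)$ is $\weak$-normal); what is needed is only that $Q_0,Q_1\in\Snf$ and $Q_0Q_1$ is not a $\betav$-redex. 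This does not change the structure of the argument, but it is exactly the bookkeeping you flagged, done incorrectly.
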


\begin{proof}By induction on the term $N$. 
	From $M \redbb^* N$, 
	by $\surf$-Factorization, we have that 
	\[M\sredbb^* U  \nsredbb^* N.\]
	Since $\nsredbb$-steps preserve the shape of terms,   all terms in the sequence $U\nsredbb^* N$ have the same shape. \SLV{}{(\Cref{fact:shape}).}
	\begin{itemize}
		\item If $U$ contains any $\surf$-redex, so does every term in the sequence $U\nsredbb^* N$   by \Cref{lem:snf}, and so $U\nlsredbb^* N$.  Hence the claim.

		\item Assume $U$ contains no $\surf$-redex.
		We examine the possible shape of $N$, and conclude by \ih.
	\end{itemize}		
\end{proof}

\SLV{
}{\begin{theorem}[$\los$-Factorization]\label{thm:normalizing_strategy}
		Assume $M\redbv^*  N $. Then  \begin{center}
			$M \lsredbv^* \cdot  \nlsredbv^* N$.
		\end{center}
	\end{theorem}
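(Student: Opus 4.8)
The plan is to prove this by structural induction on the term $N$, in the style of Mitschke's argument, reducing everything to the already available weak factorization of $\redbv$ (for $\surf=\weak$, namely $\redbv^*\subseteq\sredbv^*\cdot\nsredbv^*$) together with two standard facts about non-weak steps: they preserve the outermost shape of a term, and they preserve the property of being weak-normal in both directions (\Cref{lem:snf}).

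First I would apply weak factorization to the hypothesis $M\redbv^* N$, obtaining $M\sredbv^* U\nsredbv^* N$ for some $U$. Since every surface step is a $\lsredbv$ step (\Cref{rem:Usurf}), we already have $M\lsredbv^* U$, so the whole task is to convert the tail $U\nsredbv^* N$ into a $\nlsredbv^*$ sequence. Note that, by shape preservation, all terms occurring in that tail have the same outermost constructor. Now I would split on whether $U$ has a weak redex. If it does, then by \Cref{lem:snf} no term along $U\nsredbv^* N$ is weak-normal; on a non-weak-normal term the relation $\lsredbv$ coincides with $\sredbv$, so each (non-weak) step of the tail is in particular a $\nlsredbv$ step, giving $U\nlsredbv^* N$ and hence the claim.

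If instead $U$ is weak-normal, I would use that it has the same shape as $N$ and case-split on that shape. If $N$ is a variable, then $U=N$ and there is nothing to do. If $N=\lam y.P'$, then necessarily $U=\lam y.P$ with $P\redbv^* P'$ (all the tail steps take place under the $\lambda$), so by the induction hypothesis $P\lsredbv^*\cdot\nlsredbv^* P'$, and I propagate this factorization through the abstraction rules of \Cref{def:liberal} — legitimate since an abstraction is always weak-normal, so those rules govern $\lsredbv$ and $\nlsredbv$ on it. If $N=P'Q'$, then since $U$ is weak-normal we have $U=PQ$ with $PQ$ not a $\betav$-redex and $P,Q$ themselves weak-normal, and the tail splits as $P\redbv^* P'$, $Q\redbv^* Q'$; applying the induction hypothesis to $P'$ and $Q'$ and concatenating the two resulting factorizations through the two application rules of \Cref{def:liberal} yields $PQ\lsredbv^*\cdot\nlsredbv^* P'Q'$.

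The step I expect to be the main obstacle is exactly this last concatenation: I must check that the application rules of $\lsredbv$ (and of $\nlsredbv$) actually apply to every intermediate term of the two concatenated sequences, i.e. that reducing inside the $P$-component or the $Q$-component never turns the ambient application into (or makes it contain) a weak redex. This is true — each such step is a non-weak step, so $\lsredbv$-reducts of a weak-normal term stay weak-normal by \Cref{lem:snf}, and a non-value component stays a non-value, so the whole application stays weak-normal and non-$\betav$-redex — but it is the one place where the weak-context structure of the CbV calculus is genuinely used, so it needs to be spelled out rather than waved through; the remaining bookkeeping (lining up the factorizations and invoking the induction hypothesis on the strictly smaller subterms $P'$ and $Q'$) is routine.
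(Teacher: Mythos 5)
Your proof is correct and follows essentially the same route as the paper's: induction on $N$, an appeal to weak factorization to obtain $M\sredbv^* U\nsredbv^* N$, a case split on whether $U$ is weak-normal, and---in the application case---exactly the shape-preservation and weak-normality-preservation argument (via \Cref{lem:ex_nf}) that the paper uses to justify concatenating the two component factorizations. The only omission is the routine case $N=\op(N_1,\dots,N_k)$ for operator symbols, which is analogous to the application case.
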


	\begin{proof}
		By induction on the term $N$. 
		
		From $M \redbv^* N$, 
		by $\weak$-Factorization, we have that 
		\begin{center}
			$M\wredbv^* U  \nwredbv^* N$.
		\end{center} Since $U\nwredbv^* N$,  all terms in the $\nwredbv$-sequence have the same shape.
		
		\begin{itemize}
			\item If $U$ contains any $\weak$-redex , so does every term in the sequence $U\nwredbv^* N$  (by Result \ref{lem:surface_nf}), and so $U\nlsred^* N$.  Hence the claim.

			\item Assume $U$ contains no $\weak$-redex .
			We examine the possible shapes of $N$:
			\begin{enumerate}
				\item $N=x$. Then  $U=x=N$, and trivially $M\lsredbv^* M$.
				\item $N=(\lam x.N_0)$. Then  $U=(\lam x. U_0)$, hence  $U_0 \redbv^*  N_0$. 
				
				By \ih  $U_0\lsredbv^* \cdot \nlsredbv^* N_0$. Observe that $\lam x. U_0$ is $\weak$-normal, hence  $\lam x.U_0\lsredbv^* \cdot \nlsredbv^* \lam x.N_0$, and so
				\[		M\lsredbv^* (\lam x. U_0)\lsredbv^*\cdot \nlsredbv^* (\lam x.N_0)=N.\]
				
				\item $N=N_1N_2$. 
				Then $U=U_1U_2$, with $U_1\redbv^* N_1$ and $U_2\redbv^* N_2$. 	
				
				By \ih 
				$U_1 \lsredbv^* S_1  \nlsredbv^* N_1$ and $U_2\lsredbv^* S_2  \nlsredbv^*N_2$. 
				
				Since $U$ contains no $\weak$-redex, then  neither $U_1$ nor $U_2$ contain any $\weak$-redex, and $U_1U_2$ is not a redex. By 
				repeatedly using Result \ref{lem:ex_nf} (from $U_1U_2$, forwards), we have that---in both sequences---each step  is a $\nwred$-step.
				Therefore all terms in the sequence $\fseq{U_1 \dots S_1  \dots N_1}$ have the same shape, all terms in the sequence $\fseq{U_2 \dots S_2  \dots N_2}$ have the same shape, and 
				no term in either sequence contains any $\weak$-redex.	
				
				It follows that in the  four  sequences \begin{center}
					$\fseq{U_1U_2\dots  S_1U_2}$,  $ \fseq{S_1U_2 \dots S_1U_2}$, $\fseq{S_1S_2  \dots N_1S_2}$, and $\fseq{N_1S_2  \dots  N_1N_2}$
				\end{center}
				each step is also  a $\nwred$-step, and (by Result \ref{lem:ex_nf}), and each term contains 
				no $\weak$-redex. 
				Therefore  
				\[M\lsredbv^* ~U_1U_2~ \lsredbv^*  S_1U_2 \lsredbv^*S_1 S_2 \nlsredbv^*  N_1S_2  \nlsredbv^*  N_1N_2=N.\]

				\item $N=\opp{ N_1 \dots N_k }$. Then $U=\opp{U_1\dots U_k  }$, with $U_i\redbv^* N_i ~(\forall \ik)$.	
				By \ih 
				$U_i \lsredbv^* S_i  \nlsredbv^* N_i ~(\forall \ik)$. We conclude 
				\[\opp{U_1\dots U_k } \lsredbv^* \opp{S_1 \dots S_k }  \nlsredbv^* \opp{N_1\dots N_k } \]

			\end{enumerate}
		\end{itemize}	
\end{proof}}

\paragraph{Normalization.}

\begin{lemma}[Normal forms]\label{lem:nfU}\label{lem:Unormal}
	If   $U\nlsredbb N$, then $N$ is not  $\betab$-normal.
\end{lemma}

\begin{proof} By induction on the shape of $U$, observing that $N$ and $U$ have the same shape.
	
	
	\begin{itemize}
		\item Assume $U$ is not $\surf$-normal.   Then  (by \Cref{lem:snf}) $N$ is not $\surf$-normal,  and a fortiori not  $\betab$-normal.
		\item Assume $U$ is  $\surf$-normal. We examine its shape.
		\begin{itemize}
			\item Case  $U=U_1U_2$ and 	$N=N_1N_2$. Then either (i) $U_1\redbb N_1$ or (ii) $U_2\redbb N_2$. Consider (i). 	
			Necessarily, 
			$U_1\nlsredbb N_1$ (because  	$U_1\lsredbb N_1$ would  imply  $U_1U_2\lsredbb N_1N_2$ by \Cref{def:U}) and therefore  by \ih  $N_1$  is not $\betab$-normal,  and neither is $N=N_1N_2$. Similarly for (ii).
			\item The other cases are similar (and simpler).
			\qedhere
		\end{itemize}
	\end{itemize}
	
\end{proof}

\Uproperties*

\begin{proof}
	\begin{enumerate}
		\item This is \Cref{lem:Udiamond}.
		
		\item Since  $\lsredbb \subseteq \redbb$, every $\redbb$-normal form is $\lsredbb$-normal.
		Conversely, if $M$ is $\lsredbb$-normal, then $M$ is  $\sred$-normal and a straightforward induction on $M$ shows that $M$ is $\redbb$-normal.
		
		\item Proof 1:	 By \Cref{thm:Ufactorization}, $M \lsredbb^* U \nlsredbb^* N$.
		Since $N$ is $\betab$-normal, by \Cref{lem:nfU}   $P = N$.
		
		Proof 2:  By induction on the shape of $N$,  using $\surf$-factorization.
		\qedhere
	\end{enumerate}
\end{proof}

The fact that $\lsred$ is a normalizing strategy (\Cref{thm:Unormalization}) follows  from \Cref{lem:Uproperties}, because
$\lsred$ is complete w.r.t. normal forms, and it is  RD-diamond (and hence uniformly normalizing).

}

\subsection{A  parallel variant of \liberal reduction.} \label{sec:parallelU}

%

Given $(\Lambda,\redbb)$ and $\sred$ as  in \Cref{def:liberal}, a parallel   version $ \xredx {\pd}{\betab}$ is easily defined. The idea here is that once a term  is  $\sred$-normal,  iteration of the reduction process can be performed in any arbitrary order,  or in \emph{parallel}. Recall that here  $(\betab, \surf)\in \{(\beta, \head),(\betav, \weak)\}$.
\begin{enumerate}
	\item If  $M\sred{} M'$ then $M \xredx {\pd}{\betab} M'$ ($M$ is not $\surf$-normal);
	\item If $M \not \red$ then $M \xredx{\pd}{\betab} M$ ($M$ is $\to$-normal);
	\item Otherwise: \\
	{\small 
		$\infer{M \defeq \lam x. P \xredx{\pd}{\betab}   \lam x.P'}{P \xredx{\pd}{\betab} P'} 
		\qquad
		\infer{M \defeq P_1P_2 \xredx{\pd}{\betab}   P_1'P_2'}{ P_1\xredx{\pd}{\betab}   P_1' & P_2\xredx{\pd}{\betab}   P_2'}
		\qquad
		\infer{M \defeq \opp{ P_1, \dots, P_k}\xredx{\pd}{\betab}   \opp{ P_1', \dots, P_k'}}{ (P_i\xredx{\pd}{\betab}  P_i')_{1 \leq i \leq k}} 
		$
	}
\end{enumerate}
Rule 2. makes the relation reflexive on normal forms and \emph{only} on normal forms---this is a harmless shortcut in order to give a compact and  neat formulation.

The  (multistep) reduction $\xredx{\pd}{\betab}$  is guaranteed to reach the $\redbb$-\nnf, if any exists.

\begin{lemma}\label{lem:parallelU}Let $\betab\in\{\beta,\betav\}$
	\begin{enumerate}
		\item  If  $M\xredx{\pd}{\betab} \ N $ then $M\lsredbb^*N$. Therefore, $M\xredx{\pd}{\betab}^* \ N $ implies $M\lsredbb^*N$.
		\item  If $M\lsredbb^*N$ then  there exists $N'$ such that $M\xredx{\pd}{\betab}^*N' $ and $N \lsredbb^*N'$.
	\end{enumerate}
\end{lemma}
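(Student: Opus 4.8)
The plan is to prove the two items separately: item~1 by a direct structural induction, and item~2 by a cofinality argument, reusing the machinery already established for $\lsredbb$ (Random Descent, normalization, and \Cref{lem:snf} on surface normal forms).

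\medskip\noindent\textbf{Item 1.} First I would argue by structural induction on the derivation of $M\xredx{\pd}{\betab}N$. If the derivation ends with the surface rule, then $M$ is not $\surf$-normal and $M\sred N$, so $M\lsredbb N$ is immediate from the first clause of \Cref{def:liberal}. If it ends with the normal-form rule, then $N=M$ and there is nothing to prove. Otherwise $M$ is $\surf$-normal (hence $M\not\sred$) and the derivation ends with a congruence rule; I treat the application case $M=P_1P_2\xredx{\pd}{\betab}P_1'P_2'=N$, the abstraction and operator cases being analogous. By the induction hypothesis $P_1\lsredbb^*P_1'$ and $P_2\lsredbb^*P_2'$. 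Since $P_1P_2$ is $\surf$-normal, so are $P_1$ and $P_2$, and by \Cref{lem:snf} $\surf$-normality is preserved along $\nlsredbb$-steps; hence every intermediate term $P_1^{(i)}P_2$ (and later $P_1'P_2^{(j)}$) is again $\surf$-normal and not a root $\betab$-redex, so the congruence clauses of \Cref{def:liberal} apply at each step. Sequentializing the two sub-reductions gives $P_1P_2\lsredbb^*P_1'P_2\lsredbb^*P_1'P_2'=N$. The claim for $\xredx{\pd}{\betab}^*$ then follows by an obvious induction on the number of parallel steps.

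\medskip\noindent\textbf{Item 2.} The strategy is to show that a maximal $\xredx{\pd}{\betab}$-reduction issued from $M$ is cofinal in the $\lsredbb$-reduction graph of $M$. The key auxiliary is an \emph{absorption} lemma: if $M$ is not $\redbb$-normal and $M\lsredbb M'$, then $M'\lsredbb^*\widehat M$ for some $\widehat M$ with $M\xredx{\pd}{\betab}\widehat M$. I would prove this by structural induction on $M$, splitting on whether $M$ is $\surf$-normal: if it is not, both steps are surface steps out of $M$ and one simply takes $\widehat M := M'$ (the surface rule of $\xredx{\pd}{\betab}$ may fire the very redex that $\lsredbb$ fired); if $M$ is $\surf$-normal, the parallel step recurses into the immediate subterms, and one matches it against the single $\lsredbb$-step subterm by subterm, invoking the induction hypothesis and \Cref{lem:snf}. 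Granting absorption, take $M\lsredbb^*N$. If $M$ is $\redbb$-normalizing, then by \Cref{thm:Unormalization} every maximal $\lsredbb$-reduction from $M$ — in particular the one extending $M\lsredbb^*N$ — ends in the unique $\redbb$-normal form $N_0$; moreover $\xredx{\pd}{\betab}$ is total on non-normal terms, so by item~1 $M\xredx{\pd}{\betab}^*N_0$, and $N':=N_0$ works. If $M$ is not $\redbb$-normalizing, fix an infinite $\xredx{\pd}{\betab}$-reduction $M=M^0\xredx{\pd}{\betab}M^1\xredx{\pd}{\betab}\cdots$; by item~1 and the Random Descent property of $\lsredbb$ (\Cref{lem:Uproperties}) it is an infinite $\lsredbb$-reduction in which all competing $\lsredbb$-paths have equal length, so the distances $M\lsredbb^{k_i}M^i$ are well defined and strictly increasing, hence $k_i\to\infty$; iterating absorption along $M\lsredbb^*N$ and choosing $i$ with $k_i$ large enough yields $N\lsredbb^*M^i$, so $N':=M^i$ works.

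\medskip
The bookkeeping in item~1 is entirely routine once \Cref{lem:snf} is available. The delicate point — and the one I expect to be the main obstacle — is the telescoping in item~2: absorbing a \emph{single} $\lsredbb$-step into the parallel strategy is easy, but iterating this along a whole reduction while keeping the parallel strategy ``ahead'' of $N$ requires a careful induction on $\lsredbb$-length that crucially exploits Random Descent (so that lengths along competing $\lsredbb$-paths agree and cannot blow up under the joins), and that is where the real work lies.
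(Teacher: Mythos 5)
Your item~1 is correct and is essentially the paper's own argument (a routine induction on $M$, sequentializing the congruence cases and using \Cref{lem:snf} to keep the intermediate terms $\surf$-normal so that the congruence clauses of \Cref{def:liberal} apply). For item~2, however, you take a genuinely different route, and the route as you describe it has a gap exactly at the point you defer. The paper's proof of item~2 is finitary and structural: by induction on $N$, it uses $\surf$-factorization to write $M \sredbb^* U \nsredbb^* N$, uses the fact that non-surface steps preserve the shape of terms to match $U$ against $N$ constructor by constructor, applies the induction hypothesis to the immediate subterms, and then pads the resulting $\xredx{\pd}{\betab}$-sequences to a common length (using reflexivity of $\xredx{\pd}{\betab}$ on normal forms and item~1 to keep $N$ connected to the padded endpoints) so that they can be zipped by the congruence rule. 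No Random Descent and no infinitary reasoning are needed.

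The gap in your item~2 is the iteration of absorption, which you acknowledge as ``where the real work lies'' but do not supply, and the formulation you give of it cannot be completed as stated. Fixing an infinite $\xredx{\pd}{\betab}$-sequence $M = M^0, M^1, \dots$ in advance and claiming that ``$k_i$ large enough yields $N \lsredbb^* M^i$'' is not a consequence of Random Descent plus lengths: RD does not make an arbitrary sufficiently long coinitial $\lsredbb$-reduction absorb a given one. In CbN, from $x(\Delta\Delta)(Iz)$ the $\lsredb$-sequence that always contracts $\Delta\Delta$ has unbounded length yet never absorbs the single step to $x(\Delta\Delta)z$; what makes the \emph{parallel} sequence cofinal is that each $\xredx{\pd}{\betab}$-step fires a redex in every non-normal subterm, and that is precisely what must be propagated along the whole reduction. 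A workable version is an induction on the length $n$ of $M \lsredbb^n N$: absorb the first step $M \lsredbb P_1$ into one parallel step $M \xredx{\pd}{\betab} \widehat{M}$ with $P_1 \lsredbb^* \widehat{M}$, tile the residual $P_1 \lsredbb^{n-1} N$ against $P_1 \lsredbb^* \widehat{M}$ using the RD-diamond (this is where RD genuinely enters: the tiling gives $\widehat{M} \lsredbb^{m} R$ with $m \leq n-1$ and $N \lsredbb^* R$, so the measure does not grow), and apply the induction hypothesis to $\widehat{M}$. Note that this builds the parallel sequence adaptively --- $\xredx{\pd}{\betab}$ is not deterministic on terms with several surface redexes, so there is no canonical sequence $(M^i)$ to land on --- and it makes your case split on normalization unnecessary. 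As written, the crux of item~2 is asserted rather than proved, and the paper's structural induction is the substantially simpler way to obtain it.
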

\ArX{}{
	\proof\begin{enumerate}
		\item Easy induction on $M$. 
		\item By induction on $N$, using  $\surf$-factorization  and the fact that non-surface steps preserve the shape of terms.  By the former $M\sredbb^* U \nsredbb^* N$.  By the latter, $U$ has the same shape as $N$; we then  examine the possible cases, and   conclude by \ih.
		\begin{itemize}
			\item $N=\lam x. N_0$ and $U=\lam x. U_0$. By definition of $\lsred$, $U_0 \lsredbb^*N_0$, and we conclude by \ih.
			\item $N= N_1N_2$ and $U=U_1U_2$. It holds $U_1 \lsredbb^* N_1$ and $U_2 \lsredbb^* N_2$. By \ih, for $i\in \{1,2\}$
			$U_i\xredx{\pd}{\betab}^*N_i' $ and $N_i \lsredbb^*N_i'$. 
			We now extend the two $ \xredx{\pd}{\betab} $-sequences so that they have the same length (recall that $ \xredx{\pd}{\betab} $ is reflexive on $\bbeta$-\nnf), obtaining $U_i\xredx{\pd}{\betab}^*N_i'  \xredx{\pd}{\betab}^*N_i'' $ .  We then can conclude that 
			$U_1U_2\xredx{\pd}{\betab}^*N_1'N_2'  \xredx{\pd}{\betab}^*N_1''N_2'' $.
			
			By using point 1., we also have 
			$N_i' \lsredbb^* N_i''$, and so $U_1U_2\lsredbb^*N_i''N_2''$.  Hence the claim.
		\end{itemize}
	\end{enumerate}
}

\begin{cor}[$\Nnf$-completeness] Let $\betab\in\{\beta,\betav\}$ and $N$ be $\redbb$-normal.  ~$M\redbb^*N$  if and only if $M\xredx{\pd}{\betab}^*N $.
\end{cor}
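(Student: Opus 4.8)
The plan is to derive this corollary directly, with essentially no new work, from \Cref{lem:parallelU} together with $\los$-factorization (\Cref{thm:Ufactorization}) and the coincidence of $\betab$-normal forms with $\lsredbb$-normal forms (\Cref{lem:Uproperties}, point 2). So I would not prove anything from scratch: the statement is a packaging corollary of the three preceding results.

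For the direction $(\Leftarrow)$ I would argue: assume $M\xredx{\pd}{\betab}^*N$. By point 1 of \Cref{lem:parallelU} this gives $M\lsredbb^*N$, and since $\lsredbb\subseteq\redbb$ we get $M\redbb^*N$ — no hypothesis on $N$ is needed here.

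For the direction $(\Rightarrow)$, assume $M\redbb^*N$ with $N$ being $\betab$-normal. First apply $\los$-factorization (\Cref{thm:Ufactorization}) to obtain $M\lsredbb^*U\nlsredbb^*N$ for some $U$. By \Cref{lem:nfU}, a $\nlsredbb$-step can never produce a $\betab$-normal form; since $N$ is $\betab$-normal, the tail $U\nlsredbb^*N$ must be empty, hence $U=N$ and $M\lsredbb^*N$. Now apply point 2 of \Cref{lem:parallelU}: there is $N'$ with $M\xredx{\pd}{\betab}^*N'$ and $N\lsredbb^*N'$. Since $N$ is $\betab$-normal, it is $\lsredbb$-normal by \Cref{lem:Uproperties}(2), so $N\lsredbb^*N'$ forces $N'=N$, and therefore $M\xredx{\pd}{\betab}^*N$.

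The only point that requires a moment's care is observing that the $\nlsredbb$-tail coming out of factorization is necessarily empty when the target is a normal form (via \Cref{lem:nfU}); everything else is already absorbed into \Cref{lem:parallelU}, which is where the real inductive content (simulation of $\lsredbb$ by the parallel reduction, up to further $\lsredbb$-reduction) lives. Hence I do not expect a genuine obstacle at the level of this corollary — the substance is in \Cref{lem:parallelU} and in $\los$-factorization.
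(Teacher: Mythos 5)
Your proof is correct and follows essentially the same route the paper intends: the corollary is obtained by combining \Cref{lem:parallelU} with $\los$-factorization and the fact that $\nlsredbb$-steps cannot produce a $\betab$-normal form. The only cosmetic difference is that your intermediate step ``$M\redbb^*N$ with $N$ normal implies $M\lsredbb^*N$'' is already available as point~3 of \Cref{lem:Uproperties}, so you could cite it directly instead of re-deriving it from \Cref{thm:Ufactorization} and \Cref{lem:nfU}.
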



\section{Proofs of \Cref{sec:P_Anormalization}: Asymptotic Normalization for \texorpdfstring{$\PLambda^\cbv$}{LambdaOplusCbV}}

Notice also that the definition of the reductions $\llred, \llRed$ can be given in the same way also in CbN, by replacing 
 $\betav$  with $\beta$ (surface steps are here head steps).

\paragraph{Properties.} We freely use the following  fact.   
\begin{fact}\label{fact:trans}
	$\mdist{M} \Redbv\mdist{M'} \quad  \text{ iff  }  \quad M  \redx{\betav}^= {M'}$ ,
	where \begin{itemize}
		\item on the l.h.s.  we have $(\MDST{\PLambda}, \Red)$,  and 
		\item on the r.h.s.  the  CbV  $\lam$-calculus $(\PLambda, \redbv)$, as defined in \Cref{sec:lambda}.
	\end{itemize}

\end{fact}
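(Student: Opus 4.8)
The plan is to unfold the definition of the lifting $\Redbv$ (\Cref{fig:lifting}) on a point multi-distribution $\mdist{M}$, exploiting the fact that the term-level $\betav$-rule of \Cref{fig:reductions} --- namely $\cc\hole{(\lam x.N)W} \redbv \mdist{\cc\hole{N\subs x W}}$ --- always produces a \emph{point} multi-distribution, unlike the $\oplus$-rule. So, reading $\redbv$ as a relation from $\PLambda$ to $\MPLambda$, one has $M \redbv \m$ exactly when $\m = \mdist{M'}$ for some term $M'$ with $M \redx{\betav} M'$ in the ordinary CbV $\lam$-calculus of \Cref{sec:lambda}. First I would record this observation (it is immediate from the shape of the $\betav$-rule and from the fact that contextual closure does not touch the right-hand side of the rule).

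For the implication $\Leftarrow$, suppose $M \redx{\betav}^= M'$. If $M = M'$, the first axiom of the lifting gives $\mdist{M} \Redbv \mdist{M} = \mdist{M'}$. Otherwise $M \redx{\betav} M'$, hence $M \redbv \mdist{M'}$ as a term step, and the second axiom of the lifting yields $\mdist{M} \Redbv \mdist{M'}$.

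For $\Rightarrow$, I would argue by induction on the derivation of $\mdist{M} \Redbv \mdist{M'}$, by cases on its last rule. If it is the first axiom, then $\mdist{M'} = \mdist{M}$, so $M = M'$. If it is the second axiom, then $M \redbv \mdist{M'}$ as a term step, so $M \redx{\betav} M'$. If it is the third (sum) rule with index set $I$, then the equality of multisets of weighted terms $\multiset{p_i M_i \mid \iI} = \mdist{M} = \multiset{1M}$ forces $I$ to be a singleton $\{i_0\}$ with $p_{i_0} = 1$ and $M_{i_0} = M$ --- note that $\multiset{\two M, \two M}$ and $\multiset{1M}$ are distinct multisets, so two distinct indices cannot both carry $M$ --- whence $\mdist{M'} = \sum_{\iI} p_i \m_i = \m_{i_0}$ and the premise is exactly $\mdist{M} \Redbv \mdist{M'}$, so we conclude by the induction hypothesis.

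There is no genuinely hard step: the statement is a routine bookkeeping fact identifying the syntactic coding $M' \mapsto \mdist{M'}$ with the lifted reduction restricted to point distributions. The only point requiring a moment's care is the degeneracy of the sum rule in the $\Rightarrow$ direction, i.e.\ that a point multi-distribution, being a singleton multiset, cannot be the source of a non-trivial instance of that rule.
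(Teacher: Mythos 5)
Your proof is correct. The paper states this as a Fact that it ``freely uses'' and gives no proof at all, so there is nothing to diverge from: your argument---unfolding the lifting rules of \Cref{fig:lifting}, observing that the term-level $\betav$-rule always produces a point multi-distribution, and handling the degenerate singleton instance of the sum rule in the $\Rightarrow$ direction---is exactly the routine bookkeeping the authors implicitly rely on, and you correctly identify the only point needing care (that $\mdist{M}$, being a singleton multiset, cannot arise from a non-trivial sum).
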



\paragraph{Factorization and Neutrality.}  
Recall that  $\Snf$ denotes the set of the \textbf{surface normal forms} of $\red \eq (\redbv\cup \redo)$. $M\in \PLambda$ is $\surf$-normal if $M\not\!\sred$. That is $M\not\redo$ and $M\not\sredbv$. 

\begin{lemma}[\snf propagation]\label{lem:snf_propagation} If  $M$ is $\surf$-normal and $M \red M'$, then $M'$ is $\surf$-normal.
\end{lemma}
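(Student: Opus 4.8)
The plan is to prove the lemma by structural induction on $M$, after two preliminary observations. First, since $M$ is $\surf$-normal it is not itself of the form $N_1\oplus N_2$ (such a term is an $\redo$-redex in the empty context); and since the $\oplus$-rule is closed only under weak contexts, every $\redo$-step is surface, so the hypothesis that $M$ is $\surf$-normal forces the step $M\red M'$ to be a \emph{non-surface $\betav$-step} $M\nsredbv M'$, landing again on a term of $\PLambda$ (no genuine probabilistic branching occurs). It thus suffices to show: if $M\in\Snf$ and $M\nsredbv M'$ then $M'\in\Snf$.

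Next I would run the induction. If $M$ is a variable, it admits no step and the claim is vacuous. If $M=\lam x.N$, the contracted redex lies inside $N$ (an abstraction is never a root redex), so $M'=\lam x.N'$, and since no weak context enters the body of a $\lam$, every abstraction is $\surf$-normal. If $M=N_1N_2$, then $M\in\Snf$ gives that $N_1N_2$ is not a $\betav$-redex and that $N_1,N_2\in\Snf$; moreover $M$ is not a root redex (neither a $\betav$-redex nor an $\oplus$-term), so the contracted redex lies inside $N_1$ or inside $N_2$. By symmetry assume it lies in $N_1$, so $M'=N_1'N_2$ where $N_1$ reduces to $N_1'$ by that redex; this step is non-surface, since a surface step inside $N_1$ would lift along the weak context $\ww N_2$ to a surface step from $M$, contradicting that $M$ is $\surf$-normal. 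Hence $N_1\nsredbv N_1'$, and the induction hypothesis yields $N_1'\in\Snf$. It remains only to check that $N_1'N_2$ is not a $\betav$-redex.

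This last point is the crux, and the step I expect to be the main obstacle: a non-surface $\betav$-step must not create a $\betav$-redex at a previously $\surf$-normal position. I would isolate a small auxiliary fact, immediate from the definitions: any step from a $\surf$-normal \emph{application} $AB$ is a non-surface $\betav$-step occurring strictly inside $A$ or inside $B$ (as $AB$, being $\surf$-normal, is not a root redex), hence yields an application; any step from an abstraction yields an abstraction; and the $\surf$-normal non-values are exactly the applications (variables and abstractions are values, and $\oplus$-terms are never $\surf$-normal). With this: if $N_1$ is not an abstraction, then being a $\surf$-normal term that reduces it must be an application, so $N_1'$ is an application, hence not an abstraction, so $N_1'N_2$ is not a $\betav$-redex; if $N_1$ is an abstraction then so is $N_1'$, but from ``$N_1N_2$ not a $\betav$-redex'' we get $N_2\notin\Val$, and $N_2$ is unchanged, so again $N_1'N_2$ is not a $\betav$-redex. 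The symmetric subcase (redex inside $N_2$) is identical, using that a $\surf$-normal non-value $N_2$ is an application, so that $N_2'$ stays a non-value. This closes the induction and establishes the lemma.
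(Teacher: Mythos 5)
Your proof is correct and follows essentially the same route as the paper's: first observe that a step from a $\surf$-normal term can only be a non-surface $\betav$-step (since $\oplus$-redexes are only fired in weak position), then conclude that such steps preserve surface normality. The only difference is that the paper delegates this second part to its Lemma~\ref{lem:snf} (stated as an easy fact), whereas you prove it inline by structural induction, correctly identifying and handling the crux that a non-surface step cannot create a root $\betav$-redex because it preserves the shape (application vs.\ abstraction) of the subterms.
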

\ArX{}
{\begin{proof}
If $M$ is $\surf$-normal and $M \red M'$, then  $M \nsredbv M'$. By \cref{lem:snf}, $M'$ is $\surf$-normal.
\end{proof}}

	\begin{prop}[$\llsmall$-Factorization of $\Red$.] In $\PLambda^\cbv$:
		$\m \Red^*\n$ implies $\m \llRed^*\cdot \iRed^* \n$
	\end{prop}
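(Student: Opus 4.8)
The plan is to derive $\llsmall$-factorization of $\Red$ from two results already available: surface factorization of $\Red$ for $\PLambda^\cbv$ (the multi-distribution lifting of weak factorization of $\redbv$, using that $\redo$ is itself a surface step), and the unbiased factorization of $\betav$-reduction, \Cref{thm:Ufactorization} (also stated as \Cref{prop:out-factorization}). First I would apply surface factorization to $\m \Red^* \n$, obtaining a pivot $\mathbf{u}$ with $\m \sRed^* \mathbf{u} \nsRed^* \n$. Since by \Cref{def:E} every surface step is an essential step, i.e.\ $\sred \subseteq \llred$, the lifting gives $\m \llRed^* \mathbf{u}$. It then remains to show that the non-surface tail $\mathbf{u} \nsRed^* \n$ can itself be rearranged into the form $\mathbf{u} \llRed^* \cdot \iRed^* \n$.

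For that last step I would argue component-wise. Two facts make the tail tractable: there is no non-surface $\oplus$-step (the $\oplus$-rule is closed only under weak contexts), so $\nsRed$ is the lifting of the non-weak $\betav$-step $\nsredbv$; and $\betav$-steps never split a multi-distribution, so $\mathbf{u} \nsRed^* \n$ amounts to $M \nsredbv^* N$ for each term $M$ of $\mathbf{u}$ and its matching term $N$ of $\n$, with weights preserved. Now I case split on whether $M$ is $\sred$-normal. If $M \notin \Snf$, then by \Cref{lem:snf} non-surface steps neither create nor destroy surface redexes, so every term along $M \nsredbv^* N$ is again non-$\sred$-normal; and from a non-$\sred$-normal term a non-surface step is exactly a $\nllred$-step, so $\mdist{M} \iRed^* \mdist{N}$ with no essential step at all. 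If $M \in \Snf$, then $M \nsredbv^* N$ is a $\betav$-reduction, so \Cref{thm:Ufactorization} gives $M \lsredbv^* V \nlsredbv^* N$; by \Cref{lem:snf_propagation} all these terms stay $\sred$-normal, and on $\sred$-normal terms $\lsredbv$ is precisely the second clause of $\llred$ while $\nlsredbv$ is precisely $\nllred$, hence $\mdist{M} \llRed^* \mdist{V} \iRed^* \mdist{N}$. Finally I would reassemble the per-component reductions: pad the essential (resp.\ internal) phases with reflexive steps so they have uniform length, then apply the lifting rules of \Cref{fig:lifting}; this yields $\mathbf{u} \llRed^* \cdot \iRed^* \n$, and composing with $\m \llRed^* \mathbf{u}$ concludes.

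The hard part will be keeping the output in the shape $\llRed^* \cdot \iRed^*$ rather than an interleaving of essential and internal steps: a priori $\llred$, being free to fire any surface redex, could fire a surface $\oplus$-redex and split a component, whereas the given non-surface reduction does nothing of the sort, so one cannot naively push all essential steps to the front of a component. The resolution — which is the delicate observation of the proof — is that $\oplus$-redexes are always surface redexes (they sit in weak position), hence a component of $\mathbf{u}$ that is not $\sred$-normal remains not $\sred$-normal along the whole non-surface tail and must therefore be treated with zero essential steps; only the already-$\sred$-normal components contribute an $\llRed$-phase, and there the phase is forced to be entirely $\betav$ and is governed by \Cref{thm:Ufactorization}. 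A secondary, routine obstacle is the bookkeeping of weights and phase lengths when lifting the per-component reductions back to multi-distributions, which works because both $\llRed$ and $\iRed$ are reflexive on singletons; and the companion $\obsN$-neutrality is then immediate, since $\iRed \subseteq \nsRed$ and non-surface steps preserve the shape of terms and their redexes.
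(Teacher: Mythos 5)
Your proposal is correct and follows essentially the same route as the paper's proof: surface factorization of $\Red$ first, then a component-wise case split on whether each term of the pivot is $\sred$-normal, handling the $\Snf$ case via \Cref{thm:Ufactorization} and \Cref{lem:snf_propagation}, the non-$\Snf$ case via \Cref{lem:snf}, and reassembling by lifting (the paper partitions the pivot into two multi-distributions where you pad with reflexive steps, but this is the same bookkeeping). No gaps.
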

	\begin{proof} 	
%
		In the proof, we use freely \Cref{fact:trans}.
	By  surface factorization of $\Red$ (proved in \cite{FaggianRonchi}), $\m \Red^*\n  \timplies \m \sRed^*  \td  \nsRed^* \n$ for some $\td$.		
	From this we have: \begin{itemize}
			\item $\m \llRed^*\td$. Because if  $M \sred \ \r$ then also  $  M \llred \ \r $. 
			\item $  \td  \xRedx{\neg\surf}{\ \betav}^* \n $. Because if $M \nsred \ \r$ then necessarily $M {\nsredx{ \betav}} \r$. Moreover,   $\r=\mdist{M'}$.  		
		\end{itemize}	
	
	Let $\td = \mdist{\dots p_iT_i \dots }_{\iI}$.  Then necessarily, $\n=\mdist{\dots p_iN_i \dots }_{\iI}$ and     
	$\mdist{T_i }\nsRedbv^* \mdist{ N_i}$
	and so also  ${T_i }\nsredbv^* { N_i}$. For each $T_i$, we examine if $T_i$ is  $\surf$-normal or not ($\Snf$ being the set of \snf's).
		\begin{enumerate}
				\item $T_i \in \Snf$.  
				By  ${T_i }\nsredbv^* { N_i}$ and   $\los$-factorization of $\redbv$ (\Cref{thm:Ufactorization}),			${T_i }\lsredbv ^* U_i \nlsredbv^* { N_i}$.
			By \Cref{lem:snf_propagation}, each term  in the sequence  $ {T_i }\lsredbv ^* U_i \nlsredbv^* { N_i} $ is  $\surf$-normal. Hence, by \Cref{def:E}
			(since only the second rule   can apply), we  conclude that 
			$\mdist{T_i }\llRedx{\betav} ^* \mdist{U_i} \iRedx{\betav}^* \mdist{N_i}$.
		
			\item $T_i\not \in \Snf$.  By \Cref{lem:snf}, each term  in the sequence  ${T_i }\nsredbv^*{ N_i}$ is not $\surf$-normal. 
		 By definition of $\llred$ 
		(since only the first rule can apply) we  conclude that  $\mdist{T_i }\iRedbv^*  \mdist{ N_i}$.
%
		
		\end{enumerate}
	Let  us partition $\td$ into two multi-distributions, collecting in $\td_1$ the terms of case 1. and in $\td_2$ the terms of case 2.
	We partition $\n$ so that $\td_1 \nsRed^* \n_1$ and $\td_2 \nsRed^* \n_2$.
	We have $\td_1 \llRedx{\betav} ^* \ud \iRedx{\betav}^* \n_1$ and $\td_2 \iRedbv^*  \n_2$. Therefore
	$\m \llRed^*(\td_1 + \td_2) \llRedx{} ^*( \ud + \td_2)  \iRed^* ( \n_1 + \n_2)=\n.$
	which proves the claim.
	\end{proof}
	
	\begin{prop}[neutrality]\label{prop:Pneutral} If 
		$\m\iRed \n$ then  $\obsN(\m) =\obsN(\n)$.
	\end{prop}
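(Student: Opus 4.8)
The plan is to check that an $\iRed$-step changes only "already-divergent" components of a multi-distribution, leaving the subdistribution over normal forms untouched. First I would observe that $\iRed \subseteq \nsRed$: at the level of terms $\nllred \subseteq \nsred$, because by \Cref{def:E} every surface step from an $M \notin \Snf$ is essential, every $\red$-step from an $M \in \Snf$ is non-surface, and no $\redo$-step is internal (the reduction $\redo$ already consists of surface steps); in fact $\nllred$ fires only $\betav$-redexes, so $\nllred \subseteq \nsredbv$. Lifting this inclusion gives $\iRed \subseteq \nsRed$.

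The heart of the proof is the claim: \emph{if $M \nllred M'$ then $M'$ is not $\red$-normal} (and neither is $M$, since it reduces). I would prove it by structural induction on $M$. If $M = \lambda x.P$, then $M \in \Snf$, so $M \nllred M'$ forces $M' = \lambda x.P'$ with $P \nllredbv P'$, and the induction hypothesis applies. If $M = P_1 \oplus P_2$, then, since the $\oplus$-step from $M$ is essential, the step is a $\betav$-step inside $P_1$ or $P_2$ and $M'$ is again of the form $P_1' \oplus P_2$ or $P_1 \oplus P_2'$, which still carries the $\oplus$-redex at the root, hence $M'$ is not $\red$-normal. If $M = PQ$: when $PQ \notin \Snf$ the step is non-surface, so $PQ \nwredbv M'$ and \Cref{lem:snf} yields that $M'$ is not surface-normal, hence not $\red$-normal; when $PQ \in \Snf$, then $P$ and $Q$ lie in $\Snf$ as well, the step takes place inside $P$ (or inside $Q$) with $P \nllredbv P'$ (or $Q \nllredbv Q'$), and the induction hypothesis applies. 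The variable case is vacuous. The point requiring care — and the main obstacle — is the surface-normal application subcase, together with the underlying fact that the internal steps from a surface-normal term are exactly the $\betav$-steps that $\lsredbv$ declines to perform because they are "shielded" by a surface redex of some subterm; the structural induction above is precisely what allows one to avoid an explicit residual analysis of such shields.

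Finally I would assemble the statement. By the lifting rules (\Cref{fig:lifting}), from $\m = \mdist{p_i M_i}_{i \in I}$ we get $\n = \sum_{i\in I} p_i\m_i$, where for each $i$ either $\m_i = \mdist{M_i}$ or $M_i \nllred M_i'$ with $\m_i = \mdist{M_i'}$. A $\red$-normal term is $\iRed$-irreducible, so every index $i$ with $M_i \in \Nnf$ is of the first kind and contributes the same point mass at $M_i$ to $\obsN(\m)$ and to $\obsN(\n)$; every other index $i$ contributes $\zero$ to $\obsN(\m)$ (as $M_i \notin \Nnf$) and, by the claim above, also $\zero$ to $\obsN(\n)$ (its reduct $M_i'$, if any, is again not in $\Nnf$). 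Summing the per-index contributions gives $\obsN(\m) = \obsN(\n)$.
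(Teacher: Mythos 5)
Your overall route is the paper's: reduce $\obsN$-neutrality to the claim that an internal step $M \nllred M'$ cannot create a $\red$-normal form, and then sum the per-index contributions under the lifting. The inclusion $\nllred \subseteq \nsredbv$ and the final assembly are fine. The gap is in your structural induction for the key claim. The invariant you propagate, ``$M'$ is not $\red$-normal'', is not closed under $\lambda$-abstraction in this calculus: since the $\oplus$-rule is closed only under weak contexts, a term whose only firable redex is a surface $\oplus$-redex becomes $\red$-normal once placed under a $\lambda$ (for instance $y \oplus z$ is not $\red$-normal, while $\lambda x.(y\oplus z)$ is). Your $\oplus$-case, and the $PQ \notin \Snf$ sub-case of your application case, discharge their obligation by exhibiting exactly such a witness --- a surface $\oplus$-redex --- and your abstraction case then infers ``$P'$ is not $\red$-normal, hence $\lambda x.P'$ is not $\red$-normal'', which is a non-sequitur: if the induction hypothesis returned, say, $P' = P_1' \oplus P_2$ with $P_1'$ and $P_2$ both $\betav$-normal, the conclusion would be false. (The claim itself survives --- such a $P'$ cannot in fact arise from an internal step --- but nothing in your induction rules it out.)

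The repair is what the paper does. For $M \in \Snf$ the internal steps are exactly the $\nlsredbv$-steps, and the invariant must be strengthened to that of \Cref{lem:nfU}: $U \nlsredbv N$ implies $N$ is not \emph{$\betav$-normal}, i.e.\ $N$ still contains a $\betav$-redex. A $\betav$-redex does survive both $\lambda$-abstraction and $\oplus$-contexts, so this invariant passes through the abstraction and $\oplus$ cases --- provided the $\oplus$-case recurses into the subterm where the step occurs instead of short-circuiting on the root redex. For $M \notin \Snf$ the internal steps are non-surface steps, and preservation of surface redexes (\Cref{lem:snf}, \Cref{lem:snf_propagation}) gives non-$\red$-normality of $M'$ directly; this weaker conclusion is harmless there because it is never fed back through a $\lambda$. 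Your proof conflates these two invariants, and the weaker one does not support the inductive step.
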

	\begin{proof} Consequence of the fact that if  $U\nlsredbv N$, then $N$ is not  $\betav$-normal \ArX{}{\Cref{lem:nfU}}.  
		Indeed 
		 $\nllred \subseteq \nsred$ and so  $M \nllred \r$ \iff ( $M \nlsredbv M'$ with  $\r = \mdist {M'}$).
	\end{proof}

	\paragraph{Diamonds.}\label{sec:diamonds}	\Cref{prop:fulldiamond}
	(the relation $\llfull$ is $\obsN$-diamond) 
 follows from the following key lemma.  Notice that	Point (2.) implies that $\obsN (\m_1) = \obsN (\m_2)$.
	\begin{lemma}[Pointed Diamond]
		\label{l:diamond}
		Let $\a,\c\in \{\beta_v,\oplus\}$.
		Assume  $M$ has two distinct redexes, such that $M \llredx{\a} \m_1 $ and $M \llredx{\c} \m_2$. 
		Then  \begin{enumerate}
			\item exists $\td$ such that $\m_1\llfull{}_{\c}\ \td$ and $\m_2\llfull{}_{\a}\ \td$.
			
			\item Moreover, no $M_i$ in  $ \m_1=\mdist{ p_i M_i}_{i} $  and no  $M_j $ in $\m_2=\mdist{q_jM_j}_{j}$ is $\red$normal.
		\end{enumerate}

	\end{lemma}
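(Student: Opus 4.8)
The plan is to prove both parts together, by a case analysis on whether $M$ is a surface normal form ($M \in \Snf$, i.e.\ $M \not\sredbv$ and $M \not\redo$), because \Cref{def:E} makes the shape of an $\llred$-step from $M$ depend only on this: if $M \notin \Snf$ then an $\llred$-step of $M$ is a surface step contracting a weak $\betav$-redex or a $\redo$-redex, while if $M \in \Snf$ then $M \llred \mdist{M'}$ with $M \lsredbv M'$, a necessarily non-surface $\betav$-step in the sense of \Cref{def:liberal}; in particular in the latter case neither step can be an $\oplus$-step, so $\a = \c = \betav$ there. The uniform idea behind both cases is that two \emph{distinct} redexes of $M$ that can be contracted by $\llred$ always occupy \emph{parallel} (disjoint, non-nested) positions.

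To see this when $M \notin \Snf$: the grammar of weak contexts $\ww$ never descends into the body of an abstraction nor into the argument of an operator, and the $V$-part of a $\betav$-redex is a value; hence a weak redex of $M$ that were a proper subterm of another weak redex $(\lambda x.S)V$ or $S_1 \oplus S_2$ would have to sit under a $\lambda$ or as an $\oplus$-argument, i.e.\ not at a weak position of $M$ at all. So two distinct weak redexes are never nested, and (by the same remark applied to the root) $M$ itself cannot be one of them; thus $M = PQ$ and the two redexes are disjoint. When $M \in \Snf$ the same conclusion holds for the $\lsredbv$-contractible redexes: $\lsredbv$ always prefers an outermost weak redex, so if $R_2 \subsetneq R_1$ then, once the recursion of \Cref{def:liberal} reaches the subterm $R_1 = (\lambda x.S)V$, it contracts $R_1$ (whose only weak redex is its root) rather than descending to $R_2$, so $R_2$ is not $\lsredbv$-contractible from $M$. (As an alternative for this case one may simply invoke the RD-diamond of $\lsredbv$, \Cref{lem:Uproperties}.)

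Given this, in either case contracting the $\a$-redex leaves a residual of the $\c$-redex untouched at its original position, so every term $M_i$ of $\m_1 = \mdist{p_i M_i}_i$ still contains a $\red$-redex, whence $M_i$ is not $\red$-normal --- which is part (2) for $\m_1$, and symmetrically for $\m_2$. Moreover each such $M_i$ is still not in $\Snf$ in the first case, and still in $\Snf$ (by \Cref{lem:snf_propagation}) in the second; either way, contracting the $\c$-redex in $M_i$ is again an $\llred$-step of kind $\c$, so $\m_1 \llfull{}_{\c}\ \td$ where $\td$ is the result of doing this in every term (in the $\oplus$ cases this means doing it in each of the two or four probabilistic branches), and symmetrically $\m_2 \llfull{}_{\a}\ \td'$. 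Disjointness of the two positions makes the two orders of contraction agree, so $\td = \td'$. Finally, \Cref{prop:fulldiamond} follows by applying this pointed statement termwise to two $\llfull$-steps out of a common multidistribution.

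The main obstacle is making the two structural ingredients watertight: (a) the non-nestedness of $\llred$-contractible redexes --- immediate for weak redexes, but for the $\lsredbv$-case it requires arguing carefully with the recursive priority of \Cref{def:liberal} and with the fact that neither the body nor the value-argument of a $\betav$-redex contributes weak positions to the enclosing term; and (b) that a non-surface $\betav$-step neither destroys surface-normality nor exposes new weak redexes (this is \Cref{lem:snf}/\Cref{lem:snf_propagation}), which is what guarantees that the continuations by $\llfull$ on $\m_1$ and $\m_2$ stay in the same regime and so genuinely reach the common $\td$. The probabilistic bookkeeping --- that a $\redo$-step merely duplicates the term while preserving the other redex in every branch --- is routine once (a) and (b) are in place.
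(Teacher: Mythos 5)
Your proof is correct and follows essentially the same route as the paper's: a case split on whether $M$ is $\surf$-normal, invoking the RD-diamond of $\lsredbv$ (\Cref{lem:Uproperties}) in the $\surf$-normal case and a positional case analysis on the two surface redexes otherwise. The paper's own argument is only a two-line sketch ("we easily conclude by case analysis"); your disjointness-of-distinct-redexes observation and the residual argument for part (2) are exactly the details it leaves implicit.
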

	\begin{proof}\begin{itemize}
			\item If $M$ is $\surf$-normal, then by definition of $\llred$,   $M \lsredx{\betav} \m_1 $ and $M \lsredx{\betav} \m_2$,
		and we conclude by using  \Cref{fact:trans} and \Cref{lem:Uproperties}, point 1.
		
\item 	If $M$ is $\surf$-reducible,	then by definition of $\llred$, $M \sredx{\betav} \m_1 $ and $M \sredx{\betav} \m_2$. We easily conclude by case analysis.
	\end{itemize}
	\end{proof}

	\begin{remark}[$\llred  \  \not= \  \lsredbb \cup \redo$]\label{rem:Edelicate} It is useful to notice that   $\llred  \  \not= \  \lsredbv \cup \redo$. Such a relation is neither diamond nor confluent.
		\begin{itemize}
			\item The lifting of  $\lsredbv \cup \redo $ is  neither diamond nor confluent.  	Consider  $ (\Delta \oplus  \Delta\Delta)(\lam z.Iz) $. Then
			$\m_1 = \mdist{\two \Delta (\lam z.Iz), \two  (\Delta\Delta) {(\lam z.\underline{Iz})}} \xrevredx{}{\oplus
			} (\Delta \oplus  \Delta\Delta)(\lam z.Iz) \lsredbv\mdist{(\Delta \oplus  \Delta\Delta)(\lam z.z)}=\m_2$.   The elements $\m_1$ and $\m_2$ cannot join, because  no $\lsredbv$-step can fire the underlined $(Iz)$.
			\item Similarly in CbN, for the lifting of $\lsredb \cup \redo$.	 Consider  $(\Delta \oplus  \Delta\Delta)(x(Iz))$.
		\end{itemize}
	\end{remark}%
\ArX{}{%
\begin{remark}
	Notice that the problem in \Cref{rem:Edelicate} does not happen. Indeed \\
	 $M= (\Delta \oplus  \Delta\Delta)(\lam z.Iz) \not\llred 
	\mdist{(\Delta \oplus  \Delta\Delta)(\lam z.z) } $
	because $M$ is not $\surf$-normal.
\end{remark}}

\section{Details for  \Cref{sec:output}: more case studies}
\subsection{Asymptotic Normalization for a calculus with outputs}

\subparagraph{Proof of \Cref{lem:payoff}.} The $\weak$-factorization of $\red$ is proved in \cite{GavazzoF21},  where it is called surface factorization, and  proved in general for all CbV  monadic calculi, including the payoff calculus which we discuss here. $\obs$-neutrality is straightforward  to verify, by case analysis.

\subsection{Asymptotic Normalization and  B{\"o}hm Trees}\label{sec:BT}

We   show that the  B{\"o}hm Tree of a term $M$  is the (unique) limit of an asymptotically normalizing strategy, \ie the
limit of a \emph{single} reduction sequence.
\paragraph{B{\"o}hm Trees and Partial Normal Forms}
Following \cite{AmadioCurien}, the  B{\"o}hm Tree of a term $M$ is (the downward closure of) the set of the partial
normal forms of all reducts of $M$. 

\begin{Def}[Partial Normal Forms and B{\"o}hm Trees ]\label{def:BT}
			 
		The set  $\PNF$ of \textbf{partial normal forms} is defined as follows; 
		\[\infer{\Omega\in \PNF}{} \qquad \infer {\lam  x_1\dots x_n.xA_1\dots A_n\in
			\PNF}{A_1\in \PNF \ \dots \ A_n\in \PNF}
		\]	
		$\PNF$ is a subset of the set of partial $\lam$-terms, defined by $P:=\Omega\mid x\mid
			PP\mid \lam x.P$, 
			and inherits its \textbf{order} $\leq$,	 which is generated by the following rules:
			\[\infer{\Omega \leq P}{} 
			\qquad   
			\infer{P_1P_2\leq P'_1P'_2}{P_1\leq P'_1 & P_2\leq P'_2}
			\qquad \infer{\lam x.P\leq \lam x.P'}{P\leq P'}\]
		 The  elements of  the ideal completion $\PNF^\infty$ of $\PNF$ are called
		\textbf{B{\"o}hm Trees}. Precisely: 
		\begin{enumerate}	
		\item 
		The function  $\omega: \Lambda\red \PNF$  associates to each 
		term $M\in \Lambda$  its \emph{partial normal form} $\pnf M$:
		\[\pnf M = 
		\left\{
		\begin{array}{ll}
			\Omega& \mbox{ if } M\not\in \Hnf \\
			\lam \vec x. x \pnf {M_1}\dots \pnf {M_p}&\mbox{ if } M=\lam \vec x.
			xM_1...M_p
		\end{array}
		\right.\]
		\item The \textbf{B{\"o}hm Tree of }$M$ is defined as below For a set $\mathcal S$, $\downarrow \mathcal S=\{ Q\in \PNF\mid Q\leq
		S\in \mathcal S \}$.
		\[\BT M := \bigcup\limits_{M\red^*N}  \downarrow \{\pnf N \}= \downarrow \{\pnf
		N \mid M\red^*N\}	\]
		
	\end{enumerate}
\end{Def}
\SLV{}{
	The following properties are standard to check
	\begin{lemma}\label{lem:basics}Let $M,M'\in \Lambda$.
		\begin{itemize}
			
			\item[i] $M\red M'$ implies $\pnf M \leq \pnf{M'}$
			\item[ii] $\pnf M \leq M$
			\item[iii]  $Q\in \PNF$ and $ Q\leq M$ implies $Q\leq \pnf M$
		\end{itemize}
	\end{lemma}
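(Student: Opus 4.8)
The plan is to establish the three items by three short, independent structural inductions, in each case peeling off the leading $\lam$'s and the head applications of the relevant term and then invoking the monotonicity rules for $\leq$. Throughout I use that $\leq$ is reflexive (it is the inherited order on partial $\lam$-terms), in particular that $x\leq x$ for every variable $x$; I also use the standard fact that a term $N\in\Lambda$ has a head normal form, i.e. $N\in\Hnf$, exactly when it is syntactically of the shape $\lam x_1\dots x_n.\,x\,N_1\cdots N_p$, in which case $\pnf N=\lam x_1\dots x_n.\,x\,\pnf{N_1}\cdots\pnf{N_p}$, and otherwise $\pnf N=\Omega$. Item (ii), $\pnf M\leq M$, then follows by induction on $M$: if $M\notin\Hnf$ then $\pnf M=\Omega\leq M$ by the $\Omega$-rule; if $M=\lam x_1\dots x_n.\,x\,M_1\cdots M_p$, then $\pnf M=\lam x_1\dots x_n.\,x\,\pnf{M_1}\cdots\pnf{M_p}$, the induction hypothesis gives $\pnf{M_i}\leq M_i$ for each $i$, and repeatedly applying the application rule (starting from $x\leq x$) followed by the abstraction rule yields $\pnf M\leq M$.

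For (i), $M\red M'$ implies $\pnf M\leq\pnf{M'}$, I again induct on $M$. If $M\notin\Hnf$ then $\pnf M=\Omega\leq\pnf{M'}$. If $M=\lam x_1\dots x_n.\,x\,M_1\cdots M_p\in\Hnf$ (with $p\geq 1$; the case $p=0$ is vacuous), then the contracted $\beta$-redex is neither the term itself nor any head prefix $x\,M_1\cdots M_k$ (these have a variable in head position), so it lies strictly inside some argument: $M'=\lam x_1\dots x_n.\,x\,M_1\cdots M_i'\cdots M_p$ with $M_i\red M_i'$. Hence $M'\in\Hnf$ with the same head and leading abstractions, $\pnf{M'}=\lam x_1\dots x_n.\,x\,\pnf{M_1}\cdots\pnf{M_i'}\cdots\pnf{M_p}$, and the induction hypothesis gives $\pnf{M_i}\leq\pnf{M_i'}$; using reflexivity for the remaining arguments and monotonicity of the constructors we conclude $\pnf M\leq\pnf{M'}$.

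For (iii), I induct on the structure of $Q$. If $Q=\Omega$ the claim is immediate. Otherwise $Q=\lam x_1\dots x_n.\,x\,A_1\cdots A_m$ with each $A_j\in\PNF$, and the key step is an inversion of $\leq$: since the only rules that can derive $\lam x.P\leq R$, $P_1P_2\leq R$, and $x\leq R$ force $R$ to be respectively an abstraction $\lam x.P'$ with $P\leq P'$, an application $R_1R_2$ with $P_i\leq R_i$, and the variable $x$ itself, peeling the $n$ abstractions and then the $m$ head applications off $Q\leq M$ yields $M=\lam x_1\dots x_n.\,x\,M_1\cdots M_m$ with $A_j\leq M_j$ for all $j$. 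In particular $M\in\Hnf$ with head $x$, so $\pnf M=\lam x_1\dots x_n.\,x\,\pnf{M_1}\cdots\pnf{M_m}$; the induction hypothesis applied to each $A_j\leq M_j$ gives $A_j\leq\pnf{M_j}$, and monotonicity yields $Q\leq\pnf M$.

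The only point requiring some care is the inversion of $\leq$ used in (iii): one must check that the generating rules are syntax-directed enough to recover the shape of $M$ from a derivation of $Q\leq M$ when $Q\neq\Omega$, which is routine once reflexivity is made explicit (so that $x\leq x$ is available and every derivable judgment about an abstraction or application has the expected shape). The sub-fact used in (i)---that $\beta$-reduction preserves head normal forms and acts only inside their arguments---is classical (cf.\ \cite{Barendregt}) and can be cited rather than reproved. Nothing else is delicate.
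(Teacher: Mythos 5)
Your proof is correct. The paper does not actually prove this lemma---it is stated as ``standard and easy-to-check'' with a pointer to Amadio--Curien (Lemma 2.3.2)---and your three structural inductions are exactly the expected argument: (ii) and (i) by induction on $M$ using the shape characterization of $\Hnf$ and the fact that a $\beta$-redex cannot occur along the head spine $\lam \vec x.\,x M_1\cdots M_p$, and (iii) by induction on $Q$ via inversion of $\leq$. The two points you flag as needing care (making reflexivity of $\leq$ explicit so that $x\leq x$ is available, and checking that the generating rules of $\leq$ are syntax-directed enough for the inversion when $Q\neq\Omega$) are indeed the only delicate spots, and you handle both correctly.
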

}

The following property is standard and easy-to-check (see \cite[Lemma 2.3.2]{AmadioCurien}.)
\begin{lemma}\label{lem:basics}
	Let $M,M'\in \Lambda$. If $M\redb M'$ then $\pnf M \leq \pnf{M'}$.
\end{lemma}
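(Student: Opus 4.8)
The plan is to prove the statement by induction on the structure of $M$ (equivalently, on the size of $M$), splitting on whether or not $M$ is a head normal form. Recall that $\pnf{M}$ is defined by exactly this case distinction, so the induction naturally follows the shape of the definition.

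If $M\notin\Hnf$, then $\pnf{M}=\Omega$, and since $\Omega\leq P$ holds for every partial $\lam$-term $P$ (the first generating rule of $\leq$), we immediately get $\pnf{M}=\Omega\leq\pnf{M'}$ — here we do not even need to know the shape of $M'$. If instead $M\in\Hnf$, write $M=\lam x_1\dots x_n.\,x\,M_1\cdots M_p$. The crucial observation is that a $\beta$-step out of a head normal form can only take place strictly inside one of the arguments $M_j$: the contracted redex in $M\redb M'$ is some subterm $(\lam y.P)Q$, and it cannot be (a prefix of) the head application $x\,M_1\cdots M_k$, whose outer constructor is the variable $x$, nor can it involve the leading abstractions; hence it lies inside a unique $M_j$, and $M'=\lam x_1\dots x_n.\,x\,M_1\cdots M_j'\cdots M_p$ with $M_j\redb M_j'$. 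In particular $M'\in\Hnf$, with the same head variable and the same leading abstractions, so $\pnf{M'}=\lam x_1\dots x_n.\,x\,\pnf{M_1}\cdots\pnf{M_j'}\cdots\pnf{M_p}$. By the induction hypothesis applied to the proper subterm $M_j$ we have $\pnf{M_j}\leq\pnf{M_j'}$, while $\pnf{M_i}\leq\pnf{M_i}$ for $i\neq j$ by reflexivity of $\leq$; applying the application rule of $\leq$ componentwise along the spine and then the abstraction rule $n$ times yields $\pnf{M}\leq\pnf{M'}$, as desired.

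I expect no real obstacle: as the paper remarks, this is a standard fact. The one point that deserves care is the structural claim that reducing a head normal form leaves the head variable and the leading abstractions untouched, so that $\pnf{M'}$ is obtained from $\pnf{M}$ simply by recursing into the arguments; this is exactly where one uses that the head of a head normal form is a variable and therefore not the outer constructor of any $\beta$-redex. Once that observation is in place, the conclusion is immediate from the induction hypothesis together with the congruence rules defining $\leq$.
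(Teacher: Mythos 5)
Your proof is correct. The paper itself gives no proof of this lemma, deferring to \cite[Lemma 2.3.2]{AmadioCurien}; your argument---case split on $M\in\Hnf$, the observation that a $\beta$-step cannot touch the head spine of a \hnf, structural induction into the argument $M_j$, and the congruence rules for $\leq$ (plus its reflexivity, which is legitimate to invoke since $\leq$ is declared to be an order)---is exactly the standard proof being cited.
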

\Cref{lem:basics} guarantees that $((\Lambda, \redb), \obs)$ is a QARS where $\obs \colon \Lambda\to \Nnf_\omega^\infty$ is defined as $\obs(M)= \downarrow
\{\pnf  {M}\}$.

\paragraph{Asymptotic Normalization.}
Let us define $\obs: \Lambda\to \Nnf_\omega^\infty$ as $\obs(M)= \downarrow
\{\pnf  {M}\}$. It is easily checked that $\brack{(\Lambda, \redb),\obs }$	  is a QARS. We show that 
the  B{\"o}hm Tree of a term $M$  can be obtained by asymptotic normalization, 
as the limit a $\xredx{\pd}{\beta}$ reduction sequence, which is an $ \obs$-normalizing strategy for $\redb$ (\Cref{thm:mainBT}).

The $\obs$-limit of a reduction sequence 
$\seq M$ is then
$   \sup_i\{\obs(M_i)\} = \bigcup_i  \downarrow \{\pnf  {M_i}\}	$.
$ \BT M$ is clearly the sup of the set $\wLim{(M,\redb)}$. We show that  $\BT M$ \emph{belongs} to that set, by proving that 
$\wLim{(M,\redb)}$ has a greatest element $\sem M$; this necessarily is  $ \BT M$.
		%
		%
		
		We first  show that  $\lsredb$ is  $\obs$-complete for $\redb$ (Point 1 in \Cref{prop:Oasymptotic_BT} below). Reduction $\lsredb$ is not $\obs$-normalizing  for $\redb$ (for example, it admits the  sequence $x(\Delta\Delta)(Iz) \lsred x(\Delta\Delta)(Iz) \lsred \dots$) but its parallel version $\xredx{\pd}{}$ (\Cref{sec:parallelU}) is. 
		
		We  proceed similarly to \Cref{sec:PCbV} (think $\llRed$ vs $\llfull$). We consider the reduction 
		$\xredx{\pd}{}$ (the  explicit definition is in \Cref{sec:parallelU}) which has  $\obs$-Random Descent (trivially) and is 
		asymptotically complete for $ \lsred $ (Point 2 in \Cref{prop:Oasymptotic_BT} below), and so for $\redb$.
		
		%
		%
		%

		\begin{lemma}\label{lem:neutral_BT}
			If	$M\nlsredb M'$ then  $\pnf {M} = \pnf {M'}$.
		\end{lemma}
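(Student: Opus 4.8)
The plan is to prove the statement by induction on the size of $M$, splitting on whether or not $M$ is a head normal form, and using as the only nontrivial external ingredient the CbN case of \Cref{lem:snf}: a non-head $\beta$-step preserves the property of being, and of not being, a head normal form. Recall that in the present (CbN) setting surface reduction is head reduction, so $\sred = \hredb$, and that, by the first clause of \Cref{def:liberal} together with the side condition of its congruence clauses, the $\lsredb$-steps issued from a head-reducible term are exactly its head steps.

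First I would treat the case $M \notin \Hnf$, i.e.\ $M$ is head-reducible. Then $\pnf M = \Omega$ by definition. Since the congruence clauses of \Cref{def:liberal} apply only to $\surf$-normal terms, a step $M \nlsredb M'$ out of a head-reducible $M$ cannot be a head step, hence $M \nhredb M'$. By \Cref{lem:snf} this gives $M' \notin \Hnf$, so $\pnf{M'} = \Omega = \pnf M$; no induction hypothesis is needed in this case.

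The remaining case is $M \in \Hnf$, say $M = \lam x_1 \dots x_n.\, y\, M_1 \dots M_p$. Every $\beta$-redex occurrence of such a term lies inside one of the arguments $M_1, \dots, M_p$, because the surrounding skeleton (the leading abstractions, the head variable $y$, and the application structure) contains no redex; hence the step $M \nlsredb M'$ is of the form $M' = \lam x_1 \dots x_n.\, y\, M_1 \dots M_i' \dots M_p$ with $M_i \redb M_i'$ for some $i$. Here I would check that this inner step is itself outside $\lsredb$: if we had $M_i \lsredb M_i'$, the application and abstraction clauses of \Cref{def:liberal} would propagate it to $M \lsredb M'$ — every intermediate subterm through which the propagation passes, namely the spine terms $y\,M_1 \dots M_j$ and the partial abstractions, is again a head normal form, so the side conditions are met — contradicting the hypothesis; therefore $M_i \nlsredb M_i'$. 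Applying the \ih to the strictly smaller term $M_i$ yields $\pnf{M_i} = \pnf{M_i'}$. Since $M'$ is visibly again a head normal form with the same leading abstractions and the same head variable, $\pnf{M'} = \lam x_1 \dots x_n.\, y\, \pnf{M_1} \dots \pnf{M_i'} \dots \pnf{M_p} = \lam x_1 \dots x_n.\, y\, \pnf{M_1} \dots \pnf{M_i} \dots \pnf{M_p} = \pnf M$, which closes the induction.

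The only points that require care are bookkeeping ones: that a non-$\lsredb$ step out of a head-reducible term is genuinely a non-head step (so that \Cref{lem:snf} is applicable), and that in the head-normal case the contracted redex cannot touch the head skeleton, so that the step localizes in a single argument $M_i$, where the induction hypothesis is available and $M_i \nlsredb M_i'$. I do not expect a genuine obstacle here; everything else follows immediately from the clause-by-clause definition of $\pnf{-}$ in \Cref{def:BT}.
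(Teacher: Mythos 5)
Your proof is correct and follows essentially the same route as the paper's: the same case split on whether $M$ is a head normal form, the same use of \Cref{lem:snf} to propagate non-head-normality in the first case, and in the second case the same localization of the step to a single argument $M_i$, the same appeal to the congruence clauses of \Cref{def:liberal} to conclude $M_i \nlsredb M_i'$, and the same inductive computation of $\pnf{-}$. The only cosmetic difference is that the paper first records globally that $\nlsredb \subseteq \nhredb$ (via $\hredb \subseteq \lsredb$), whereas you re-derive the needed facts case by case; the content is identical.
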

\proof
			First,  observe that $M\nhredb M'$, because	$\nlsredb \ \subseteq \ \nhredb$. \ArX{}{(See \Cref{rem:Usurf}).}		
			\begin{itemize}
				\item 	If $M$ is not $\head$-normal ($M\not \in \Hnf)$, neither is  $M'$, by  
				(\Cref{lem:snf}).
				Therefore, $\pnf {M} = \Omega = \pnf {M'}$.
				
				\item 	Otherwise,  $M$ is $\head$-normal, that is, $M =  \lam \vec x. 	xM_1 \dots M_p$. 
				As $M \nhredb M'$, necessarily \SLV{}{(by using \Cref{fact:shape})} $M' =  \lam \vec x. 	xM_1 \dots M_i' \dots M_p$ (which is head normal) and  $M_i \redb M_i'$ for some $1 \leq i \leq p$.
				It is impossible that $M_i \lsredb M_i'$, otherwise $M \lsredb M'$ according to the definition of $\lsredb$ (\Cref{def:liberal}).
				Therefore, $M_i \nlsredb M_i'$ and so, by \ih, $\pnf{M_i} = \pnf{M_i'}$.
				Thus, $\pnf M = \lam \vec x. x \pnf {M_1}\dots \pnf{M_i} \dots \pnf {M_p} = \lam \vec x. x \pnf {M_1}\dots \pnf{M_i'} \dots \pnf {M_p} = \pnf{M'}$.
			\end{itemize}

		\begin{restatable}[Asymptotic Completeness]{proposition}{AComplBT}\label{prop:Oasymptotic_BT}\label{prop:ACompl_BT}
			\begin{enumerate}
				\item 	$M \redb\conv{} \cpo{\r}$ implies 	$M \lsred\conv{} \cpo{\r}$, because 
				\begin{itemize}
					\item $\los$-Factorization of $\redb$: $M \redb^*N$ implies $M \lsred^*\cdot \nlsred^* N$
					\item $\obs$-neutrality : 	$M\nlsred M'$ then  $\obs {M} = \obs {M'}$.
				\end{itemize}
				\item 	$M \lsred\conv{} \cpo{\r}$ implies 	($M \xredx{\pd}{}\conv{} \cpo{\s}$ and
				$\cpo{\r} \leq \cpo{\s}$) \qedhere
			\end{enumerate}
		\end{restatable}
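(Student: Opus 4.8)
The plan is to obtain both parts from the abstract machinery already in place: Part 1 is a direct application of the asymptotic completeness criterion \Cref{thm:ACompl}, and Part 2 is a short bookkeeping argument exploiting \Cref{lem:parallelU}. Throughout I take the QARS $\brack{(\Lambda,\redb),\obs}$ with $\obs(M)=\downarrow\{\pnf M\}$, which is a QARS by \Cref{lem:basics}, and I fix surface reduction to be head reduction ($\betab=\beta$, $\surf=\head$).

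For Part 1, I would instantiate \Cref{thm:ACompl} with $\red=\redb$ and subrelation $\ered=\lsred$ (legitimate since $\lsred\subseteq\redb$ by \Cref{def:liberal}, so that $\nered=\nlsred$). Hypothesis (i), $\ex$-factorization, is precisely $\los$-Factorization of $\redb$, namely \Cref{thm:Ufactorization} specialised to $\betab=\beta$: $M\redb^*N$ implies $M\lsred^*\cdot\nlsred^*N$. Hypothesis (ii), $\nex$-neutrality, asks that $M\nlsred M'$ implies $\obs(M)=\obs(M')$; since $\obs(M)=\downarrow\{\pnf M\}$, this is equivalent to $\pnf M=\pnf{M'}$, which is exactly \Cref{lem:neutral_BT}. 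With (i) and (ii) in hand, \Cref{thm:ACompl} yields the first implication together with its two stated justifications: $M\redb\conv{}\cpo{\r}$ implies $M\lsred\conv{}\cpo{\r}$.

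For Part 2 I would argue directly, using \Cref{lem:parallelU} as the bridge between $\lsred$ and the parallel variant $\xredx{\pd}{}$. Let $\seq M$ be a $\lsred$-sequence from $M=M_0$ with $\sup_n\obs(M_n)=\cpo{\r}$, and let $\cpo{\s}$ denote the common limit of all maximal $\xredx{\pd}{}$-sequences from $M$ (well defined by the $\obs$-Random Descent of $\xredx{\pd}{}$, which is immediate). For each $n$ we have $M\lsred^*M_n$, so by point (2) of \Cref{lem:parallelU} there is $M_n'$ with $M\xredx{\pd}{}^*M_n'$ and $M_n\lsred^*M_n'$; monotonicity of $\obs$ gives $\obs(M_n)\leq\obs(M_n')$. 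Extending $M\xredx{\pd}{}^*M_n'$ to a maximal $\xredx{\pd}{}$-sequence and applying monotonicity of $\obs$ along its tail, $\obs(M_n')\leq\cpo{\s}$. Hence $\obs(M_n)\leq\cpo{\s}$ for every $n$, so $\cpo{\r}=\sup_n\obs(M_n)\leq\cpo{\s}$, while $M\xredx{\pd}{}\conv{}\cpo{\s}$ holds by definition. Composing Part 1 with Part 2 gives the intended consequence, that $M\redb\conv{}\cpo{\r}$ implies $M\xredx{\pd}{}\conv{}\cpo{\s}$ with $\cpo{\r}\leq\cpo{\s}$, i.e.\ $\xredx{\pd}{}$ is asymptotically complete for $\redb$.

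I expect the only real work to be hidden in the auxiliary results \Cref{lem:neutral_BT} and the $\los$-factorization \Cref{thm:Ufactorization}; the assembly above is routine once those are available. In particular, the neutrality lemma is where one must use that $\lsred$ already contracts \emph{every} head redex, so that a $\nlsred$-step necessarily takes place strictly below the head and hence leaves $\pnf(\cdot)$ unchanged (if $M\notin\Hnf$ then $M'\notin\Hnf$ by \Cref{lem:snf} and both partial normal forms are $\Omega$; otherwise $M=\lam\vec x.xM_1\dots M_p$ and the step acts inside some non-head-reducible $M_i$, so one closes by induction on the partial-normal-form structure). A point worth stating explicitly is why $\lsred$ alone is not $\obs$-normalizing — e.g.\ $x(\Delta\Delta)(Iz)\lsred x(\Delta\Delta)(Iz)\lsred\cdots$ never develops the argument — which is exactly what makes the passage to the parallel $\xredx{\pd}{}$ in Part 2 necessary.
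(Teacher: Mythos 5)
Your proposal is correct and follows exactly the paper's route: Part 1 instantiates the asymptotic completeness criterion (\Cref{thm:ACompl}) with $\los$-factorization (\Cref{thm:Ufactorization}) and $\obs$-neutrality via \Cref{lem:neutral_BT}, and Part 2 uses \Cref{lem:parallelU} together with monotonicity of $\obs$, which is precisely the paper's (much terser) argument. The extra bookkeeping you supply for Part 2 is a sound expansion of what the paper leaves implicit.
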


\proof
		\begin{enumerate}
			\item Factorization  is that for $\lsredb$ (details of the proof are in \cite{long}). $\obs$-neutrality is immediate consequence of  
			\Cref{lem:neutral_BT}.
			\item It follows by \Cref{lem:parallelU}, and the fact that $\obs$ is  monotonic. 
			\qedhere
		\end{enumerate}
{	\begin{remark}[Unique Limit]
 If we take for head reduction the standard  one (as in \cite{Barendregt}), then $\xredx{\pd}{}$ is deterministic. Otherwise, 
if we take $\hredb$ as defined in \Cref{sec:lambda},  it is easily verified that $\xredx{\pd}{}$ has the $\obs$-diamond property. 
	\end{remark}
}
		$  \xredx{\pd}{} $ is $\obs$-complete for $\redb$ (by Points 1. and 2.). 
		Hence       we conclude by \Cref{thm:ACompl}:
		\begin{theorem}[Main, B{\"o}hm Trees]\label{thm:mainBT}
			$ \xredx{\pd}{} $ is  a (multi-step) $\obs$-normalizing strategy for $\redb$, and $M \xredx{\pd}{}\conv{}  \BT{M}$.
		\end{theorem}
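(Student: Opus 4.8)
The plan is to obtain \Cref{thm:mainBT} as an instance of the abstract development of \Cref{sec:strategies}, exactly as the probabilistic case was handled in \Cref{sec:PCbV} (with $\llfull$ replaced by $\xredx{\pd}{}$ and $\obsN$ by the present $\obs$). Concretely, I would verify the two hypotheses of \Cref{thm:main} for the QARS $\brack{(\Lambda,\redb),\obs}$ and the multistep subreduction $\xredx{\pd}{}$, read off the normalization statement via \Cref{rem:Anormalizing}, and finally identify the resulting unique limit $\sem M$ with $\BT M$.

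First I would check asymptotic completeness of $\xredx{\pd}{}$ for $\redb$: this is just the composition of the two points of \Cref{prop:Oasymptotic_BT}. From $M\redb\conv{}\cpo\r$, Point 1 (which rests on $\los$-factorization of $\redb$ and on $\obs$-neutrality, \Cref{lem:neutral_BT}) gives $M\lsred\conv{}\cpo\r$, and Point 2 then gives $M\xredx{\pd}{}\conv{}\cpo\s$ with $\cpo\r\le\cpo\s$; this is precisely the $\obs$-completeness condition. Second I would invoke the unique-limit observation: $\xredx{\pd}{}$ is deterministic when surface reduction is the standard head reduction, and has the $\obs$-diamond property when surface reduction is the $\hredb$ of \Cref{sec:lambda}, so in either case \Cref{thm:diamond} yields that $\wLim(M,\xredx{\pd}{})$ is a singleton. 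Both hypotheses of \Cref{thm:main} now hold, so $\sem M$ is defined and every $\xredx{\pd}{}$-sequence from $M$ converges to $\sem M$; by \Cref{rem:Anormalizing}, $\xredx{\pd}{}$ is an $\obs$-normalizing multistep strategy for $\redb$.

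It then remains to show $\sem M=\BT M$. I would first argue $\sem M=\sup\wLim(M,\redb)$: the limit $\sem M$ belongs to $\wLim(M,\redb)$, because any $\xredx{\pd}{}$-sequence refines to a $\redb$-sequence passing through the same intermediate terms, which form a cofinal set, and $\obs$ is monotone along $\redb$, so the two $\obs$-limits coincide; and by asymptotic completeness $\sem M$ is an upper bound of $\wLim(M,\redb)$, hence its greatest element. Next I would check $\sup\wLim(M,\redb)=\BT M$. By iterating \Cref{lem:basics}, $M\redb^* M_n$ forces $\pnf M\le\cdots\le\pnf{M_n}$, so every $\redb$-limit $\bigcup_n\downarrow\{\pnf{M_n}\}$ is included in $\BT M=\bigcup_{M\redb^* N}\downarrow\{\pnf N\}$; thus $\BT M$ is an upper bound of $\wLim(M,\redb)$. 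Conversely, for each reduct $N$ of $M$ I would prolong $M\redb^* N$ to a maximal $\redb$-sequence, whose limit $\cpo p$ lies in $\wLim(M,\redb)$ and satisfies $\downarrow\{\pnf N\}\le\cpo p$; taking the union over all such $N$ gives $\BT M\le\sup\wLim(M,\redb)$. The two inclusions give $\BT M=\sup\wLim(M,\redb)=\sem M$, and hence $M\xredx{\pd}{}\conv{}\BT M$.

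The main difficulty is not concentrated in any single hard lemma --- the factorization and neutrality work is already discharged in \Cref{prop:Oasymptotic_BT} and \Cref{lem:neutral_BT} --- but in carefully plugging the abstract theorem into the concrete $\omega$-cpo $\PNF^\infty$ of partial normal forms. The two bookkeeping points that need care are exactly those used in the last paragraph: (i) that a $\xredx{\pd}{}$-sequence genuinely refines to a $\redb$-sequence with a cofinal set of intermediate terms, which requires handling the reflexivity-on-normal-forms convention of $\xredx{\pd}{}$; and (ii) that $\BT M$ is the least upper bound of $\wLim(M,\redb)$, not merely an upper bound, which is what lets us conclude equality rather than just $\sem M\le\BT M$.
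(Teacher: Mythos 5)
Your proposal is correct and follows essentially the same route as the paper: asymptotic completeness of $\xredx{\pd}{}$ via the two points of \Cref{prop:Oasymptotic_BT}, uniqueness of the limit via determinism/$\obs$-diamond, and then \Cref{thm:main} together with the identification of the greatest limit with $\BT M$ as the sup of $\wLim(M,\redb)$. The only difference is that you spell out in detail the final equality $\sem M = \sup\wLim(M,\redb) = \BT M$ (including the refinement of a multi-step $\xredx{\pd}{}$-sequence into a $\redb$-sequence), which the paper dispatches with ``$\BT M$ is clearly the sup of the set $\wLim(M,\redb)$''.
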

		
			%

\end{document}